\def\tr{^{\sf T}}
\def\nullcol{\mathrm{null}}
\def\spancol{\mathrm{span}}
\def\rankcol{\mathrm{rank}}
\newcommand{\inv}{^{-1}}
\newcommand{\mc}{\mathcal}
\newcommand\bbar[1]{\bar{\bar{#1}}}
\newtheorem{theorem}{Theorem}
\newtheorem{corollary}{Corollary}
\newtheorem{proposition}{Proposition}
\newtheorem{definition}{Definition}
\newtheorem{remark}{Remark}
\newtheorem{example}{Example}
\DeclareMathOperator*{\minimize}{minimize~}
\DeclareMathOperator*{\maximize}{maximize~}
\DeclareMathOperator*{\st}{subject\,to~}
\newcommand{\calQ}{\mathcal{Q}}
\newcommand{\R}{\mathbb{R}}
\newcommand{\dhdq}{\dfrac{\partial h}{\partial q}}
\newcommand{\dhdqsmall}{\frac{\partial h}{\partial q}}
\newcommand{\dhdqbar}{\dfrac{\partial \bar{h}}{\partial q}}
\newcommand{\dhdqbarsmall}{\frac{\partial \bar{h}}{\partial q}}
\definecolor{todocolor}{RGB}{50,200,50}
\begin{document}

\title{Extended Set-based Tasks\\for Multi-task Execution and Prioritization}

\author{Gennaro Notomista$^{1}$, Mario Selvaggio$^{2}$, Francesca Pagano$^{2}$,\\ Mar\'{i}a Santos$^{3}$, Siddharth Mayya$^{4}$, Vincenzo Lippiello$^{2}$, and Cristian Secchi$^{5}$%
\thanks{This work has been submitted to the IEEE for possible publication. Copyright may be transferred without notice, after which this version may no longer be accessible.}%
\thanks{${}^{1}$Department of Electrical and Computer Engineering, University of Waterloo, Waterloo, ON, Canada}%
\thanks{${}^{2}$PRISMA Lab, Department of Electrical Engineering and Information Technology, University of Naples Federico II, Napoli, Italy}%
\thanks{${}^{3}$H Company, Paris, France. This work was conducted while the author was with the Department of Mechanical and Aerospace Engineering, Princeton University, Princeton, NJ 08544, USA.}%
\thanks{${}^{4}$Aquatic Labs, Cambridge, MA 02139, USA. This work is not related to Aquatic Labs.}%
\thanks{${}^{5}$Department of Science and Methods of Engineering, University of Modena and Reggio Emilia, Modena, Italy}}%

\maketitle

\begin{abstract}
The ability of executing multiple tasks simultaneously is an important feature of redundant robotic systems. As a matter of fact, complex behaviors can often be obtained as a result of the execution of several tasks. Moreover, in safety-critical applications, tasks designed to ensure the safety of the robot and its surroundings have to be executed along with other nominal tasks. In such cases, it is also important to prioritize the former over the latter. In this paper, we formalize the definition of extended set-based tasks, i.e., tasks which can be executed by rendering subsets of the task space asymptotically stable or forward invariant using control barrier functions. We propose a formal mathematical representation of such tasks that allows for the execution of more complex and time-varying prioritized stacks of tasks using kinematic and dynamic robot models alike. We present an optimization-based framework which is computationally efficient, accounts for input bounds, and allows for the stable execution of time-varying prioritized stacks of extended set-based tasks. The proposed framework is validated using extensive simulations, quantitative comparisons to the state-of-the-art hierarchical quadratic programming, and experiments with robotic manipulators.
\end{abstract}

\section{Introduction}
\label{sec:intro}

Safe and effective operation of robotic systems requires the simultaneous execution of multiple tasks. This concurrence, which stems from both application and safety needs, can be achieved thanks to the \emph{redundancy} of a robotic system, which affords the execution of a task in multiple ways. In the case of a robot manipulator arm, for instance, kinematic redundancy consists in having more degrees of freedom (DOFs) than the ones strictly required to execute the task. This implies the existence of different joint velocities (or configurations) that result in the same end-effector velocity (or configuration). For a multi-robot system, the robotic units of which it is comprised and their interchangeability translate into an inherent redundancy of the system. For an extended discussion on redundancy see, for example,~\cite{milutinovic2013redundancy,spong2006robot,siciliano2010robotics}.

During the operation of the robotic system, some tasks may take precedence over others. Typically, preserving the safety of the system and its surroundings is more critical than satisfying application-specific objectives. This hierarchy is established in the so-called \textit{task stack}, used by the robot controller to ensure the prioritized execution of tasks. This way, while high-priority tasks are being executed, the redundancy of the system is exploited to achieve, if possible, lower-priority ones. Moreover, the task hierarchy need not be static. For instance, this can be needed to accommodate the evolution of the objectives dictated by the environment/application, or due to exogenous factors (such as the supervisory control of a human operator). In any case, the robot controller must be able to react to these changes.

The execution of prioritized stacks of tasks by exploiting the redundancy of the system has been proposed in~\cite{siciliano1991} and extensively studied since then (see, e.g.,~\cite{baerlocher1998task, baerlocher2004inverse, escande2014hierarchical}, just to name a few). In these works, a task is associated to a Jacobian matrix, which relates the velocities of the robot joints to the velocities of the operational point of the robot kinematic structure required to execute the task. While the resulting formulation is rigorous, starting from the definition of a Jacobian matrix in order to specify a task is not always intuitive. In~\cite{notomista2020set}, extended set-based (ESB) tasks---a more expressive description of tasks which leverages set stability and safety as building blocks to define robotic tasks---are introduced and an optimization-based approach for their prioritized execution is proposed.

In this paper, we perform an extensive analysis of the (possibly time-varying) prioritized execution of ESB tasks. Moreover, we propose a novel optimization-based approach for the task execution and apply it to both kinematic and dynamic robot models, allowing torque control of robotic manipulators, and accounting for input constraints, such as torque saturation. The stability analysis of our framework is carried out considering time-invariant Jacobian and set-based tasks throughout the paper. The derivation of (uniform) asymptotic stability for time-varying tasks, such as those involving trajectory tracking, is outside the scope of this paper.

The paper is organized as follows. The remainder of this section is devoted to the comparison of the approach proposed in this paper with existing methods and algorithms for the prioritized execution of multiple robotic tasks. Section~\ref{sec:tasks} presents the concept of ESB tasks, including the required mathematical background, and analyzes the inter-task relationships, generalizing the concepts of dependent, independent, and orthogonal tasks. In Section~\ref{sec:priorities}, we describe the approach proposed in this paper for the prioritized execution of multiple tasks. Section~\ref{sec:switch} deals with time-varying task priorities, while Section~\ref{sec:dynamics} shows how the proposed approach can be employed for dynamic robot models which include joint torque saturation. In Section~\ref{sec:simexp}, the results of extensive simulations and experiments with real robotic platforms are reported along with a comparison to the state-of-the-art hierarchical quadratic programming approach. Section~\ref{sec:conclusions} concludes the paper.

\subsection{Related Work}

The name \textit{task function}, which we adopt in this paper, has appeared in \cite{samson1987approche,samson1990application,samson1991robot}, where an approach to represent robotic tasks based on mappings from the configuration space to the task space is introduced. This seminal work laid the foundations for several methods developed in the past years for robot control which are based on such task description (see, e.g., \cite{fiore2023convergence,kermorgant2013dealing,ogren2001control,MarchandIROS96}, to cite a few). The approach proposed in \cite{samson1991robot} encodes the task execution via the stability of a linear differential equation describing the behavior of the task function. The notion of ESB task functions---defined in \cite{notomista2020set}---we adopt in this paper differs from the line of work based on~\cite{samson1991robot} for the following main reasons: (i) ESB tasks provide a systematic approach to encode not only stability-like but also safety-like tasks using control barrier functions; (ii) The execution of ESB tasks is achieved via differential inequality, rather than equalities, which gives a larger flexibility in the selection of robot control inputs required to execute a task. In addition, we present an extensive analysis of multi-ESB-task prioritization, including a result on the stability of time-varying task priorities, and we design an optimization-based approach that is able to encompass both kinematic (i.e., velocity-controlled) and dynamic (i.e., torque-controlled) robot models.

Robotic systems that possess more DOFs than the ones strictly required to execute a given task are defined as redundant. In this sense, redundancy is not an inherent property of the robotic system itself rather it is related to the dimension of the task space (e.g., equal to 2 for a planar positioning task) being lower than the dimension of the configuration space of the robot (e.g. equal to the number of joints of a manipulator arm). For instance, for robot manipulators a task typically consists in following an end-effector motion trajectory requiring 6 DOFs, thus a 7-joint robotic arm is usually used as an example of an inherently redundant manipulator. Thus, when this condition is met, the additional DOFs can be conveniently exploited for the optimization of some performance objective, or for the simultaneous execution of other tasks besides the main one~\cite{siciliano2010robotics}. In this respect, the redundant DOFs can be exploited to allow a more flexible and adaptive execution of tasks in constrained environments that require avoiding obstacles, for instance. Robotic systems like humanoids, quadruped robots, or mobile manipulators are examples of robotic systems equipped with a large number of DOFs to fulfill these needs.

Typically, one might want to assign different priorities to tasks or objectives, thus requiring control algorithms capable of achieving them while respecting a hierarchical structure~\cite{nakamura1987task}. Methods for redundancy resolution with prioritized tasks have been thoroughly developed in the last decades. At the differential kinematic level, as an infinite number of joint velocities exist that realize a given task velocity, a criterion must be utilized to select one of them. The use of the (right) pseudoinverse of the Jacobian matrix guarantees the exact reconstruction of the task velocity with the minimum-norm joint velocities. Most methods calculate joint velocities as the general solution of an undetermined linear system, i.e., using the generalized inverse of the task Jacobian plus a velocity vector lying in its null-space. Building upon this concept, a general framework for managing multiple tasks in highly redundant robotic systems, exploiting their kinematic redundancies is presented in~\cite{siciliano1991}.

By exploiting the nullspace of the Jacobian matrix, one can find additional velocity control inputs to execute secondary tasks without interfering with the primary/critical/main objectives. Several works have focused on determining these additional inputs, starting for example from the gradient of a scalar function (representing the secondary objective) and projecting it onto the null-space of the Jacobian not to affect the primary task~\cite{MarchandIROS96}. In this way, tasks are effectively prioritized: the primary task is always fulfilled, while the secondary is accomplished to the extent allowed by the execution of the primary task. This idea was used in~\cite{ChiaveriniTRA1997} to keep the joint angles within their physical limits, avoiding obstacles and singular configurations.

As mentioned in the previous section, this approach is rigorous and it has been successfully employed for different robotic systems, including robotic arms, legged robots, and multi-robot systems. More recently, however, different paradigms have been proposed in order to achieve the concurrent execution of multiple tasks, including the null-space-based behavioral control~\cite{antonelli2009tcns} and the set-based tasks, i.e., tasks with a range of valid values~\cite{moe2016set}, where the authors give the conditions under which the concurrent execution of multiple tasks is guaranteed. Moreover, numerical optimization-based approaches started to be used for prioritized task executions~\cite{escande2014hierarchical, kanoun2011kinematic}. 

Approaches to enforce a hierarchy among the executed tasks can be then categorized into \emph{strict} and \emph{soft}. Strict hierarchies guarantee that the execution of a task at lower priority does not influence the execution of a task at higher priority. This, however, can easily lead to discontinuities of the robot controller during the switch between stacks of tasks~\cite{antonelli2009tcns, keith2011analysis, simetti2016novel}. The Hierarchical Quadratic Programming (HQP) approach presented in~\cite{kim2019continuous} allows for switching stacks, insertion and removal of tasks, and it is amenable to torque control. The soft task hierarchy enforcement, instead, can guarantee continuous transitions between stacks of tasks, at the expense of letting tasks at lower priorities influence tasks at higher priorities~\cite{SilverioTRO2019}.

In this paper, we opt for a soft task hierarchy approach, since we aim at designing a controller that allows for smooth, stable, and efficient transitions between stacks of tasks. Moreover, we give the conditions under which the influence of low-priority tasks on the execution of tasks at higher priority is not existent.

Set-based tasks mentioned earlier were introduced in~\cite{moe2016set} in order to consider unilateral constraints. This has been shown to be useful for robot manipulators in order to encode both operational and joint space tasks and constraints~\cite{di2018safety}. Inequality constraints are usually difficult to be directly dealt with in analytical approaches. Therefore, solutions that resort to numerical optimization methods which are more or less efficiently solvable in online settings have been proposed in~\cite{Mansard2009,Liu2016,basso2020task}, and in~\cite{moe2015stability}, where the authors also prove the stability of the simultaneous execution of several tasks. When using optimization-based approaches one can exploit task specification language to easily translate task execution and prioritization in constraints~\cite{aertbelien_etasletc_2014}. Such constraints can as well be derived from geometric inter-relations between the robot and its environment~\cite{somani_task_2016}. 

The work proposed in this paper also leverages optimization techniques to synthesize the robot controller. In particular, the execution of multiple prioritized tasks is formulated as a single convex quadratic program. Although similar in nature to~\cite{aertbelien_etasletc_2014}, our proposed approach, hence, requires fewer parameters to tune---these being only the weights of the cost of a convex optimization program.

\subsection{Contributions}
The main contributions of this paper compared to the state of the art are summarized as follows:
\begin{enumerate}[label=(\roman*)]
	\item We show that ESB tasks are an extension of Jacobian-based tasks, by deriving the conditions for dependent, independent, and orthogonal ESB tasks
	\item The stability analysis of the prioritized execution of stack of ESB tasks is carried out and shown to be independent of the choice of specific gains
	\item The proposed method is shown to be applicable to the dynamic nature of robotic systems with input torque bounds
	\item A systematic way to design prioritized stacks of tasks using prioritization matrices is devised
	\item We prove the stability of the robot behavior while switching between two distinct stacks of tasks
\end{enumerate}
The framework proposed in this paper is compared with the state-of-the-art hierarchical quadratic programming approach for task execution and prioritization, highlighting the main differences both in terms of performance and computational complexity.

\section{Extended Set-based Task Execution}
\label{sec:tasks}

In this section, standard Jacobian-based tasks are briefly introduced. The concept of ESB tasks is then discussed and an optimization-based framework to execute multiple such tasks simultaneously is presented. Moreover, the characterizing conditions for the various types of inter-task relationships (orthogonality, independence, and dependence) are derived for this new class of tasks. \par

Consider a $N$-DOF manipulator and let $q\in\mc Q$ denote the joint configuration, an element of the $N$-dimensional configuration space $\mc Q$, and $\dot q \in\R^N$, the joint velocity vector. Let $m$ denote the dimension of the task space $\mc T$. We start by considering robots that can be modeled as kinematic systems, i.e., that are endowed with a low-level high-gain controller that takes care of reproducing the desired velocity inputs. Let $\sigma \in \mc T $ denote the task variable to be controlled, defined as:
\begin{equation}
	\label{eq:sigma}
	\sigma = k (q),
\end{equation}
where the smooth map $k:\mc Q\rightarrow \mc T$ represents the task forward kinematics. Equation \eqref{eq:sigma} is a way of defining robotic tasks extensively used in the previous literature, including, but not limited to, \cite{samson1991robot,khatib1986real}. The novelty of this paper consists in the multi-task execution and prioritization framework presented in the following section. From \eqref{eq:sigma}, it follows that
\begin{equation}
	\label{eq:robotkinmodel}
	\dot \sigma = J(q) \dot q,
\end{equation}
where $J\colon \mc Q\to\R^{m\times N}$ is the task Jacobian defined as $J(q)=\frac{\partial k}{\partial q}(q)$. The tasks consisting in executing the joint velocities $\dot q$ satisfying the relation in \eqref{eq:robotkinmodel} to make $\sigma(t)$ equal to a desired task variable trajectory $\sigma_d(t)$ will be referred to as \emph{Jacobian-based tasks}. A Jacobian-based task is accomplished when the task variable $\sigma(t)$ tracks the desired $\sigma_d(t)$.

As discussed in Section~\ref{sec:intro}, there is often a need to concurrently execute multiple Jacobian-based tasks. Towards this end, let $\sigma_i=k_i(q) \in\R^{m_i}$---the task space is an $m_i$-dimensional Euclidean space---be the task variables associated to $M$ tasks, where $i=\{1,\dots ,M\}$. In the following, we will use $m$ to denote the sum  of dimensions of task variables, $m = \sum_{i=1}^M m_i$. The evolution of each task variable can then be represented as $\dot \sigma_i = J_i(q) \dot q$, where $J_i(q)=\frac{\partial k_i}{\partial q}(q)$ is the $i$-th task Jacobian. For more details on Jacobian-based tasks, we address the interested reader to, e.g.,~\cite{antonelli2009tro} and references therein. Next, we introduce the concept of ESB tasks.

\subsection{Extended Set-based Tasks}
\label{subsec:tasks}

When considering Jacobian-based tasks, the main approach for executing a task is to drive the task variable towards a desired value (see~\cite{antonelli2009tro}). Nevertheless, this approach defines tasks based on the velocities of the task variable, $\dot\sigma$. It is often more convenient to define tasks that can be accomplished by making the task variable approach a desired set of the task space (stability) or constraining the task variable to remain within a given set (safety). While the Jacobian-based tasks do not lend themselves to this type of task definition, the ESB tasks used in this paper do.

Set-based tasks have been introduced in~\cite{escande2014hierarchical} for representing tasks that can be executed by letting the task variables remain in a desired area of satisfaction. Some examples of set-based tasks are joint limits or singularity avoidance, where the area of satisfaction is not a specific configuration but rather the set of joint values that are not including joint-limit values or singularity configurations, respectively.

There is an analogy between the set-based interpretation of tasks and the concept of \emph{forward invariance} in the dynamical systems literature~\cite{khalil2015nonlinear}. In fact, executing a set-based task is equivalent to enforcing the state of the robot to remain in a desired set, i.e., making the set forward invariant. Exploiting this analogy, it is possible to extend the definition of set-based tasks in order to include the possibility of executing a task evolving towards a set, which corresponds to the concept of set-stability.

\begin{definition}[ESB Task~\cite{notomista2020set}]\label{def:esbt}
	An \textit{ESB task} is a task characterized by a set $\mc C \subset \mc T$, where $\mc T$ is the task space, which can be expressed as the zero superlevel set of a continuously differentiable function $h \colon \mc T \times \R_{\ge0} \to \R$ as follows:
	\begin{equation}
		\label{eq:defSafeset}
		\mc C = \{ \sigma \in \mc T \colon h(\sigma,t)\ge0 \}.
	\end{equation}
	The goal is rendering the time-varying set forward invariant and asymptotically stable.
\end{definition}

ESB tasks generalize Jacobian-based ones as the set $\mc C$ in~\eqref{eq:defSafeset} can be defined starting from $k(q)$ in \eqref{eq:sigma}, as partially illustrated in \cite{notomista2020set}. In the latter, however, the authors did not analyze the inter-ESB-task relationships. In Section~\ref{subsec:intertaskrel}, we will analyze these relationships and show how they extend the ones between Jacobian-based tasks.

Given a task $\sigma = k(q)$, the following control affine system can be employed to define the evolution of the robot as well as the task:
\begin{equation}
    \label{eq:ca-dyn}
	\begin{cases}
		\dot x = f(x) + g(x)u\\
		\sigma = k(x),
	\end{cases}
\end{equation}
where, as before, the smooth map $k:\mc Q\rightarrow \mc T$ represents the task forward kinematics\footnote{When the task space coincides with the joint space ($\mc T = \mc Q)$, $k$ is the identity function.}. This control affine form encompasses the kinematic robot model \eqref{eq:robotkinmodel} by setting $x = q$, $f = 0$, $g = I$ and $u = \dot q$. The standard dynamic model of a robot manipulator is also affine in the control input (joint torques), and can be therefore represented by \eqref{eq:ca-dyn} (see Section~\ref{sec:dynamics}).

Starting from the methodology for task execution proposed in \cite{notomista2018constraint}, we now develop a strategy for the execution of ESB tasks. According to Definition~\ref{def:esbt}, in order to accomplish an ESB task, it is necessary to control the robot for making the satisfaction set $\mc C$ attractive and forward invariant. As shown in~\cite{notomista2018constraint,notomista2019optimal}, this behavior can be formulated as a constrained optimization problem where the control input is chosen to enforce the task satisfaction. This desired behavior can be formalized using  Control Barrier Functions (CBFs) that render the desired set forward invariant and stable (see~\cite{ames2019control}). These properties are briefly recalled in the following.

\begin{definition}[Output Time-Varying CBF~\cite{ames2019control, notomista2019persistification}]
	\label{def:cbf}
	Let $\mc C \subset \mc D \subset \mc T$ be the superlevel set of a continuously differentiable function $h: \mc D \times \R_{\ge0} \to \R$, contained in a domain $\mc D$. Then, $h$ is an output time-varying CBF---in the following referred to  simply as CBF---if there exists a Lipschitz continuous extended class $\mc K_\infty$ function
	$\gamma$ \cite{ames2019control} such that for the control system~\eqref{eq:ca-dyn}, for all $\sigma \in \mc D$,	
	\begin{align}
		\label{eq:cbfDefinition}
		\sup_{u \in \mc U}  \left\{ \frac{\partial h}{\partial t} + \frac{\partial h}{\partial \sigma} \frac{\partial \sigma}{\partial x} f(x) + \frac{\partial h}{\partial \sigma} \frac{\partial \sigma}{\partial x} g(x) u \right\} \geq - \gamma(h(\sigma, t)).
	\end{align}
\end{definition}
Exploiting this definition of CBFs, it is possible to state the following result.

\begin{theorem}[Based on \cite{ames2019control} and \cite{notomista2019persistification}]
	\label{thm:cbf}
	Let $\mc C \subset \mc T$ be a set defined as the superlevel set of a continuously differentiable function $h: \mc D \times \R_{\ge0} \to \R$.	If $h$ is a CBF on $\mc D$ according to Definition~\ref{def:cbf}, then any Lipschitz continuous controller $u(x,t)\in\mc V(x,t)$, where $\mc V(x,t) = \{ u(x,t)\colon \frac{\partial h}{\partial t} + \frac{\partial h}{\partial \sigma} \frac{\partial \sigma}{\partial x} f(x) + \frac{\partial h}{\partial \sigma} \frac{\partial \sigma}{\partial x} g(x) u + \gamma(h(\sigma, t)) \ge0 \}$, for the system~\eqref{eq:ca-dyn} renders the set $\mc C$ forward invariant. Additionally, the set $\mc C$ is asymptotically stable in $\mc D$.
\end{theorem}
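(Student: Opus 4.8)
The plan is to reduce both claims to a scalar differential inequality along the closed-loop trajectories and then invoke the Comparison Lemma (\cite{khalil2015nonlinear}). Fix any Lipschitz continuous controller $u(x,t)\in\mc V(x,t)$ and substitute it into \eqref{eq:ca-dyn}; since $u$ is Lipschitz and $f,g,k$ are smooth, the closed-loop vector field is Lipschitz, so solutions exist and are unique, which is what makes the trajectory-wise argument well posed. Along such a trajectory $x(\cdot)$, I would track the scalar quantity $\eta(t)=h(k(x(t)),t)$, whose sign records membership in $\mc C$.

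First I would compute the total time derivative of $\eta$. By the chain rule,
\[
\dot\eta(t)=\frac{\partial h}{\partial t}+\frac{\partial h}{\partial \sigma}\frac{\partial \sigma}{\partial x}\bigl(f(x)+g(x)u\bigr),
\]
with all quantities evaluated along the trajectory. The defining property of the admissible set $\mc V(x,t)$ is precisely that this expression is bounded below by $-\gamma(\eta)$, so every admissible controller yields the differential inequality $\dot\eta\ge-\gamma(\eta)$. The time-varying nature of the task needs no separate treatment here, since the explicit partial $\partial h/\partial t$ is already absorbed into $\dot\eta$.

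For forward invariance, I would compare $\eta$ with the solution $y$ of the autonomous scalar system $\dot y=-\gamma(y)$, $y(0)=\eta(0)\ge0$. Because $\gamma$ is an extended class $\mc K_\infty$ function it satisfies $\gamma(0)=0$, so $y\equiv0$ is an equilibrium; by uniqueness (guaranteed by the Lipschitz continuity of $\gamma$) no solution starting at $y(0)\ge0$ can cross it, hence $y(t)\ge0$ for all $t$. The Comparison Lemma then gives $\eta(t)\ge y(t)\ge0$, i.e. $h(\sigma(t),t)\ge0$, so the trajectory remains in $\mc C$.

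For asymptotic stability in $\mc D$, I would apply the same comparison to a trajectory starting with $\eta(0)<0$ (inside $\mc D$ but outside $\mc C$). On the negative branch $\gamma(y)<0$, so $\dot y=-\gamma(y)>0$ and $y$ increases monotonically; being bounded above by the equilibrium at the origin, a standard monotonicity argument forces $y(t)\to0^-$. The bound $\eta(t)\ge y(t)$ then squeezes the constraint violation $-\eta(t)$ below $-y(t)\to0^+$, driving the trajectory toward the boundary of $\mc C$, and together with the forward-invariance estimate (which keeps $\eta$ nonnegative once it reaches zero) this establishes asymptotic stability of $\mc C$ relative to $\mc D$. The main obstacle I anticipate lies in this second part rather than in the invariance claim: one must ensure the trajectory remains in the domain $\mc D$ on which $h$ qualifies as a CBF while $\eta$ climbs back toward zero, and must turn the one-sided comparison bound into genuine convergence to the set rather than a mere nonnegative $\liminf$, which is exactly where the extended class-$\mc K_\infty$ structure of $\gamma$ on the negative half-line becomes essential.
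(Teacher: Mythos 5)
The paper does not actually prove Theorem~\ref{thm:cbf}: it states the result as imported from \cite{ames2019control} and \cite{notomista2019persistification} and gives no proof of its own. Your comparison-lemma argument is the standard proof from exactly those sources (reduce to the scalar inequality $\dot\eta\ge-\gamma(\eta)$, compare with $\dot y=-\gamma(y)$, use $\gamma(0)=0$ and Lipschitz uniqueness for invariance, and the sign of $\gamma$ on the negative half-line for attractivity), and it is correct; the caveat you flag about the trajectory remaining in $\mc D$ while $h$ climbs back to zero is a real hypothesis that the CBF literature typically leaves implicit, so you are right to name it rather than claim it follows for free.
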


Definition~\ref{def:cbf} and Theorem~\ref{thm:cbf} show how CBFs can ensure the forward invariance and the asymptotic stability of a desired subset of the task space $\mc T$. One possible control input for executing a desired task can be found by solving the following convex Quadratic Program (QP):
\begin{equation}
	\begin{aligned}
        \label{eqn_const_opt}
		\minimize_{u} &\|u\|^2 \\
		\st & \frac{\partial h}{\partial t} + \frac{\partial h}{\partial \sigma} \frac{\partial \sigma}{\partial x} f(x) + \frac{\partial h}{\partial \sigma} \frac{\partial \sigma}{\partial x} g(x) u \\
		&+ \gamma(h(\sigma,t)) \ge0,
	\end{aligned}
\end{equation}
where $h(\sigma,t)$ is the CBF representing the  set $\mc C$ as shown in Definition~\ref{def:esbt}.

The formulation in~\eqref{eqn_const_opt} can be extended to model the execution of multiple tasks concurrently. Each task is specified as a constraint in the optimization problem. Formally, let $T_1, \dots, T_M $ be the tasks to be executed, with $\sigma_i$ denoting the task variable corresponding to task $T_i$. The execution of task $T_i$ is encoded by the CBF $h_i,~i\in\{1,\dots , M\}$. The execution of the set of $M$ tasks can be realized through the solution to the following optimization problem:
\begin{equation}
	\label{eq:mainQP}
	\begin{aligned}
		\minimize_{u,\delta} &\|u\|^2 + l\|\delta\|^2 \\
		\st & \frac{\partial h_i}{\partial t} + \frac{\partial h_i}{\partial \sigma_i} \frac{\partial \sigma_i}{\partial x} f(x) + \frac{\partial h_i}{\partial \sigma_i} \frac{\partial \sigma_i}{\partial x} g(x) u \\
		&+ \gamma_i(h_i(\sigma_i,t)) \ge - \delta_{i}\quad\forall i \in \{1,\ldots,M\},
	\end{aligned}
\end{equation}
where $l>0$, $\gamma_i~\forall i \in \{1,\ldots,M\}$ are differentiable extended class $\mc K_\infty$ functions. Notice how each task is associated with a scalar slack variable $\delta_i$, with $\delta = [\delta_1, \ldots, \delta_M]\tr$. This allows us, first of all, to ensure the feasibility of the optimization program~\eqref{eq:mainQP}. In fact, consider the case when the $M$ tasks cannot be executed concurrently, i.e., the half-spaces defined by the affine inequality constraints in~\eqref{eq:mainQP} without $\delta_i$ do not intersect. If slack variables are removed from the constraints, then the optimization program may become infeasible. By employing slack variables to relax task constraints and adding the term $l\|\delta\|^2$ in the cost of~\eqref{eq:mainQP}, the control input solution of the optimization program results in the execution of multiple tasks to the best of the capabilities of a robotic system.

In the following section, we analyze the relationship between tasks that characterizes the feasibility of their concurrent execution without the need of slack variables. Moreover, as will be shown in Section~\ref{sec:priorities}, slack variables will be leveraged to enforce relative priorities between tasks.

\subsection{Analyzing Inter-task Relationships}
\label{subsec:intertaskrel}

In order to evaluate if a set of tasks can be executed concurrently, it is necessary to characterize the interdependence properties of the tasks, intended as in Definition~\ref{def:esbt}, i.e., ESB tasks. In this section, the concepts of orthogonality, independence, and dependence---commonly used in the literature for describing the relationships between Jacobian-based tasks---are extended to ESB tasks. In the following, we first recall these concepts for Jacobian-based tasks.

Consider two Jacobian-based tasks encoded by $\sigma_i(q) \in \R^{m_i}$, $\sigma_j(q) \in \R^{m_j}$, whose velocities are expressed by the mappings $\dot{\sigma}_i(q) = J_i(q)\dot{q}$, $\dot{\sigma}_j(q) = J_j(q)\dot{q}$, with $J_i\neq0$ and $J_j\neq0$. The following definitions are used to recall the three distinct types of inter-task relationships discussed in the literature.

\begin{definition}[Jacobian-based task orthogonality~\cite{antonelli2009tro}]\label{def:orthogonal_J_tasks}
	Two Jacobian-based tasks $\sigma_i(q)$ and $\sigma_j(q)$ are orthogonal (or annihilating) if $J_i(q)J_j(q)\tr = 0_{m_i \times m_j},~\forall q \in \calQ$,
	where $0_{m_i \times m_j}$ is the $m_i \times m_j$ null matrix.
\end{definition}

\begin{definition}[Jacobian-based task independence~\cite{antonelli2009tro}]\label{def:independent_J_tasks} 
	Two Jacobian-based tasks $\sigma_i(q)$ and $\sigma_j(q)$ are independent if $\rankcol(J_i\tr(q)) + \rankcol(J_j\tr(q)) = \rankcol\left(\begin{bmatrix}J_i\tr(q) & J_j\tr(q)\end{bmatrix}\right),~\forall q \in \calQ$.
\end{definition}

\begin{definition}[Jacobian-based task dependence~\cite{antonelli2009tro}]\label{def:dependent_J_tasks}
	Two Jacobian-based tasks $\sigma_i(q)$ and $\sigma_j(q)$ are dependent if $\exists~q \in \calQ ~\colon~ \rankcol(J_i(q)\tr) + \rankcol(J_j(q)\tr) > \rankcol\left(\begin{bmatrix}J_i(q)\tr J_j(q)\tr\end{bmatrix}\right).$
\end{definition}

\begin{remark}
	Orthogonal Jacobian-based tasks are independent.
\end{remark}

\begin{remark}
	\label{rmk:pseudoinversetranspose}
	It is worth noting that the three conditions of orthogonality, dependence, and independence may be given by resorting to the pseudoinverse of the corresponding Jacobians instead of the transpose. In fact, they share the same span.
\end{remark}

We now derive the conditions required to characterize the inter-task relationships of orthogonality, dependence, and independence for ESB tasks. In particular, consider two ESB tasks encoded by the CBFs $h_i$ and $h_j$, with $\frac{\partial h_i}{\partial q}\neq0$ and $\frac{\partial h_j}{\partial q}\neq0$.

\begin{definition}[ESB task orthogonality]\label{def:orthogonal_CBF_tasks}
	Two ESB tasks encoded by the CBFs $h_i$ and $h_j$ are orthogonal if 
	\begin{equation}\label{eq:orthogonal_CBF_tasks}
		\frac{\partial h_i}{\partial q}\frac{\partial h_j}{\partial q}\tr = 0, \quad \forall q \in \calQ.
	\end{equation}
\end{definition}

Definition~\ref{def:orthogonal_CBF_tasks} is well posed since it generalizes the concept of orthogonality in Definition~\ref{def:orthogonal_J_tasks}. This is shown by the following result:

\begin{proposition}[Generalization of task orthogonality]
    \label{prop:esbtaskorthogonality}
	If two Jacobian-based tasks $\dot\sigma_i = J_i(q)\dot q$ and $\dot\sigma_j = J_j(q)\dot q$ are orthogonal according to Definition~\ref{def:orthogonal_J_tasks}, the corresponding ESB tasks $h_i$ and $h_j$ constructed as in~\cite{notomista2020set} are orthogonal according to Definition~\ref{def:orthogonal_CBF_tasks}. The converse is not necessarily true.
\end{proposition}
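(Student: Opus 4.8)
The plan is to prove the forward implication by a direct chain-rule computation and then refute the converse with an explicit counterexample. For the forward direction, I would first use the Example~\ref{exmp:fromjacobiantocbf} construction to write the two CBFs as $h_i(\sigma_i,t)=-\tfrac12\|\sigma_i-\sigma_{d,i}(t)\|^2$ and $h_j(\sigma_j,t)=-\tfrac12\|\sigma_j-\sigma_{d,j}(t)\|^2$, with $\sigma_i=k_i(q)$ and $\sigma_j=k_j(q)$. Differentiating and recalling that $\partial\sigma_i/\partial q=J_i(q)$, the chain rule produces the row vectors
\[
\frac{\partial h_i}{\partial q}=-(\sigma_i-\sigma_{d,i})\tr J_i(q),\qquad \frac{\partial h_j}{\partial q}=-(\sigma_j-\sigma_{d,j})\tr J_j(q).
\]
Forming the product that appears in Definition~\ref{def:orthogonal_CBF_tasks} then gives
\[
\frac{\partial h_i}{\partial q}\frac{\partial h_j}{\partial q}\tr=(\sigma_i-\sigma_{d,i})\tr\,J_i(q)J_j(q)\tr\,(\sigma_j-\sigma_{d,j}).
\]
Under the hypothesis of Jacobian-based orthogonality, $J_i(q)J_j(q)\tr=0_{m_i\times m_j}$ for all $q\in\calQ$, the middle factor vanishes, so the whole expression is zero for every $q$ (and every $t$). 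This is precisely the ESB orthogonality condition \eqref{eq:orthogonal_CBF_tasks}, establishing the forward implication.

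For the converse I would exhibit two ESB tasks that are orthogonal in the sense of Definition~\ref{def:orthogonal_CBF_tasks} while their underlying task Jacobians are not orthogonal in the sense of Definition~\ref{def:orthogonal_J_tasks}. The conceptual reason this is possible is that the scalar gradient $\partial h_i/\partial q=(\partial h_i/\partial\sigma_i)\,J_i$ selects only a single linear combination of the rows of $J_i$, whereas $J_iJ_j\tr=0$ demands that \emph{every} row of $J_i$ be orthogonal to \emph{every} row of $J_j$; the latter is strictly stronger. Concretely, I would take $\calQ\subset\R^2$ with task kinematics $k_i(q)=(q_1,q_2)\tr$ and $k_j(q)=(q_2,q_1)\tr$, so that $J_i=I_2$ and $J_j=\bigl[\begin{smallmatrix}0&1\\1&0\end{smallmatrix}\bigr]$, and then pick ESB tasks whose task-space gradient is a fixed coordinate direction, e.g.\ $h_i=\sigma_{i,1}$ and $h_j=\sigma_{j,1}$. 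Then $\partial h_i/\partial q=(1,0)$ and $\partial h_j/\partial q=(0,1)$ are orthogonal for all $q$, so the tasks satisfy Definition~\ref{def:orthogonal_CBF_tasks}, yet $J_iJ_j\tr=\bigl[\begin{smallmatrix}0&1\\1&0\end{smallmatrix}\bigr]\neq0$, so the Jacobian-based tasks violate Definition~\ref{def:orthogonal_J_tasks}.

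The step I expect to require the most care is the converse, specifically the choice of CBFs. A naive attempt to reuse the quadratic CBFs of Example~\ref{exmp:fromjacobiantocbf} fails: there the error factors $\sigma_i-\sigma_{d,i}$ and $\sigma_j-\sigma_{d,j}$ sweep out a full range of directions as $q$ varies over $\calQ$, so forcing $(\sigma_i-\sigma_{d,i})\tr J_iJ_j\tr(\sigma_j-\sigma_{d,j})$ to vanish identically essentially forces $J_iJ_j\tr=0$, leaving no counterexample. The resolution, reflected in the construction above, is to use ESB tasks whose task-space gradients are constant (equivalently, to stabilize a single coordinate of a vector task), which decouples the ESB gradient direction from the full row space of the Jacobian and makes Definition~\ref{def:orthogonal_CBF_tasks} genuinely weaker than Definition~\ref{def:orthogonal_J_tasks}. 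This is exactly why the converse fails and why ESB orthogonality is a strictly more permissive notion.
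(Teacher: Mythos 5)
Your forward direction is correct and is essentially the paper's argument: the paper writes $\frac{\partial h_i}{\partial q}\frac{\partial h_j}{\partial q}\tr = \frac{\partial h_i}{\partial \sigma_i}\,J_i(q)J_j(q)\tr\,\frac{\partial h_j}{\partial \sigma_j}\tr$ for generic task-space gradients, and you simply specialize $\partial h_i/\partial\sigma_i$ to $-(\sigma_i-\sigma_{d,i})\tr$; either way the middle factor kills the product.

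Where you genuinely diverge is the converse. The paper disposes of it in one sentence (``whenever $J_i$ or $J_j$ have a non-trivial nullspace, the converse need not be true''), with no explicit counterexample; the relevant mechanism is really that $\partial h_i/\partial q$ picks out a \emph{single} vector in the row space of $J_i$, so orthogonality of two such vectors is one scalar condition versus the $m_i m_j$ conditions in $J_iJ_j\tr=0$. You instead produce a concrete, checkable counterexample --- which is more rigorous --- but note that your CBFs $h_i=\sigma_{i,1}$, $h_j=\sigma_{j,1}$ are \emph{not} of the Example~\ref{exmp:fromjacobiantocbf} form, whereas the proposition's converse, read literally, is about ESB tasks ``constructed as in Example~\ref{exmp:fromjacobiantocbf}.'' Your closing observation that the quadratic construction resists a counterexample is in fact sharper than you state: for constant targets and linear task maps $\sigma_i=K_iq$, the product equals $(K_iq-a)\tr K_iK_j\tr(K_jq-b)$, whose quadratic part is $q\tr K_i\tr K_iK_j\tr K_jq$; this vanishes identically only if the symmetric part of $K_i\tr K_iK_j\tr K_j$ is zero, whence $\trace(K_i\tr K_iK_j\tr K_j)=\|K_iK_j\tr\|_F^2=0$ and $K_iK_j\tr=0$. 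So in that restricted setting the converse actually \emph{holds}, and a counterexample requires either nonlinear kinematics or (as you do) CBFs whose task-space gradient is decoupled from the full error vector. Your counterexample is therefore valid for the definitions as stated (Definition~\ref{def:orthogonal_CBF_tasks} applies to arbitrary CBFs), and it exposes an imprecision in the paper's phrasing of the converse rather than a flaw in your argument; it would be worth stating explicitly that you are interpreting ``the converse'' as the implication from ESB orthogonality of general CBFs back to Jacobian orthogonality.
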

\begin{proof}
	To prove that $J_i(q) J_j(q)\tr = 0 \implies
		\frac{\partial h_i}{\partial q}\frac{\partial h_j}{\partial q}\tr = 0$,
	we expand~\eqref{eq:orthogonal_CBF_tasks} as follows: $\frac{\partial h_i}{\partial q}\frac{\partial h_j}{\partial q}\tr = \frac{\partial h_i}{\partial \sigma_i} J_i(q) J_j(q)\tr \frac{\partial h_j}{\partial \sigma_j}\tr = 0$.
	One can see how, whenever $J_i(q)$ or $J\tr_j(q)$ have a non-trivial nullspace, the converse need not to hold true.
\end{proof}
Proposition~\ref{prop:esbtaskorthogonality} shows how ESB tasks can generalize Jacobian-based tasks. This will be confirmed by the following results on dependence and independence.

\begin{definition}[ESB task independence]\label{def:independent_CBF_tasks}
	Two ESB tasks encoded by the CBFs $h_i$ and $h_j$ are independent if 
	\begin{equation}\label{eq:independent_CBF_tasks}
		\rankcol\left(\begin{bmatrix}\dfrac{\partial h_i}{\partial q}\tr & \dfrac{\partial h_j}{\partial q}\tr\end{bmatrix}\right) = 2, \quad \forall q \in \calQ.
	\end{equation}
\end{definition}

\begin{remark}
	\label{rmk:orthogonal}
	Orthogonal ESB tasks are independent.
\end{remark}

Definition~\ref{def:independent_CBF_tasks} is well posed since it generalizes Definition~\ref{def:independent_J_tasks} as shown by next proposition.

\begin{proposition}[Generalization of task independence]\label{prop:gener_task_indep}
	If two Jacobian-based tasks $\dot\sigma_i = J_i(q)\dot q$ and $\dot\sigma_j = J_j(q)\dot q$ are independent according to Definition~\ref{def:independent_J_tasks}, the corresponding ESB tasks $h_i(\sigma_i(q))$ and $h_j(\sigma_j(q))$ are independent according to Definition~\ref{def:independent_CBF_tasks}). The converse is not necessarily true.
\end{proposition}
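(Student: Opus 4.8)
The plan is to reduce both independence conditions to statements about the ranges (column spaces) of $J_i\tr$ and $J_j\tr$, and then to transport the gradients of the CBFs into these subspaces via the chain rule. First I would record the identity $\frac{\partial h_i}{\partial q}\tr = J_i(q)\tr \frac{\partial h_i}{\partial \sigma_i}\tr$, obtained by differentiating $h_i(\sigma_i(q))$ through $\sigma_i = k_i(q)$. Since $h_i$ is scalar-valued, the left-hand side is an $N\times1$ column vector lying in $\range(J_i\tr)$ (the row space of $J_i$), and analogously $\frac{\partial h_j}{\partial q}\tr \in \range(J_j\tr)$. By the standing assumptions $\frac{\partial h_i}{\partial q}\neq0$ and $\frac{\partial h_j}{\partial q}\neq0$, these are nonzero vectors.

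Next I would reinterpret Definition~\ref{def:independent_J_tasks} using the dimension formula $\rankcol([A\ B]) = \rankcol(A)+\rankcol(B)-\dim(\range(A)\cap\range(B))$ with $A=J_i\tr$ and $B=J_j\tr$. Under this identity the Jacobian-based independence condition is equivalent to $\range(J_i\tr)\cap\range(J_j\tr)=\{0\}$, i.e. the two row spaces form a direct sum. This direct-sum translation of the rank equality is the key step, and I expect it to be the main conceptual obstacle; everything afterward is elementary.

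Then I would establish ESB independence, namely that the $N\times2$ matrix $[\frac{\partial h_i}{\partial q}\tr\ \ \frac{\partial h_j}{\partial q}\tr]$ has rank $2$, by showing its columns are linearly independent. Suppose $a\,\frac{\partial h_i}{\partial q}\tr + b\,\frac{\partial h_j}{\partial q}\tr = 0$, so that $a\,\frac{\partial h_i}{\partial q}\tr = -b\,\frac{\partial h_j}{\partial q}\tr$. The left side lies in $\range(J_i\tr)$ and the right side in $\range(J_j\tr)$, hence both lie in the trivial intersection and therefore vanish. If $a\neq0$ this forces $\frac{\partial h_i}{\partial q}\tr=0$, contradicting the nonvanishing assumption, so $a=0$; symmetrically $b=0$. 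Thus the columns are independent and the ESB tasks are independent in the sense of Definition~\ref{def:independent_CBF_tasks}.

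Finally, for the failure of the converse I would exhibit a counterexample in which the Jacobians are dependent yet the induced gradients are independent. Taking $N=2$ and $J_i=J_j=I_2$ gives $\range(J_i\tr)=\range(J_j\tr)=\R^2$, so $\rankcol(J_i\tr)+\rankcol(J_j\tr)=4>2=\rankcol([J_i\tr\ J_j\tr])$ and the tasks are dependent; but choosing $h_i,h_j$ with $\frac{\partial h_i}{\partial q}=[1\ 0]$ and $\frac{\partial h_j}{\partial q}=[0\ 1]$ makes the stacked matrix rank $2$, so the corresponding ESB tasks are independent. This shows ESB independence is strictly weaker, as claimed. The only delicate point throughout is the direct-sum reading of the rank equality, which drives the separation-of-terms argument; the remainder is the chain rule and basic linear algebra.
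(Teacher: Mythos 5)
Your proof is correct and follows essentially the same route as the paper: both arguments transport the gradients into $\spancol(J_i\tr)$ and $\spancol(J_j\tr)$ via the chain rule and deduce linear independence from the trivial intersection of these subspaces, which is the rank equality of Definition~\ref{def:independent_J_tasks} read through the dimension formula. Your explicit counterexample for the converse (identical full-rank Jacobians with orthogonal CBF gradients) is a concrete instance of the paper's brief remark and is a welcome addition, but it does not change the underlying argument.
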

\begin{proof}
	To prove that $\rankcol(J_i(q)\tr) + \rankcol(J_j(q)\tr) = \rankcol\left(\begin{bmatrix}J_i(q)\tr J_j(q)\tr\end{bmatrix}\right) \implies \rankcol\left(\begin{bmatrix}\frac{\partial h_i}{\partial q}\tr & \frac{\partial h_j}{\partial q}\tr\end{bmatrix}\right) = 2$,
	recall that subspace independence means $\spancol(J_i(q)\tr) \cap \spancol(J_j(q)\tr) = \{0\}$.
	Since $\frac{\partial h_i}{\partial q}\tr = J_i\tr\frac{\partial h_i}{\partial \sigma_i}\tr \in \spancol(J_i(q)\tr)$
	and $\frac{\partial h_j}{\partial q}\tr = J_j\tr\frac{\partial h_j}{\partial \sigma_j}\tr \in \spancol(J_j(q)\tr)$,
	we can claim independence of the vectors $\frac{\partial h_i}{\partial q}\tr$ and $\frac{\partial h_j}{\partial q}\tr$. The converse is not necessarily true whenever $J\tr_i(q)$ or $J\tr_j(q)$ have a non-trivial nullspace.
\end{proof}

Finally, the task dependence concept can be also generalized to ESB tasks.

\begin{definition}[ESB task dependence]\label{def:dependent_CBF_tasks}
	Two ESB tasks encoded by the CBFs $h_i$ and $h_j$ are dependent if
	\begin{equation}
        \label{eq:dependent_CBF_tasks}
		\exists \, q \in \calQ ~\colon~ \rankcol\left(\begin{bmatrix}\dfrac{\partial h_i}{\partial q}\tr & \dfrac{\partial h_j}{\partial q}\tr\end{bmatrix}\right) = 1.
	\end{equation}
\end{definition}

\begin{remark}
	The task orthogonality condition~\eqref{eq:orthogonal_CBF_tasks} corresponds to the geometric condition $\angle\left(\frac{\partial h_i}{\partial q}\tr,\frac{\partial h_j}{\partial q}\tr\right) = \frac{\pi}{2}$.
	The task independence condition~\eqref{eq:independent_CBF_tasks} corresponds to the geometric condition $\angle\left(\frac{\partial h_i}{\partial q}\tr,\frac{\partial h_j}{\partial q}\tr\right) \in \left(0,\pi\right]$,
	while dependence corresponds to $\angle\left(\frac{\partial h_i}{\partial q}\tr,\frac{\partial h_j}{\partial q}\tr\right) = 0$.
\end{remark}

The above presented results for ESB tasks and their connections with Jacobian-based tasks are summarized in Table~\ref{tab:task_interdep}.
\begin{table*}
	\caption{Inter-task Relationships: orthogonality, independence, and dependence for Jacobian-based and ESB tasks}
	\centering
	\begin{tabular}{lcc}
		\hline\hline
		Relationship & Jacobian-based tasks & ESB tasks \\
		\hline\hline
        &&\\[-0.1cm]
		Orthogonality & $J_i J_j\tr = 0_{m_i \times m_j}$ & $\dfrac{\partial h_i}{\partial q}\dfrac{\partial h_j}{\partial q}\tr = 0$ \\ 
        &&\\[-0.1cm]\hline&&\\[-0.1cm]
		Independence & $\rankcol(J_i\tr) + \rankcol(J_j\tr) = \rankcol\left(\begin{bmatrix}J_i\tr & J_j\tr\end{bmatrix}\right)$ & $\rankcol\left(\begin{bmatrix}\dfrac{\partial h_i}{\partial q}\tr & \dfrac{\partial h_j}{\partial q}\tr\end{bmatrix}\right) = 2$ \\
        &&\\[-0.1cm]\hline&&\\[-0.1cm]
		Dependence & $\exists \, q \in \calQ ~\colon~ \rankcol(J_i\tr) + \rankcol(J_j\tr) > \rankcol\left(\begin{bmatrix}J_i\tr J_j\tr\end{bmatrix}\right)$ & $\exists \, q \in \calQ ~\colon~ \rankcol\left(\begin{bmatrix}\dfrac{\partial h_i}{\partial q}\tr \dfrac{\partial h_j}{\partial q}\tr\end{bmatrix}\right) = 1$ \\
        &&\\[-0.1cm]\hline\hline
	\end{tabular}
	\label{tab:task_interdep}
\end{table*}

\section{Prioritized Multi-task Execution}
\label{sec:priorities}
As discussed earlier, to effectively handle the simultaneous execution of tasks which cannot be executed concurrently, we introduced slack variables in the optimization program~\eqref{eq:mainQP} used to synthesize the robot controller. As shown in this section, the introduction of slack variables can be used to enforce prioritization among tasks. This section extends the multi-task execution framework presented in the previous section to introduce prioritized execution and proves the stability properties of such a framework. \par 

\subsection{Prioritized Execution of Extended Set-based Task Stacks}
\label{subsec:prioritization}
Within the multi-task execution framework introduced in~\eqref{eq:mainQP}, a natural way to introduce priorities is to enforce relative constraints among the slack variables corresponding to the different tasks that need to be performed, as proposed in~\cite{notomista2020set}. For example, if the robot were to perform task $T_i$ with the highest priority (often referred to using the partial order notation $T_i\prec T_j$ $\forall j \in \{1,\ldots,M\}, j\neq i$) the additional constraint $\delta_i \leq \delta_j/\kappa$, $\kappa>1$, $\forall j \in \{1,\ldots,M\}, j\neq i$ in~\eqref{eq:mainQP} would imply that task $T_i$ is relaxed to a lesser extent---thus performed with a higher priority---than the other tasks. The relative scale between the functions $h_i$ encoding the tasks should be considered while choosing the value of $\kappa$. \par 

Formally, prioritizations among tasks can be encoded through an additional linear constraint $K\delta \leq 0$ to be enforced in the optimization problem~\eqref{eq:mainQP}. ${K\in\R^{(M-1) \times M}}$ is referred to as the \textit{prioritization matrix} and it encodes the pairwise inequality constraints among the slack variables, thus fully specifying the \textit{prioritization stack} among the tasks. Example~\ref{exmp:prioritizationmatrix} illustrates the use of the prioritization matrix $K$ to define a stack of tasks, in which tasks are ordered according to their priority.

\begin{example}[Prioritization matrix]
	\label{exmp:prioritizationmatrix}
    Consider three tasks encoded by the CBFs $h_i:\mc T_i\times\R_{\ge0}\to\R, \quad\quad i\in\{1,2,3\}$ and the prioritization stack $T_1\prec T_3\prec T_2$---i.e., task $T_1$ has to be executed with priority greater than task $T_3$, which, in turn, has to be executed with priority greater than $T_2$. The two rows of $K$ have to encode the inequalities $\delta_1\le\delta_3/\kappa$ and $\delta_3\le\delta_2/\kappa$. $K$ can be then chosen as follows:
	\begin{equation}
		K = \begin{bmatrix}
			1&0&-1/\kappa\\
			0&-1/\kappa&1
		\end{bmatrix}.
	\end{equation}
\end{example}

For $M$ extended-set based tasks with task variables $\sigma_i$ and whose execution is characterized by the CBFs $h_i$, $i\in\{1,\ldots,M\}$, the execution of the prioritized task stack can then be encoded via the following optimization problem which features the prioritization matrix $K$ defined as above:
\begin{equation}
	\label{eq:mainQPpriorities}
	\begin{aligned}
		\minimize_{u,\delta} &\|u\|^2 + l\|\delta\|^2 \\
		\st & \frac{\partial h_i}{\partial t} + \frac{\partial h_i}{\partial \sigma_i} \frac{\partial \sigma_i}{\partial x} f(x) + \frac{\partial h_i}{\partial \sigma_i} \frac{\partial \sigma_i}{\partial x} g(x) u \\
		&+ \gamma_i(h_i(\sigma_i,t)) \ge - \delta_{i},\quad\forall i \in \{1,\ldots,M\}\\
		&K\delta\le0.
	\end{aligned}
\end{equation}

Example~\ref{exmp:prioritizationmatrix} shows how one can construct a prioritization matrix associated with an ordered stack of tasks. As long as the prioritization matrix is constructed following this procedure, then there are no issues concerning the feasibility of the QP~\eqref{eq:mainQPpriorities}. That is because the prioritization matrix only enforces \textit{relative} constraints between components of the prioritization vector $\delta$; therefore, it is straightforward to see that there always exist large enough components of $\delta$ so that all constraints in \eqref{eq:mainQPpriorities} are satisfied.

Nevertheless, since these constraints represent tasks to be executed by a robot, we are concerned with the quality of execution of these tasks. A large value of $\delta_i$, as a matter of fact, might result in task $i$ not being executed. Therefore, we would like the values of $\delta_i$ to be as small as possible, possibly zero, at least when task $i$ has highest priority. Because of the prioritization constraint $K\delta\le0$, however, the component of $\delta$ corresponding to the task at highest priority is \textit{constrained} by the component of $\delta$ corresponding to the tasks at lower priorities. By increasing the constant $l$ in~\eqref{eq:mainQPpriorities} the desired behavior---tasks at the highest priority being perfectly executed, with a negligible component of $\delta$ compared to the value of the CBF encoding the task---can be achieved. \par

\begin{remark}[Feasibility of the prioritization stack]
\label{rmk:stackfeasibility}
Although~\eqref{eq:mainQPpriorities} is always feasible, the execution of the optimal control input $u^*$, might lead to the robot not converging to a constant desired configuration, since the ratio between the components of $\delta$, prescribed by the constraint $K\delta\le0$, might be unrealizable when $\dot h(x,u^*) = 0$. In the following section, we present an alternative mechanism to solve this problem based on the relaxation of the prioritization constraint, which overcomes the limitation of the approach in~\cite{notomista2020set}.
\end{remark}

\subsection{Automatic Prioritization Stack}
\label{subsec:prioritizationmatrixAUTO}

Assume that $M$ tasks are ordered by priority, so that task $T_i$ has higher priority than task $T_{i+1}$, and consider the following optimization program:
\begin{equation}
    \label{eq:mainQPprioritiesauto}
    \begin{aligned}
        \minimize_{u,\delta,v} &\|u\|^2 + l_\delta\|\delta\|^2  + l_v\|v\|^2\\
        \st & \frac{\partial h_i}{\partial t} + \frac{\partial h_i}{\partial \sigma_i} \frac{\partial \sigma_i}{\partial x} f(x) + \frac{\partial h_i}{\partial \sigma_i} \frac{\partial \sigma_i}{\partial x} g(x) u \\
        &+ \gamma_i(h_i(\sigma_i,t)) \ge - \delta_{i},\quad\forall i \in \{1,\ldots,M\}\\
        & K\delta \le V v.
    \end{aligned}
\end{equation}
The optimization program in \eqref{eq:mainQPprioritiesauto} is similar to the one in \eqref{eq:mainQPpriorities} except for the relaxation of the prioritization constraint. The latter is defined by the following quantities:
\begin{equation}
    K = \begin{bmatrix}
        1 & -1/\kappa & 0 & \ldots & 0\\
        0 & 1 & -1/\kappa & \ldots & 0 \\
        0 & \vdots & \ddots & \ddots & 0 \\
        0 & 0 & \ldots & 1 & -1/\kappa\\
    \end{bmatrix},
\end{equation}
\begin{equation}
    V = \begin{bmatrix}
    \kappa^{-1} & 0 & \ldots & 0\\
    0 & \kappa^{0} & \ldots & 0 \\
    \vdots & \vdots & \ddots & \vdots \\
    0 & \ldots & 0 & \kappa^{M-3}\\
    \end{bmatrix},
\end{equation}
and the vector of relaxation variables $v=[v_{12},v_{23}, \ldots, v_{M-1M}]\tr\in\R^{M-1}$. The prioritization constraint is relaxed to ensure the feasibility of the optimization program even in the cases of infeasible prioritization stack highlighted in Remark~\ref{rmk:stackfeasibility}. The matrix $V$ is introduced for numerical reasons to make sure that the components of $v$ are of comparable magnitude.

By minimizing $\|v\|^2$, the tasks are executed in a stack which is as close as possible to the prescribed one since $v$ acts as a slack on the stack. As a consequence of the stack constraint relaxation, the priorities encoded by $K$ might not be respected. Finally, note that, the constraint~$ K \delta \le V v$ is an affine relationship between the components of $\delta$ and $v$ therefore~\eqref{eq:mainQPprioritiesauto} is a QP in $u$, $\delta$, and $v$, which can be efficiently solved in an online fashion.

\begin{remark}[Stacks with tasks at equal priorities]
The discussion of this section highlights another important feature of the task prioritization framework presented in this paper, and based on the definition of ESB tasks. This feature consists in the ability of seamlessly realizing stacks of the following type: $T_i \prec T_j,T_k \prec T_l$. With the method presented in Section~\ref{subsec:prioritization}, such a stack can be prescribed by an appropriate choice of the prioritization matrix $K$, just like illustrated in Example~\ref{exmp:prioritizationmatrix}. With the  prioritization method developed in this section, this effect can be obtained \textit{automatically}, by relaxing the sequential stack $T_i \prec T_j \prec T_k \prec T_l$---thus the name \emph{automatic prioritization stack}.
\end{remark}

\subsection{Stability of Prioritized Execution of Extended Set-based Tasks}

We now present results on the stability of ESB tasks with and without prioritized execution stacks. As will be seen, these results are obtained via an extension of the proofs for Jacobian-based tasks. \par 

Without loss of generality, we consider the case of controlling the task variable $\sigma(q)$ to zero for all the tasks. We begin by proving stability in executing multiple Jacobian-based tasks without priorities.

\begin{proposition}[Stability of multiple Jacobian-based tasks]
	\label{prop:jacobian}
	Consider $M$ Jacobian-based tasks executed by superposition, i.e., with no priorities. Let $J = [J\tr_1(q), \dots, J\tr_M(q)]\tr$ denote the stacked matrix of non-singular Jacobians, and $\sigma(q) = [\sigma\tr_1(q), \dots, \sigma\tr_M(q)]\tr$ denote the stacked vector of tasks. Then, choosing the control input as 
	\begin{equation}
        \label{eq:multi-task-jacobian-input}
        \dot{q} = -\bar{J}\sigma =[J_1\tr, \dots, J_M\tr]\sigma
	\end{equation}
    drives the vector $\sigma(q)$ to the $\nullcol(J\bar{J})$.
	In particular, when the tasks are all independent, i.e., $\nullcol(J\bar{J}) = \{0\}$, then $\sigma\to0$.
\end{proposition}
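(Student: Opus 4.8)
The plan is to substitute the proposed control law into the task kinematics and analyze the resulting closed loop, which is linear in $\sigma$. Since $\dot\sigma = J\dot q$ for the stacked quantities and $\dot q = -\bar J\sigma$, the closed loop is $\dot\sigma = -J\bar J\sigma$, a configuration-dependent linear system with system matrix $A(q) = J\bar J$. The statement then amounts to showing that trajectories of this system converge to $\nullcol(J\bar J)$, which collapses to $\sigma\to0$ precisely when that null space is trivial.

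First I would establish the spectral structure of $A = J\bar J$, which is the technical heart. This $m\times m$ matrix is in general neither symmetric nor sign-definite, so the naive choice $V = \tfrac12\|\sigma\|^2$ fails: its derivative $\dot V = -\sigma\tr J\bar J\sigma$ is not sign-definite. Instead I would exploit the companion matrix $\bar J J = \sum_{i=1}^M J_i^\dagger J_i = \sum_{i=1}^M P_i$, where each $P_i = J_i^\dagger J_i$ is the orthogonal projector onto the row space of $J_i$; this $N\times N$ matrix is symmetric positive semidefinite. Because $J\bar J$ and $\bar J J$ share the same nonzero eigenvalues, all eigenvalues of $A$ are real and nonnegative. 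Two further facts pin down the geometry: (i) $\nullcol(J\bar J) = \nullcol(\bar J)$, since $\bar J\sigma$ always lies in $\range(\bar J) = \sum_i\range(J_i\tr) = (\nullcol J)^\perp$ and $J$ is injective on this subspace; and (ii) the zero eigenvalue is semisimple, because $\rankcol(J) = \rankcol(\bar J) = \dim\sum_i\range(J_i\tr)$ forces its algebraic and geometric multiplicities to coincide. Since the positive eigenvalues are semisimple on the $\bar J J$ side, $A$ is in fact diagonalizable.

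With this structure in hand, convergence follows from the splitting $\R^m = \nullcol(J\bar J)\oplus W$, where $W$ is the span of the eigenvectors associated with the positive eigenvalues; $W$ is invariant and $-A$ restricted to $W$ is Hurwitz, so the $W$-component of $\sigma$ decays to zero while its null-space component is stationary, giving $\sigma\to\nullcol(J\bar J)$. The independence claim is then immediate: the tasks are independent exactly when $\nullcol(J\bar J) = \nullcol(\bar J) = \{0\}$—equivalently, when $\bar J$ has full column rank $m$—in which case $W = \R^m$, the whole system matrix $-A$ is Hurwitz, and $\sigma\to0$.

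The hard part will be handling the configuration dependence rigorously: $J$, $\bar J$, and hence both $A$ and the target set $\nullcol(J\bar J)$ vary along the trajectory, so the clean eigenvalue argument above is, strictly speaking, a statement about the frozen (constant-Jacobian) system. To make it rigorous for genuinely nonlinear tasks I would instead characterize the equilibrium set of the $q$-dynamics $\dot q = -\bar J\sigma(q)$: using injectivity of $J$ on $\range(\bar J)$ one has $\bar J\sigma = 0\iff J\bar J\sigma = 0$, so equilibria coincide exactly with $\sigma\in\nullcol(J\bar J)$, and one then argues—under a constant-rank assumption on the stacked Jacobian, or by treating the Jacobians as frozen/slowly varying as is standard in the redundancy-resolution literature—that trajectories approach this set. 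I expect this step, rather than the linear algebra, to demand the most care.
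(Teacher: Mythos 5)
Your proposal is correct in its linear algebra and takes a genuinely different route from the paper --- and, notably, the route you explicitly reject is the one the paper actually takes. The paper's proof sets $V=\frac{1}{2}\sigma\tr\sigma$, computes $\dot V=-\sigma\tr P\sigma$ with $P=J\bar J$, and asserts $P\ge0$ on the grounds that $P=JJ\tr\tilde J$ is similar (via a congruence with $\sqrt{\tilde J}$) to a positive semidefinite matrix, then reads off convergence to $\nullcol(P)$ from a rank case analysis. Your objection to that step is well founded: similarity to a PSD matrix controls the spectrum of $P$, not its quadratic form, which depends only on the symmetric part $\frac{1}{2}(P+P\tr)$. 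For instance, with $J_1=[1\;\;0]$ and $J_2=[2\;\;0.1]$ one gets $P_{12}=2/4.01$, $P_{21}=2$, and $\sigma\tr P\sigma=-2/4.01<0$ at $\sigma=[1\;\;{-1}]\tr$, so $\dot V$ is not sign-definite and the paper's decrease argument fails as written. Your spectral route --- passing to the companion matrix $\bar JJ=\sum_iJ_i^\dagger J_i$, a sum of orthogonal projectors, to conclude that $J\bar J$ is diagonalizable with real nonnegative eigenvalues, identifying $\nullcol(J\bar J)=\nullcol(\bar J)$ via injectivity of $J$ on $\range(J\tr)$, and checking semisimplicity of the zero eigenvalue through the rank count $\rankcol(\bar JJ)=\rankcol(\bar J)=\rankcol(J)$ --- repairs exactly this defect and delivers the stated conclusion for the frozen system; what it gives up in brevity it buys back in actually being valid. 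The one weakness you share with the paper is the configuration dependence of $J$ and $\bar J$ along the trajectory, so that the spectral (or Lyapunov) statement is pointwise in $q$ rather than a bona fide convergence proof for the nonlinear dynamics; the paper ignores this entirely, whereas you flag it and sketch the standard remedies (equilibrium characterization of the $q$-dynamics together with a constant-rank or frozen-Jacobian hypothesis), which is the honest way to state what is actually being proved.
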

\begin{proof}
	Consider the Lyapunov function $V(\sigma) = \frac{1}{2}\sigma\tr \sigma$,
	whose time derivative is given as, $\dot{V} = \sigma\tr\dot \sigma = \sigma\tr J \dot q$.
	Substituting the control input~\eqref{eq:multi-task-jacobian-input},
	leads to $\dot{V} = -\sigma\tr P \sigma$,
	where
	\begin{equation}
		P = \begin{bmatrix}
				J_1 J_1\tr & J_1 J_2\tr & \ldots & J_1 J_M\tr\\
                J_2 J_1\tr & J_2 J_2\tr & \ldots & J_2 J_M\tr\\
				& & \ddots &\\
				J_M J_1\tr & \ldots & J_M J\tr_{M-1}& J_M J_M\tr
		\end{bmatrix}.
	\end{equation}
	To apply the Lyapunov stability criterion, it is necessary to analyze the properties of $\dot{V}$, which has a quadratic form. Below, we analyze the stability based on the definiteness of the matrix $P$:
	\begin{itemize}
		\item When all tasks are orthogonal, $P$ is block-diagonal and positive-definite.
        Thus, $\nullcol(P) = \{0\}$, and $\sigma\to0$.
		\item When all tasks are pairwise independent, $\sum_{i=1}^M \rankcol(J_i) = \rankcol(J) = m$. Moreover, $\rankcol(\bar J) = \rankcol(J)$. Thus, $\rankcol(P) = m$, $\nullcol(P) = \{0\}$, and $\sigma\to 0$.
		\item When at least two tasks are dependent, $\sum_{i=1}^M \rankcol(J_i) < \rankcol(J)$. Thus, $\rankcol(P) < m$ and $\nullcol(P) \neq \{0\}$. As in the previous cases, $\sigma \rightarrow \nullcol(P)$.
	\end{itemize}
\end{proof}

In the case of ESB tasks, we carry out computations for the case where the CBFs encoding the tasks do not explicitly depend on the time variable. This choice does not undermine the derived results, as will be pointed out in the following.

\begin{proposition}[Stability of multiple ESB tasks]
	\label{prop:cbf}
	Consider executing $M$ ESB tasks with no priorities by solving the optimization problem in~\eqref{eq:mainQP}, with $\gamma_i(s) = \lambda_i s$, $\lambda_i>0$, $\forall i$. Let $\sigma(q) = [\sigma_1(q)\tr, \dots, \sigma_M(q)\tr]\tr$ and $h(\sigma(q)) = [h_1(\sigma_1(q)), \dots, h_M(\sigma_M(q))]\tr$ be the stacked vector of task variables and CBFs encoding the ESB tasks, respectively. Then, $\sigma(q)$ converges to the set
	\begin{equation}
		\label{eq:stableSet}
		\mc S = \left\{\sigma(q) \colon \dhdq\tr \gamma(h(\sigma(q))) = 0 \right\},
	\end{equation}
    where the $i$-th component of $\gamma(h(\sigma(q)))$ is defined to be $\gamma_i(h_i(\sigma(q)))$.
\end{proposition}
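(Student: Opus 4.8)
The plan is to mirror the Lyapunov/LaSalle argument of Proposition~\ref{prop:jacobian}, replacing the quadratic $\frac12\sigma\tr\sigma$ by a Lyapunov function built from the CBFs. First I would fix the candidate $V(q) = \sum_{i=1}^M \Gamma_i\big(h_i(\sigma_i(q))\big)$, where $\Gamma_i(s) = \int_0^s \gamma_i(\tau)\,d\tau$. Since each $\gamma_i$ is an extended class $\mc K_\infty$ function, $\Gamma_i$ is nonnegative, vanishes only at $h_i=0$, and satisfies $\nabla_h V = \gamma(h)$, so that along the closed loop $\dot V = \gamma(h)\tr \dot h$ with $\dot h = \frac{\partial h}{\partial q}\dot q$. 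In the control-to-zero setting stated before the proposition, each $h_i = -\tfrac12\|\sigma_i\|^2 \le 0$, hence $\gamma_i(h_i)\le 0$; this sign information is the crucial ingredient below.

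Next I would extract the structure of the optimal input from~\eqref{eq:mainQP}. Eliminating the slacks (at the optimum $\delta_i = \big[-\dot h_i - \gamma_i(h_i)\big]_+$, because enlarging any $\delta_i$ only increases the cost) reduces the QP to the unconstrained, strongly convex program $u^\star = \argmin_u \phi(u)$ with $\phi(u) = \|u\|^2 + l\sum_i \big[-\tfrac{\partial h_i}{\partial q}u - \gamma_i(h_i)\big]_+^2$. Its stationarity condition gives $u^\star = l\,\frac{\partial h}{\partial q}\tr\delta$ with $\delta\ge 0$, whence $\dot V = \gamma(h)\tr \frac{\partial h}{\partial q} u^\star = \big(\frac{\partial h}{\partial q}\tr\gamma(h)\big)\tr u^\star$. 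The vector $\frac{\partial h}{\partial q}\tr\gamma(h)$ defining $\mc S$ in~\eqref{eq:stableSet} thus appears naturally, since it equals $\tfrac{1}{2l}\nabla\phi(0)$: this is exactly where $\gamma_i(h_i)\le 0$ enters, for it forces $\big[-\gamma_i(h_i)\big]_+ = -\gamma_i(h_i)$ and hence $\nabla\phi(0) = 2l\,\frac{\partial h}{\partial q}\tr\gamma(h)$.

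The key step is to exploit strong convexity of $\phi$: since $\phi - \|u\|^2$ is convex, the gradient $\nabla\phi$ is strongly monotone with modulus $2$, so $\big(\nabla\phi(u^\star)-\nabla\phi(0)\big)\tr u^\star \ge 2\|u^\star\|^2$. Using $\nabla\phi(u^\star)=0$ and the identity for $\nabla\phi(0)$ above yields $\dot V = \big(\frac{\partial h}{\partial q}\tr\gamma(h)\big)\tr u^\star \le -\tfrac1l\|u^\star\|^2 \le 0$. Hence $V$ is nonincreasing and, by LaSalle's invariance principle, trajectories converge to the largest invariant subset of $\{\dot V = 0\} = \{u^\star = 0\}$. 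I would then check that $\{u^\star = 0\}$ coincides with $\mc S$: at $u^\star = 0$ one has $\delta_i = \big[-\gamma_i(h_i)\big]_+ = -\gamma_i(h_i)$, so $0 = u^\star = l\,\frac{\partial h}{\partial q}\tr\delta = -l\,\frac{\partial h}{\partial q}\tr\gamma(h)$, i.e.\ precisely $\frac{\partial h}{\partial q}\tr\gamma(h) = 0$; conversely these points are equilibria, so the set is invariant and LaSalle gives $\sigma(q)\to\mc S$.

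The main obstacle is the nonsmoothness introduced by the slack variables: the optimal input is only piecewise-defined through the active set of~\eqref{eq:mainQP}, so a term-by-term estimate of $\dot V$ split over active and inactive tasks does not obviously have a definite sign (the inactive contributions $\gamma_i(h_i)\dot h_i$ need not be negative). The convex-reformulation/monotonicity argument is what circumvents this, collapsing the active-set bookkeeping into a single globally valid inequality; the price is the structural assumption $h_i\le 0$, without which $\big[-\gamma_i(h_i)\big]_+\neq -\gamma_i(h_i)$ and the identification $\nabla\phi(0) = 2l\,\frac{\partial h}{\partial q}\tr\gamma(h)$ breaks. Finally, the explicitly time-varying case merely adds a $\frac{\partial h_i}{\partial t}$ term to each constraint, which enters $\nabla\phi(0)$ additively and leaves the LaSalle structure unchanged, justifying carrying out the computation in the time-invariant case without loss of generality.
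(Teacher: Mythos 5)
Your proof is correct, but it follows a genuinely different route from the paper's. The paper writes out the KKT system explicitly under the implicit assumption that all task constraints are active, obtaining $\dot q = -l\dhdq\tr A_0\inv\gamma(h)$, takes the Lyapunov function $V=\tfrac12\|\gamma(h)\|^2$, and reaches the decrease condition via the push-through identity $\dhdq\dhdq\tr\bigl(I+l\dhdq\dhdq\tr\bigr)\inv=\dhdq\bigl(I+l\dhdq\tr\dhdq\bigr)\inv\dhdq\tr$, which yields $\dot V\le -lL\bigl\|\dhdq\tr\gamma(h)\bigr\|^2_{A\inv}$ before citing an external result for convergence. You instead take $V=\sum_i\int_0^{h_i}\gamma_i$, eliminate the slacks into a positive-part penalty, and replace the matrix algebra by strong monotonicity of $\nabla\phi$, arriving at $\dot V\le-\tfrac1l\|u^\star\|^2$ and then LaSalle. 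What your route buys: it never inverts $A_0$, it does not need $\gamma$ to be differentiable with a Lipschitz derivative bound $L$ (only continuity of $\gamma$ is needed for $V\in C^1$), and the $[\cdot]_+$ reformulation handles the active/inactive bookkeeping that the paper's closed-form $\delta=A_0\inv\gamma(h)$ silently assumes away; it also correctly identifies $\{u^\star=0\}=\mc S$ in both directions. What it costs: the identification $\nabla\phi(0)=2l\,\dhdq\tr\gamma(h)$ requires $\gamma_i(h_i)\le0$ for all $i$, i.e., all tasks not yet accomplished --- but this is exactly the regime in which the paper's all-constraints-active KKT formula is valid, so neither argument is more restrictive than the other. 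Two minor points you should make explicit if you write this up: LaSalle requires precompactness of the trajectory (fine for a manipulator with compact $\mc Q$, and a gap the paper's proof shares by deferring to an external proposition), and the sublevel sets of your $V$ are positively invariant only once $\dot V\le0$ is established, so the order of the argument matters.
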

\begin{proof}
	Applying KKT conditions to the problem~\eqref{eq:mainQP} with the kinematic robot model \eqref{eq:robotkinmodel}, the following equations can be derived:
	\begin{equation}
        \label{eq:kkt_ntasks}
		\dot{q} = -l\dhdq\tr \delta, \qquad \delta = {\Bigg(\underbrace{I+l\dhdq\dhdq\tr}_{A_0}\Bigg)}^{-1} \gamma(h).
	\end{equation}
	To show the stability of the execution of multiple ESB tasks, we consider the Lyapunov function candidate
    \begin{equation}
        V(\sigma) = \frac{1}{2}\gamma\left(h(\sigma)\right)\tr \Lambda\inv\gamma\left(h(\sigma)\right),
    \end{equation}
    where $\Lambda = \operatorname{diag}(\lambda_1, \lambda_2, \ldots, \lambda_M)$, $\Lambda\in\mathbb{R}^{M \times M}$. Taking its derivative and using~\eqref{eq:kkt_ntasks} we get
	\begin{equation}
		\label{eq:vdot}
		\begin{aligned}
			\dot{V}(\sigma) &= -l\gamma(h)\tr \Lambda\inv\frac{\partial \gamma}{\partial h} \dhdq \dhdq\tr A_0\inv\gamma(h)\\
			&= -l \gamma(h)\tr \dhdq \dhdq\tr A_0\inv \gamma(h),
		\end{aligned}
	\end{equation}
    since $\frac{\partial \gamma}{\partial h} = \Lambda$. Equation~\eqref{eq:vdot} can be rearranged as follows:
	\begin{equation}
		\label{eq:vdotcontd}
        \begin{aligned}
			\dot{V}(\sigma) &= -l \gamma(h)\tr \dhdq {\Bigg(\underbrace{I + l \dhdq\tr\dhdq}_{A>0}\Bigg)}^{-1}\dhdq\tr \gamma(h)\\
			&= -l \left\|\dhdq\tr \gamma(h)\right\|_{A\inv}^2 \le 0.
		\end{aligned}
	\end{equation}
	By Proposition~3 in \cite{notomista2019optimal}, $\gamma(h)\to\mc S$ in \eqref{eq:stableSet}.
\end{proof}

Next, we present results which demonstrate the stability of multiple Jacobian-based tasks in the presence of prioritizations among the tasks.

\begin{proposition}[Stability of multiple prioritized Jacobian-based tasks~\cite{antonelli2009tro}]
	Consider executing $M$ prioritized Jacobian-based tasks. Using the control input
    $\dot{q} = -G \sigma$, where $G = \left[J_1^{\dagger},~ N_1J_2^{\dagger},~ \dots, ~ N_{M-1}J_M^{\dagger}\right]$,
	where $N_{i} = \left(I - \bar{J}_i^\dagger\bar{J}_i\right)$ is the projector onto the null space of the Jacobian $\bar{J}_i = [J_1\tr, \dots, J_i\tr]\tr$, 
	if the tasks are all orthogonal, then $\sigma\left(q\right) \to 0$.
\end{proposition}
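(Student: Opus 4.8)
The plan is to write out the closed-loop task dynamics and show that, under orthogonality, the feedback matrix $JG$ collapses to the identity, so that each task becomes a decoupled, exponentially stable linear system. Since $\dot\sigma = J\dot q$ with $J = [J_1\tr,\dots,J_M\tr]\tr$ the stacked Jacobian, substituting the controller \eqref{eq:prioritized-multi-task-jacobian-input} gives $\dot\sigma = -JG\sigma$. Writing this block by block, the $(i,k)$ block of $JG$ is $J_i N_{k-1}J_k^\dagger$ (with the convention $N_0 = I$), so the entire argument reduces to evaluating these blocks under the orthogonality hypothesis.

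First I would record the algebraic consequences of orthogonality. Assuming each $J_i$ has full row rank (implicit in the task definition, since $J_i\neq0$, and needed for $J_iJ_i^\dagger = I_{m_i}$), the condition $J_iJ_j\tr = 0$ for $i\neq j$ says exactly that the row spaces $\range(J_i\tr)$ are mutually orthogonal. Using $\range(J_j^\dagger)=\range(J_j\tr)$ (same span as the transpose, cf. Remark~\ref{rmk:pseudoinversetranspose}), orthogonality immediately yields $J_iJ_j^\dagger = 0$ for $i\neq j$.

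Next I would analyze the null-space projectors $N_{k-1}$, the orthogonal projectors onto $\nullcol(\bar J_{k-1})$. The two key observations are: (a) $\range(J_k\tr)\subseteq \nullcol(\bar J_{k-1})$, because the row space of $J_k$ is orthogonal to those of $J_1,\dots,J_{k-1}$, so $N_{k-1}$ acts as the identity on $\range(J_k^\dagger)$, i.e.\ $N_{k-1}J_k^\dagger = J_k^\dagger$; and (b) $\nullcol(\bar J_{k-1})\subseteq \nullcol(J_i)$ whenever $i\le k-1$, since $J_i$ is a block of $\bar J_{k-1}$, so $J_iN_{k-1} = 0$ for $k>i$. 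Combining these with the previous step, the $(i,k)$ block satisfies $J_iN_{i-1}J_i^\dagger = J_iJ_i^\dagger = I_{m_i}$ when $k=i$, equals $J_iJ_k^\dagger = 0$ when $k<i$, and equals $0$ when $k>i$. Hence $JG = I$, and the closed loop decouples into $\dot\sigma_i = -\sigma_i$.

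Finally I would close with the Lyapunov function $V(\sigma)=\tfrac12\sigma\tr\sigma$ already used in Proposition~\ref{prop:jacobian}: since $JG=I$, one gets $\dot V = \sigma\tr\dot\sigma = -\sigma\tr JG\sigma = -\sigma\tr\sigma\le0$, with equality only at $\sigma=0$, so $\sigma(q)\to0$ (indeed exponentially). The main obstacle is the third step: one must argue carefully that orthogonality of the row spaces renders the cascade of projectors $N_{k-1}$ transparent to the on-diagonal term while annihilating every off-diagonal coupling, and this is precisely where the full-row-rank assumption and the identity $\range(J^\dagger)=\range(J\tr)$ are doing the work.
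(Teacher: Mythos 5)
Your proof is correct, but it cannot be compared step-for-step with the paper's, because the paper does not actually prove this proposition: its proof is the single line ``See \cite{antonelli2009tro}.'' What you have supplied is a self-contained argument, and it is sound. The block computation is right: with $N_{k-1}=I-\bar J_{k-1}^\dagger\bar J_{k-1}$ the orthogonal projector onto $\nullcol(\bar J_{k-1})=\bigcap_{j<k}\nullcol(J_j)$, orthogonality gives $\range(J_k^\dagger)=\range(J_k\tr)\subseteq\nullcol(\bar J_{k-1})$, hence $N_{k-1}J_k^\dagger=J_k^\dagger$, while $J_iN_{k-1}=0$ for $i\le k-1$; together with $J_iJ_k^\dagger=0$ for $i\neq k$ this makes $JG$ block-diagonal with blocks $J_iJ_i^\dagger$, and the Lyapunov step is then immediate. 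Your approach is also consistent in spirit with the paper's own Proposition~\ref{prop:jacobian} (the unprioritized case), which uses the same Lyapunov function $V=\tfrac12\sigma\tr\sigma$ and reduces the question to the definiteness of $JG$; you are essentially extending that computation to the prioritized controller. The one point to be explicit about is the hypothesis you correctly flag: the paper only assumes $J_i\neq 0$, whereas your conclusion $J_iJ_i^\dagger=I_{m_i}$ requires each $J_i$ to have full row rank. Without it, $J_iJ_i^\dagger$ is merely the orthogonal projector onto $\range(J_i)$ and $\sigma_i$ converges only to $\range(J_i)^\perp$, not to zero --- the same caveat that appears in the paper's unprioritized analysis, where dependence or rank deficiency yields convergence to $\nullcol(P)$ rather than to the origin. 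Since full row rank of the task Jacobians is the standard nonsingularity assumption in this literature (and in \cite{antonelli2009tro}), this is not a gap so much as a hypothesis worth stating explicitly.
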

\begin{proof}
	See \cite{antonelli2009tro}.
\end{proof}

In the case when the tasks are independent, the authors in~\cite{antonelli2009tro} propose the following modified control input:
$\dot{q} = -G \sigma$, where $\quad G = \left[\Lambda_1 J_1^{\dagger}, ~ \Lambda_2N_1J_2^{\dagger}, ~ \dots, ~ \Lambda_MN_{M-1}J_M^{\dagger}\right]$,
and give sufficient conditions on the gain matrices $\Lambda_1,\ldots,\Lambda_M$ for the asymptotic convergence of $\sigma$ to zero. \par

In the following, we start by showing the stability in the special case of multiple prioritized \textit{independent} ESB tasks (Corollary~\ref{cor:cbfindeppriority} of Proposition~\ref{prop:cbf}), and then we state and prove the general result of the stability of multiple prioritized ESB tasks (Proposition~\ref{prop:cbfpriority}) execution.

\begin{corollary}
	\label{cor:cbfindeppriority}
	Assume all hypotheses of Proposition~\ref{prop:cbf} hold. Additionally, let all tasks be pairwise independent or orthogonal according to Definition~\ref{def:independent_CBF_tasks}. Then, all tasks will be accomplished.
\end{corollary}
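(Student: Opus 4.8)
The plan is to start from the convergence already furnished by Proposition~\ref{prop:cbf}, namely that $\sigma(q)$ converges to the set $\mc S = \{\sigma(q) \colon \dhdq\tr \gamma(h(\sigma(q))) = 0\}$, and to show that the independence (or orthogonality) hypothesis collapses $\mc S$ onto the set where every task is accomplished, i.e., where $h = 0$. Since the corollary assumes all hypotheses of Proposition~\ref{prop:cbf}, no new Lyapunov argument is needed; the work lies entirely in characterizing $\mc S$ under the extra rank assumption.

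First I would establish that the stacked gradient matrix $\dhdq = [\frac{\partial h_1}{\partial q}\tr, \ldots, \frac{\partial h_M}{\partial q}\tr]\tr \in \R^{M\times N}$ has full row rank $M$. In the orthogonal case this is immediate: Definition~\ref{def:orthogonal_CBF_tasks} forces the Gram matrix $\dhdq\dhdq\tr$ to be diagonal, with strictly positive diagonal entries $\|\frac{\partial h_i}{\partial q}\|^2 > 0$ (recall $\frac{\partial h_i}{\partial q}\neq 0$ by hypothesis), hence positive definite, so the rows are linearly independent. In the independent case, extending the pairwise condition of Definition~\ref{def:independent_CBF_tasks} to the whole collection yields linear independence of the $M$ gradient vectors $\frac{\partial h_i}{\partial q}\tr$, again giving $\rankcol(\dhdq) = M$; by Remark~\ref{rmk:orthogonal} orthogonality is subsumed by independence.

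Next, full row rank of $\dhdq$ is equivalent to full column rank of $\dhdq\tr \in \R^{N\times M}$, whose nullspace is therefore trivial, $\nullcol(\dhdq\tr) = \{0\}$. Consequently, on $\mc S$ the defining equation $\dhdq\tr \gamma(h) = 0$ forces $\gamma(h) = 0$. Since $\gamma$ is an extended class $\mc K_\infty$ function, it is strictly increasing with $\gamma(0)=0$, so $\gamma(h)=0$ is equivalent to $h=0$ componentwise. Under the normalization that each task drives its variable to zero, $h_i=0$ is exactly the condition for task $i$ being accomplished; hence $\sigma\to\mc S$ implies $h\to 0$, and all tasks are accomplished.

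The main obstacle is the first step: justifying that the pairwise notions of Definitions~\ref{def:independent_CBF_tasks} and~\ref{def:orthogonal_CBF_tasks} upgrade to full rank $M$ of the entire stacked gradient. Pairwise linear independence of one-dimensional subspaces does \emph{not} in general imply independence of the whole family, so the hypothesis must be read as asserting linear independence of all $M$ gradient vectors simultaneously (which in particular presumes $M\le N$). This mirrors the stronger condition $\sum_i \rankcol(J_i) = \rankcol(J) = m$ invoked for Jacobian-based tasks in Proposition~\ref{prop:jacobian}. Once full rank is secured, the reduction of $\mc S$ to $\{h=0\}$ is routine linear algebra.
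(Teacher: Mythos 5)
Your proof is correct and follows essentially the same route as the paper's: invoke Proposition~\ref{prop:cbf} to obtain convergence to the set $\mc S$, then use independence to conclude $\nullcol\left(\dhdq\tr\right)=\{0\}$, so that $\gamma(h)=0$ on $\mc S$ and hence all tasks are accomplished. Your added care in observing that the pairwise condition of Definition~\ref{def:independent_CBF_tasks} must be read as joint linear independence of all $M$ gradients (pairwise independence alone does not yield rank $M$) addresses a step the paper's proof passes over silently.
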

\begin{proof}
	By Proposition~\ref{prop:cbf}, solving the optimization problem~\eqref{eq:mainQP} leads to
    $\gamma(h(\sigma(q(t)))) \to \nullcol\left(\dhdqsmall\tr\right)$
	as $t\to\infty$. Moreover, by Definition~\ref{def:independent_CBF_tasks}, when all tasks are independent, it follows that
    $\rankcol \left(\dhdqsmall\tr\right) = M$,
	$M$ being the number of tasks, and therefore
    $\nullcol\left(\dhdqsmall\tr\right) = \{0\}$.
	Thus, the tasks converge to the set $h(\sigma(q)) \geq 0$, or, equivalently, they will be accomplished.
	By Remark~\ref{rmk:orthogonal}, this is true also for orthogonal tasks.
\end{proof}

\begin{remark}
From Corollary~\ref{cor:cbfindeppriority}, it follows that, when all tasks are pairwise independent, there is no need of prioritizing the stack of tasks as all of them will be accomplished.
\end{remark}

The following proposition shows the effect of the prioritization matrix on the execution and completion of possibly dependent tasks. In the proof, we will use selection matrices in order to extract certain columns from a matrix which are required in the derivation. As an example, the matrix $\Sigma$ defined as follows
\begin{equation}
    \Sigma = \begin{bmatrix}
        1 & 0\\
        0 & 0\\
        0 & 1\\
    \end{bmatrix},
\end{equation}
pre-multiplied by a matrix $A\in\R^{3\times3}$, will select its first and third columns and stack them side by side in the $3\times2$-matrix $A\Sigma$.

\begin{proposition}[Stability of multiple prioritized ESB tasks]
	\label{prop:cbfpriority}
	Consider executing a set of multiple prioritized ESB tasks solving the optimization problem in~\eqref{eq:mainQPpriorities}, with $\gamma_i(s) = \Lambda s$, $\Lambda>0$, $\forall i$. Let $\mc I$ be the index set of active constraints of \eqref{eq:mainQPpriorities}. Let $\Sigma_0\in\{0,1\}^{(2M-1)\times|\mc I|}$ and $\Sigma\in\{0,1\}^{M\times|\mc I|}$ be the selection matrices of active constraints and active task constraints, respectively, so that the vector of task variables whose corresponding constraints in \eqref{eq:mainQPpriorities} are active can be expressed as $\bar\sigma(q) = \Sigma\tr \sigma(q)$, and the matrix whose columns are all the columns of the prioritization matrix $K$ corresponding to active task constraints in \eqref{eq:mainQPpriorities} is given by $\bar K =  K\Sigma$. Then, if the set $\mc I$ changes only finitely many times, and if
	\begin{equation}
		\label{eq:rank}
		\rankcol\left(\bar K \Sigma\tr \dhdq\right) < |\mc I|-1
	\end{equation}
	only for isolated robot configurations, the vector of task variables $\sigma(q)$ converges to the set
	\begin{equation}
		\label{eq:stableSetPriorities}
		\mc S = \left\{\sigma(q) ~\colon \bar K \Sigma\tr \gamma(h(\sigma(q))) = 0 \right\},
	\end{equation}
	which corresponds to the condition in which all tasks corresponding to active constraints have been executed fulfilling the desired priorities specified by the prioritization matrix $K$.
\end{proposition}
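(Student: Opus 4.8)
The plan is to mimic the structure of the proof of Proposition~\ref{prop:cbf}, but now carrying the prioritization constraint $K\delta\le0$ into the KKT analysis and restricting attention to the active constraints encoded by the selection matrices $\Sigma_0$ and $\Sigma$. First I would write down the KKT conditions for the QP~\eqref{eq:mainQPpriorities}. The stationarity condition in $u=\dot q$ and $\delta$ will, as in~\eqref{eq:kkt_ntasks}, express $\dot q$ and $\delta$ as functions of the Lagrange multipliers $\lambda\ge0$ associated with the task constraints and $\mu\ge0$ associated with the rows of $K\delta\le0$. Restricting to the active set $\mc I$ via $\Sigma_0$ turns these stationarity equations into a finite-dimensional linear system; solving it should yield, in analogy with~\eqref{eq:kkt_ntasks}, a closed-form expression $\dot q = -l\,\Sigma\tr\!\dhdq{}\tr\,\bar\delta$ for the active-task reduced slack vector $\bar\delta$, together with an expression $\bar\delta = A_{\mathcal I}\inv\big(\text{terms in }\gamma(\bar h)\big)$ for some positive-definite matrix $A_{\mathcal I}$ built from $\Sigma\tr\dhdq\dhdq{}\tr\Sigma$ and from $\bar K=K\Sigma$.

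Next I would set up a Lyapunov function. The natural candidate, paralleling Proposition~\ref{prop:cbf}, is $V(\sigma)=\tfrac12\gamma(\bar h)\tr\gamma(\bar h)$ restricted to active tasks, with $\bar h = \Sigma\tr h(\sigma)$. Differentiating along the closed-loop flow and substituting the KKT expression for $\dot q$ should, after the same Lipschitz bound on $\gamma$ used in~\eqref{eq:vdot}--\eqref{eq:vdotcontd}, produce a manifestly nonpositive quadratic form of the shape
\begin{equation}
\dot V \le -lL\,\big\|\bar K\,\Sigma\tr\!\dhdq{}\tr\gamma(h)\big\|_{A\inv}^2 \le 0,
\end{equation}
for an appropriate positive-definite $A$. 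The point is that the prioritization matrix $\bar K$ must appear sandwiched inside this norm, so that $\dot V=0$ forces exactly the condition $\bar K\,\Sigma\tr\gamma(h(\sigma))=0$ defining the set $\mc S$ in~\eqref{eq:stableSetPriorities}. The rank hypothesis~\eqref{eq:rank} then guarantees that, away from the isolated configurations where the rank drops, the matrix $\bar K\,\Sigma\tr\dhdq$ has full row rank $|\mc I|-1$, so that $\bar\delta$ inherits the prescribed proportions encoded by $K$ and the convergence to $\mc S$ indeed corresponds to the priorities being respected among the active tasks.

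To conclude convergence I would invoke LaSalle's invariance principle (as in the reference to Proposition~3 of~\cite{notomista2019optimal} used in Proposition~\ref{prop:cbf}), which requires the dynamics to be piecewise well-defined; here the hypothesis that $\mc I$ changes only finitely many times is exactly what lets me apply the invariance argument on each constant-active-set epoch and then conclude globally, since after the last switch the active set is fixed and the above $\dot V\le0$ analysis applies verbatim. The main obstacle I anticipate is the bookkeeping in the KKT step: because both the task constraints and the prioritization constraints may be active or inactive, the stationarity system couples $\lambda$ and $\mu$, and one must eliminate $\mu$ correctly so that $\bar K$ survives into the Lyapunov derivative in the right place; handling the isolated rank-deficient configurations (where $A_{\mathcal I}$ may degenerate and the control input is not uniquely determined) without breaking the LaSalle argument is the delicate part, and this is precisely where the "only finitely many switches" and "only isolated configurations" hypotheses do the work.
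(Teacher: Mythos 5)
The skeleton of your plan (KKT conditions for \eqref{eq:mainQPpriorities}, a Lyapunov argument, and using the finitely-many-switches hypothesis to work on the final constant-active-set epoch) matches the paper, and you correctly locate where the bookkeeping difficulties lie. But there is a genuine gap at the core: your choice of Lyapunov function and the form of the decay bound you predict do not deliver the convergence set $\mc S$ in \eqref{eq:stableSetPriorities}. You propose $V=\tfrac12\gamma(\bar h)\tr\gamma(\bar h)$ and anticipate a bound of the shape $\dot V\le -lL\,\|\cdot\|_{A\inv}^2$ with the task Jacobian $\dhdq\tr$ still inside the norm, exactly as in Proposition~\ref{prop:cbf}. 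If the Jacobian survives into the decay rate, then $\dot V=0$ is forced whenever the weighted gradient combination lies in the relevant null space --- which happens at configurations where the tasks are \emph{not} executed with the prescribed priorities (this is precisely why the unprioritized result, Proposition~\ref{prop:cbf}, only gives convergence to the larger set \eqref{eq:stableSet} containing $\dhdq\tr\gamma(h)=0$). Your argument would therefore prove convergence to a gradient-dependent set, not to $\{\sigma\colon \bar K\Sigma\tr\gamma(h(\sigma))=0\}$. (As a side remark, the expression $\bar K\,\Sigma\tr\dhdq\tr\gamma(h)$ in your displayed bound is also dimensionally inconsistent: $\dhdq\tr\gamma(h)\in\R^N$ cannot be multiplied by $\Sigma\tr\in\{0,1\}^{|\mc I|\times M}$.)

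The paper instead takes $z=\bar K\Sigma\tr\gamma(h)$ and $V(z)=\tfrac12\|z\|^2$, i.e., the prioritization matrix is built into the Lyapunov function from the start. The essential step you are missing is then an explicit block-inversion computation: writing $4l\tilde A$ for the upper-left block of $\bar A_K\inv$ and decomposing $\tilde A=\bar A_0\inv+A^\prime$ via a Schur-complement-type identity, one shows
\begin{equation}
\bar K\tr \bar K \dhdqbar \dhdqbar\tr \tilde A = \frac{\bar K\tr\bar K}{l},
\end{equation}
so that the factor $\dhdqbar\dhdqbar\tr$ cancels entirely and $\dot V\le -L\|z\|^2=-2LV$. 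This is what makes $z\to0$ without any residual null-space caveat, and it is also where the rank hypothesis \eqref{eq:rank} actually enters: it guarantees invertibility of the Schur complement $l\bar K\dhdqbar\bar A\inv\dhdqbar\tr\bar K\tr$ (hence existence of $A^\prime$) for all but isolated configurations, rather than guaranteeing that ``$\bar\delta$ inherits the prescribed proportions'' as you suggest. Without identifying this cancellation, the proof does not close.
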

\begin{proof}
	Applying KKT conditions to the optimization problem~\eqref{eq:mainQPpriorities} with the kinematic robot model \eqref{eq:robotkinmodel}, we obtain
	\begin{equation} \label{eq:kkt_n_prioritized_tasks}
		\dot{q} = \frac{1}{2}\begin{bmatrix}\dfrac{\partial h}{\partial q}\tr & 0\end{bmatrix}{\lambda} \qquad
		\delta = \frac{1}{2l}\begin{bmatrix}I & -K\tr\end{bmatrix}{\lambda}
	\end{equation}
	where $\lambda$ is the vector of Lagrange multipliers. For constraints in the active set, the corresponding Lagrange multipliers is strictly positive, i.e., $\lambda_i>0,~\forall i\in\mc I$. Let $\bar\lambda\in\R^{|\mc I|}$ be the vector of non-zero Lagrange multipliers, and $\bbar\lambda$ the vector of Lagrange multipliers that are equal to zero. To find $\bar\lambda$, we consider the dual problem which yields:
	\begin{equation}\label{eq:lambda_sys}
		\bar\lambda = -\frac{1}{2} \bar A_K\inv \bar \gamma_0(h),
	\end{equation}
	where
	\begin{equation}
		\bar A_K = \frac{1}{4l}
		\begin{bmatrix}
			\bar A_0 & \bar K\tr\\
			\bar K & \bar K \bar K\tr
		\end{bmatrix},\quad
		\bar\gamma_0(h) = \Sigma_0\tr \underbrace{\begin{bmatrix}\gamma(h)\\ 0\end{bmatrix}}_{\gamma_0(h)} = \Sigma\tr\gamma(h).
	\end{equation}
	$\bar A_0$ is defined analogously to $A_0$ in~\eqref{eq:kkt_ntasks} as follows:
	\begin{equation}
		\label{eq:A0bar}
		\bar A_0 = I+l\Sigma\tr\dhdq\dhdq\tr\Sigma = I + l\dhdqbar\dhdqbar\tr,
	\end{equation}
	where the quantity $\dhdqbarsmall = \Sigma\tr\dhdqsmall$
	has been introduced for sake of notational compactness, which will become apparent in the following steps.
	
	Let
	\begin{equation}
		\label{eq:z}
		z=\bar K\Sigma\tr\gamma(h)
	\end{equation}
	and consider the following candidate Lyapunov function:
	\begin{equation}
		\label{eq:lyapunovfunct}
		V(z) = \frac{1}{2} \|z\|^2.
	\end{equation}
	By the assumption that the index set of active constraints $\mc I$ changes only finitely many times, $V$ is differentiable almost everywhere. Then, taking its time derivative, we obtain:
	\begin{equation}
		\label{eq:vdotprioritiesfirststep}
		\dot{V} = z\tr \dot z \le L \gamma(h)\tr \Sigma \bar K\tr \bar K \Sigma\tr \dhdq \dot{q},
	\end{equation}
	where the property $\left\|\frac{\partial \gamma}{\partial h}\tr\right\| \leq L <\infty$, stemming from the Lipschitz continuity of $\gamma$, has been used.
	Substituting $\dot q$ from \eqref{eq:kkt_n_prioritized_tasks} into \eqref{eq:vdotprioritiesfirststep}, and using the expression of $\bar\lambda$ from \eqref{eq:lambda_sys}, we obtain:
	\begin{equation}
		\label{eq:vdotpriorities}
        \begin{aligned}
			\dot{V} &=\frac{\Lambda}{2}\gamma(h)\tr \Sigma \bar K\tr \bar K \Sigma\tr \dhdq \begin{bmatrix}\dhdqbar\tr & 0\end{bmatrix} \lambda\\
			&= \frac{\Lambda}{2}\gamma(h)\tr \Sigma \bar K\tr \bar K \Sigma\tr \dhdq \begin{bmatrix}\dhdqbar\tr & 0\end{bmatrix} \Sigma_0 \bar\lambda\\
			&= -\frac{\Lambda}{4}\gamma(h)\tr \Sigma \bar K\tr \bar K \Sigma\tr \dhdq \begin{bmatrix}\dhdqbar\tr & 0\end{bmatrix} \Sigma_0 \bar A_K\inv \bar\gamma_0(h)\\
			&= -\frac{\Lambda}{4}\gamma(h)\tr \Sigma \bar K\tr \bar K \Sigma\tr \dhdq \dhdqbar\tr \Sigma 4l\tilde A \Sigma\tr \gamma(h)\\
			&= -\Lambda l\gamma(h)\tr \Sigma \underbrace{\bar K\tr \bar K \dhdqbar \dhdqbar\tr \tilde A}_{\tilde A_K} \Sigma\tr \gamma(h),
		\end{aligned}
	\end{equation}
	where $4l\tilde{A}$ is the upper-left block of $\bar A_K\inv$, and the last equality holds owing to the structure of the matrix $\tilde A$.
	
	The matrix $\tilde{A}$ in~\eqref{eq:vdotpriorities} can be decomposed as follows:
	\begin{equation}
		\label{eq:Atilde}
		\tilde A = \bar A_0\inv + A^\prime,
	\end{equation}
	where $\bar A_0$ has been defined in \eqref{eq:A0bar} and
	\begin{equation}
		A^\prime = \bar A_0\inv \bar K\tr \Bigg(l \bar K \dhdqbar \bar A\inv \dhdqbar\tr \bar K\tr\Bigg)^{-1} \bar K \bar A_0\inv.
	\end{equation}
	The matrix $\bar A\inv$ is defined analogously to~\eqref{eq:vdotcontd} as follows:
	\begin{equation}
		\bar A\inv = \left(I + l \dhdqbar\tr\dhdqbar\right)\inv.
	\end{equation}
	Notice that $\tilde A$, and in particular $A^\prime$, by the assumption that $\rankcol\left(K\dhdqsmall\right)$ loses rank only on isolated points, exist for almost any configuration $q$.
	
	Thus, proceeding similarly to \eqref{eq:vdotcontd}, $\tilde A_K$ in \eqref{eq:vdotpriorities} can be simplified as follows:
	\begin{equation}
        \label{eq:AtildeKsimpl}
		\begin{aligned}
			&\tilde A_k = \bar K\tr \bar K \dhdqbar \dhdqbar\tr \tilde A\\
			&= \bar K\tr \bar K \dhdqbar \dhdqbar\tr \bar A_0\inv \\
			&+ \bar K\tr \bar K \dhdqbar \dhdqbar\tr \bar A_0\inv \bar K\tr \Bigg(l \bar K \dhdqbar \bar A\inv \dhdqbar\tr \bar K\tr\Bigg)^{-1} \bar K \bar A_0\inv\\
			&= \bar K\tr \bar K \dhdqbar \dhdqbar\tr \bar A_0\inv \\
			&+ \frac{\bar K\tr}{l} \Bigg(l \bar K \dhdqbar \bar A\inv \dhdqbar\tr \bar K\tr\Bigg)\Bigg(l \bar K \dhdqbar \bar A\inv \dhdqbar\tr \bar K\tr\Bigg)^{-1} \bar K \bar A_0\inv\\
			&= \bar K\tr \bar K \dhdqbar \dhdqbar\tr \bar A_0\inv + \frac{\bar K\tr \bar K}{l} \bar A_0\inv\\
			&= \frac{\bar K\tr \bar K}{l} \left( l \dhdqbar \dhdqbar\tr + I \right) \bar A_0\inv = \frac{\bar K\tr \bar K}{l}.
		\end{aligned}
	\end{equation}
	Hence,
	\begin{equation}
		\label{eq:Vdotfinal}
        \begin{aligned}            
		\dot V &= -\Lambda\gamma(h)\tr \Sigma \bar K\tr \bar K \Sigma\tr \gamma(h)\\
        &= -\Lambda z\tr z = -\Lambda V(z) \le 0,
        \end{aligned}
	\end{equation}
	and, therefore, $V(z)\to0$, $z\to0$, and $\sigma(q)\to\mc S$ defined in~\eqref{eq:stableSetPriorities}.
\end{proof}

\begin{remark}[Finitely many switches of active constraints]
	The assumption on finitely many switches of the active set in Proposition~\ref{prop:cbfpriority} seems restrictive. In fact, it might not be satisfied especially in the case of the execution of repetitive tasks, such as continuously moving the robot end-effector along a given trajectory. Nevertheless, it can be lifted by imposing additional conditions on the CBFs encoding the tasks. In particular, if the task CBFs do not explicitly depend on time and are convex, then convergence of the input $u$ is guaranteed and the condition of finitely many switches can be relaxed into finitely many switches in any finite interval. See discussion in \cite{notomista2021resilient} for details and proof.
\end{remark}

\begin{remark}
	From \eqref{eq:Atilde}, it is clear how task priorities affect the set to which the task variable converge, namely by means of the matrix $A^\prime$. From \eqref{eq:Vdotfinal}, it can be seen how the prioritization matrix $K$ effectively scales the values of the CBFs encoding the tasks which are driven by the solution of the optimization \eqref{eq:mainQPpriorities} to the null space of $K$.
\end{remark}

\begin{example}[Loss of rank in \eqref{eq:rank}]\label{ex:rank}In this example, we show an example of task definition and robot configuration that would lead to the loss of rank in \eqref{eq:rank}, with the purpose of illustrating that the assumption easily holds in practical applications.
	Consider a Cartesian robot in the plane and two planar tasks $T_1$, $T_2$, with priorities $T_1\prec T_2$, consisting in driving the end-effector to the configurations $\sigma_1 = [-1, \, 0]\tr$, $\sigma_2 = [1, \, 0]\tr$, respectively. The task functions have been defined as in~\cite{notomista2020set} with $\sigma_d(t)$ equal to the constant $\sigma_1$ and $\sigma_2$ for task $T_1$ and $T_2$, respectively. In these settings, the tasks are clearly dependent. A simple calculation shows that, for a value of $\kappa = 1000$, the point such that $\rankcol\left(\bar K \Sigma\tr \dhdqsmall\right) < |\mc I|-1$ is $x^* = [- 1.0004,\,0]\tr$, when all constraints are active. This corresponds to a robot configuration such that the gradient of the tasks whose constraints are active are collinear and appropriately scaled based on the value of the chosen parameter~$l$.
	
	Figure~\ref{fig:rank_exp} shows the results of a simulation of the Cartesian robot, initially at $x = [-2, \, 0]\tr$, controlled by the solution $u$ of \eqref{eq:mainQPpriorities} to execute the two prioritized tasks. As can be seen, although the robot passes through $x^*$, this does not prevent it from executing the prioritized tasks. In particular, from Fig.~\ref{fig:rank_exp:rank} it is possible to see that the loss of rank\footnote{In this example, the rank during the simulation has been evaluated by the MATLAB built-in functionality \texttt{rank(A,tol)} that computes the rank as the number of singular values of a matrix \texttt{A} that are larger than a tolerance \texttt{tol}. In this simulation, the latter has been set to $5e^{-3}$.} happens at iteration 44, while from Fig.~\ref{fig:rank_exp:h} and Fig.~\ref{fig:rank_exp:u} we notice continuity in the task functions $h_1$ and $h_2$ and in the input $u$ at the same time instant.
\end{example}
\begin{figure}[t]
	\centering
	\subfloat[\label{fig:rank_exp:rank}]{\includegraphics[trim = {0 26px 0 0}, clip, width=\linewidth]{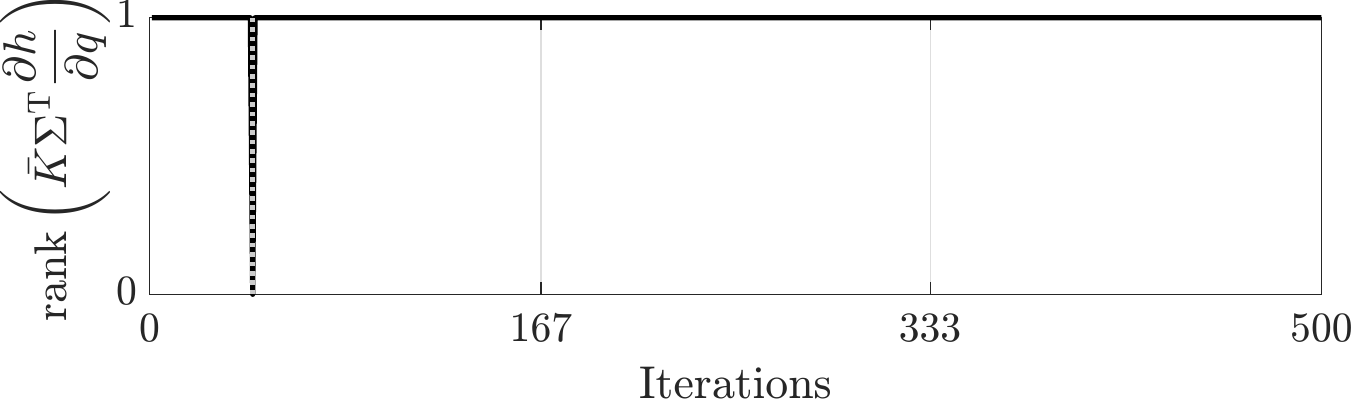}} \\ 
	\hfill\subfloat[\label{fig:rank_exp:h}]{\includegraphics[trim = {0 26px 0 0}, clip, width=0.965\linewidth]{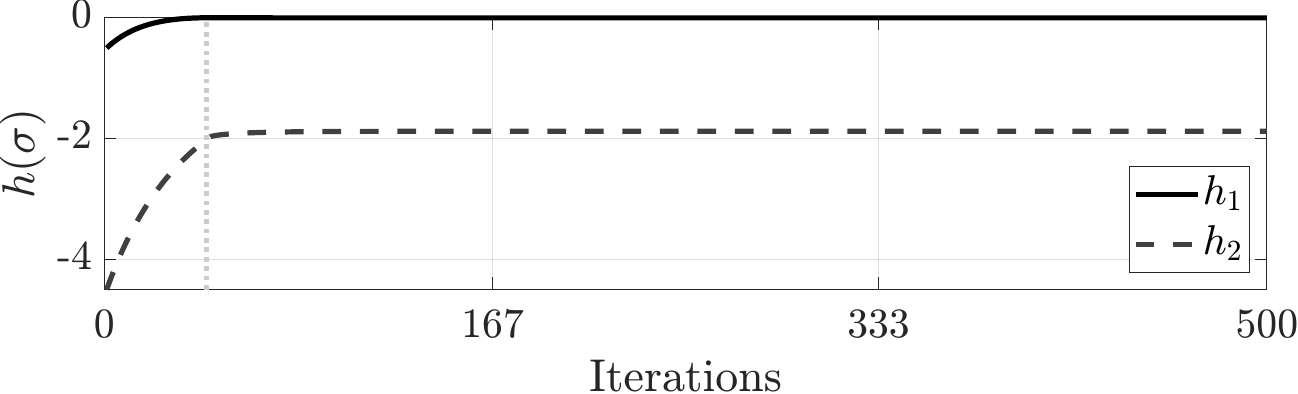}} \\
	\hfill\subfloat[\label{fig:rank_exp:u}]{\includegraphics[width=0.965\linewidth]{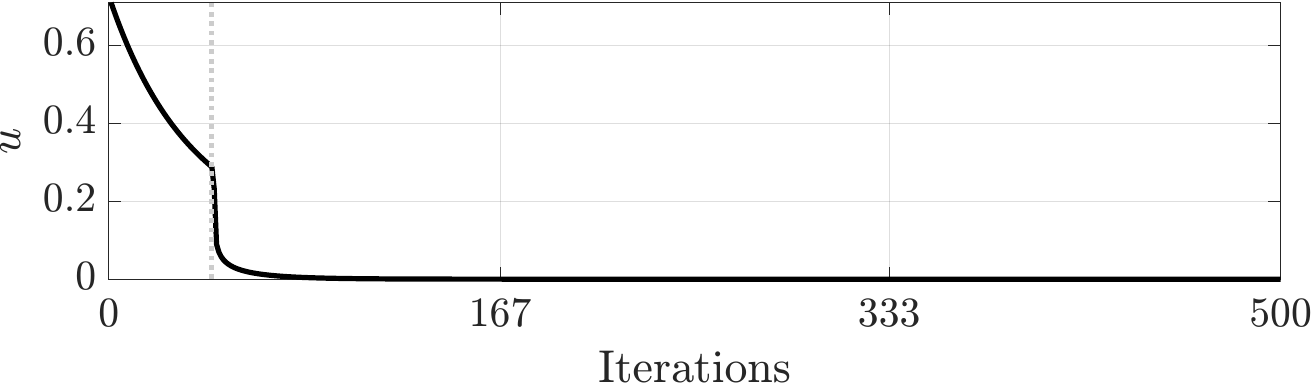}}    
	\caption{Simulation of the scenario presented in Example~\ref{ex:rank} to show the effects of the loss of rank of the matrix $\rankcol\left(\bar K \Sigma\tr \dhdqsmall\right)$. Figure~\protect\ref{fig:rank_exp:rank} shows the rank evaluated, as explained in Example~\ref{ex:rank}, during the course of the simulation. Despite the loss of rank happening at iteration 44 (vertical dotted line), Figures~\protect\ref{fig:rank_exp:h} and \protect\ref{fig:rank_exp:u} show that both the task functions $h_1$, $h_2$, and the robot input $u$ are continuous.}
	\label{fig:rank_exp}
\end{figure}

\begin{remark}[Parameter tuning]
	\label{rmk:parameters}
	As discussed in Section~\ref{subsec:prioritization} and confirmed in this section through Proposition~\ref{prop:cbfpriority}, the choice of prioritization matrix $K$ does not influence the feasibility or the convergence of the prioritized \textit{execution} of a stack of tasks. Nevertheless, the way tasks are accomplished depends on the parameters of $K$. In Section~\ref{subsec:prioritizationmatrixAUTO}, a natural way of automatically selecting a prioritization constraint is shown. As a result, compared to the execution of multiple prioritized Jacobian-based tasks \cite{antonelli2009tro}, using ESB tasks does not require a careful choice of gains to ensure the convergence of the task variables to zero.
\end{remark}

\section{Planning of Switching Task Prioritizations}
\label{sec:switch}

In the previous section, we proved the stability of executing multiple ESB tasks in a prioritized fashion. This section illustrates an algorithm to reorder priorities among ESB tasks during their execution, a feature that is not considered in existing works on ESB tasks. In particular, Proposition~\ref{prop:planning_switching} presents a stack transition method which consists in weighting the control inputs $u_1$ and $u_2$ corresponding to the execution of two stacks prioritized by the prioritization matrices $K_1$ and $K_2$, respectively, in order to transition between them. The same approach is applicable to the case where priorities are enforced using the adaptive prioritization matrix approach as in Section~\ref{subsec:prioritizationmatrixAUTO}.

\begin{proposition}\label{prop:planning_switching}
	Let $u_1$ and $u_2$ be the control inputs corresponding to the execution of the stacks of tasks prioritized by the prioritization matrices $K_1$ and $K_2$. Let the robot execute the control input
	\begin{equation}
		\label{eq:inputtoswitch}
		u = s(t) u_1 + (1-s(t)) u_2
	\end{equation}
	in the time interval $t\in[0,T]$, where $s(t) = 1-\frac{t}{T}$. Then, the transition from executing the stack of tasks with prioritization matrix $K_1$ to $K_2$ is continuous.
	
	Moreover, at each time instant $t\in[0,T]$, relative priorities that do not change in the initial and final prioritized stack are maintained throughout the switching time interval.
\end{proposition}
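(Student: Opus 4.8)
The plan is to treat the two claims separately, handling continuity first and the preservation of unchanged priorities second, with the affine-in-$u$ structure of the task-function rates $\dot h_i$ serving as the common workhorse.

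For continuity, I would first note that $s(t) = 1 - t/T$ is continuous (indeed linear) on $[0,T]$, with $s(0)=1$ and $s(T)=0$. Since $u_1$ and $u_2$ are each the solution of the strictly convex QP~\eqref{eq:mainQPpriorities} (with prioritization matrices $K_1$ and $K_2$, respectively), and the data of that QP depend continuously on the state $x$ and on $t$, standard sensitivity results for strictly convex quadratic programs guarantee that $u_1$ and $u_2$ are continuous in time. The convex combination $u = s(t)u_1 + (1-s(t))u_2$ is therefore continuous on $[0,T]$. Finally, the endpoint values $u(0)=u_1$ and $u(T)=u_2$ match the pure-stack controllers used immediately before and after the switching interval, so no discontinuity is introduced at the boundaries, and the overall control signal is continuous across the transition.

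For the preservation of unchanged relative priorities, the key structural fact is that, because the system~\eqref{eq:ca-dyn} is control affine and each $h_i$ depends on the state, the rate $\dot h_i$ is affine in $u$. Writing $\phi_i(u) = \dot h_i + \gamma_i(h_i)$ for the left-hand side of the $i$-th CBF constraint and using $s+(1-s)=1$, the constant part cancels correctly and one obtains $\phi_i(u) = s\,\phi_i(u_1) + (1-s)\,\phi_i(u_2)$. Interpreting the induced relaxation of task $i$ under a given input as the slack $\tilde\delta_i = -\phi_i(u)$ that makes its constraint tight, this identity yields $\tilde\delta_i = s\,\delta_i^{(1)} + (1-s)\,\delta_i^{(2)}$, where $\delta^{(1)}$ and $\delta^{(2)}$ are the slack vectors associated with $u_1$ and $u_2$. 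A relative priority $T_i \prec T_j$ that is unchanged between the two stacks corresponds to a row of the prioritization constraint (e.g.\ $\delta_i \le \delta_j/\kappa$) common to both $K_1$ and $K_2$; hence both $\delta^{(1)}$ and $\delta^{(2)}$ satisfy it. Convexity then gives $\tilde\delta_i = s\,\delta_i^{(1)} + (1-s)\,\delta_i^{(2)} \le s\,\delta_j^{(1)}/\kappa + (1-s)\,\delta_j^{(2)}/\kappa = \tilde\delta_j/\kappa$ for every $s\in[0,1]$, so the common priority is enforced at every instant of the switching interval.

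The routine parts are the affine identity for $\phi_i$ and the final convexity inequality. The main obstacle is making precise the notion that the blended input—which is not itself the solution of any single QP—still realizes a well-defined effective slack for each task, and verifying that, on the common priority rows, the relevant constraints may be taken tight (or that the inequality persists even when they are not) so that $\tilde\delta_i = s\,\delta_i^{(1)} + (1-s)\,\delta_i^{(2)}$ and the convex-combination argument applies cleanly. Care is also needed to align the row indexing of $K_1$ and $K_2$ so that ``unchanged relative priority'' literally means a shared row, and to observe that the continuity argument tolerates the isolated rank-loss configurations of Proposition~\ref{prop:cbfpriority}, as illustrated numerically in Example~\ref{ex:rank}.
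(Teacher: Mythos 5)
Your proposal is correct and follows essentially the same route as the paper's proof: continuity of $u$ from continuity of $s$, $u_1$, $u_2$, and preservation of shared priorities via the affine-in-$u$ identity $\dot h + \gamma(h)$ evaluated at the blended input, which makes the effective slack the convex combination $s\,\delta^{(1)}+(1-s)\,\delta^{(2)}$ and hence inherits any prioritization inequality common to both stacks. Your treatment is in fact slightly more careful than the paper's on the sign convention for the slack and on what ``effective slack'' means for an input that does not itself solve either QP, but the underlying argument is identical.
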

\begin{proof}
    Continuity of the controller $u$ in \eqref{eq:inputtoswitch} follows from the continuity of $u_1$, $u_2$ (as shown in \cite{notomista2020set} based on \cite{morris2015continuity}), and $s(t)$.
    
    It remains to show that the relative priority between tasks, whose relative priority remains unchanged, is kept during the transition. To this end, recalling that the component of $\gamma_0(h)$ is different from zero when the corresponding $i$-th task is not fully completed, from Proposition~\ref{prop:cbfpriority}, it follows that when tasks are not completed, one has that $\frac{\partial h}{\partial x} u(t) + \gamma(h) = \delta(t)$, for $u(t)$ and $\delta(t)$ solutions of the optimization program \eqref{eq:mainQPpriorities} at time $t$.
	
	Then, by the definitions of $u_1$ and $u_2$, one can write:
	\begin{equation}
		\begin{aligned}
			&\frac{\partial h}{\partial x} u + \gamma(h)\\
			=\,& \frac{\partial h}{\partial x} \big(s(t) u_1 + (1-s(t)) u_2\big) + \gamma(h)\\
			=\,& s(t) \left( \frac{\partial h}{\partial x} u_1 + \gamma(h) \right) +\,(1-s(t)) \left( \frac{\partial h}{\partial x} u_2 + \gamma(h) \right)\\
			=\,&s(t) \delta_1 + (1-s(t)) \delta_2 =: \Delta,
		\end{aligned}
	\end{equation}
	where $\Delta$ is defined to be the convex combination of $\delta_1$ and $\delta_2$. It is assumed that $K_1\delta_1\ge0$ and $K_2\delta_2\ge0$. If task $T_i\prec T_j$ in both stacks, then $\delta_{1,i} < \delta_{1,j}$ and $\delta_{2,i} < \delta_{2,j}$,
	then $\Delta_i(t) = s(t) \delta_{1,i} + (1-s(t)) \delta_{2,i} < s(t) \delta_{1,j} + (1-s(t)) \delta_{2,j} = \Delta_j(t)$, which means that the relative priority between tasks, whose relative priority does not change between initial and final stacks, remains unchanged throughout the transition.
\end{proof}

\begin{remark}[Minimal invasivity of input-based stack transition]
	As a result of Proposition~\ref{prop:planning_switching}, the input-based stack transition is minimally invasive in the sense that only the tasks that change their position in the final prioritization stack with respect to the initial one are affected during the switching phase.
\end{remark}

Besides the amenable minimal invasivity property discussed in the previous remark, another important property of this priority stack switching approach is the resulting stability guarantees on the execution of the prioritized stacks of tasks, as highlighted in the following proposition.

\begin{proposition}
	Consider a robot executing a stack of tasks characterized by the prioritization matrix $K_1$ for $t<0$. In the time interval $t\in[0,T]$, the initial stack is switched to the one characterized by the prioritization matrix $K_2$, using the control input \eqref{eq:inputtoswitch}, where
	\begin{equation}
		\label{eq:softdefinition}
		\begin{cases}
			s(t) \equiv 1 \quad &\forall t\le 0\\
			s(t)\in[0,1] \quad &t\in[0,T]\\
			s(t) \equiv 0 \quad &\forall t\ge T
		\end{cases}
	\end{equation}
	Assuming $\gamma$ is continuously differentiable, and the CBFs encoding the ESB tasks have bounded derivatives, then for $z_2$ defined (similarly to \eqref{eq:z}) as $z_2=\bar K_2\Sigma_2\tr\gamma(h)$,
	one has that $z_2(t)\to0$ as $t\to\infty$.
\end{proposition}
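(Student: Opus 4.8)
The plan is to break the time axis at $t=T$ and treat the finite transition interval $[0,T]$ and the terminal interval $[T,\infty)$ separately, since $z_2$ is tailored to the $K_2$ stack and the conclusion is an asymptotic one. On $[T,\infty)$ the switching signal satisfies $s(t)\equiv 0$, so the applied input collapses to $u=u_2$, i.e. exactly the optimizer of \eqref{eq:mainQPpriorities} associated with the prioritization matrix $K_2$. The claim $z_2\to 0$ will then be essentially a restatement of Proposition~\ref{prop:cbfpriority} applied to this terminal phase, and the sole purpose of analyzing $[0,T]$ is to certify that the state reached at $t=T$ is a finite, admissible initial condition for it.

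First I would establish existence and boundedness of the closed-loop trajectory on $[0,T]$. The inputs $u_1$ and $u_2$ are continuous (as used in the proof of Proposition~\ref{prop:planning_switching}, following \cite{notomista2020set}) and $s$ is continuous, so the blend $u=s u_1+(1-s)u_2$ is continuous. The hypothesis that the CBFs have bounded derivatives bounds the data entering the two QPs and hence $\|u_1\|$ and $\|u_2\|$; consequently $u$ is bounded on the compact interval $[0,T]$. For the kinematic model \eqref{eq:robotkinmodel}, where $u=\dot q$, one has $q(t)=q(0)+\int_0^t u\,\de\tau$, which stays finite on $[0,T]$; in particular $q(T)$ — and with it $z_2(T)=\bar K_2\Sigma_2\tr\gamma\!\left(h(\sigma(q(T)))\right)$ — is finite.

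Next, on $[T,\infty)$ I would invoke Proposition~\ref{prop:cbfpriority} with $K=K_2$. Since $u\equiv u_2$ there, the closed loop is precisely the system analyzed in that proposition, and the candidate $V(z_2)=\tfrac12\|z_2\|^2$ reproduces the computation culminating in \eqref{eq:Vdotfinal}, giving $\dot V\le -L\,V\le 0$ for almost every $t\ge T$. Continuous differentiability of $\gamma$ makes this bound valid wherever the active set is locally constant; combined with the standing assumption that the active-set index changes only finitely many times and with the rank condition \eqref{eq:rank} holding off isolated configurations, $V$ is absolutely continuous and nonincreasing. A comparison argument then yields $V(z_2(t))\le V(z_2(T))\,e^{-L(t-T)}\to 0$, hence $z_2(t)\to 0$ as $t\to\infty$.

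The main obstacle I anticipate is not the terminal phase, which is a near-verbatim application of Proposition~\ref{prop:cbfpriority}, but the bookkeeping that rules out pathological behavior of the blended input on $[0,T]$. Specifically, one must verify that the transition cannot inject infinitely many active-set switches into a finite window and cannot steer $q$ onto one of the isolated rank-deficient configurations precisely at $t=T$ in a way that invalidates \eqref{eq:rank}; boundedness of the CBF derivatives and finiteness of $T$ are exactly the ingredients that preclude these degeneracies, so I would make that dependence explicit rather than leave it implicit.
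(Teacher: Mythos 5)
Your proof is correct but takes a genuinely different route from the paper's. You decompose the time axis at $t=T$: on the compact transition interval you only establish that the blended input keeps the trajectory finite, so that $z_2(T)$ is a finite initial condition, and on $[T,\infty)$ you apply Proposition~\ref{prop:cbfpriority} verbatim with $K=K_2$ to get $\dot V\le -LV$ and exponential decay. The paper instead writes the $z_2$-dynamics during the blend as a nominal closed loop $(1-s(t))f_{\mathrm{cl}}(q)$ perturbed by the disturbance term $B(q,t)w$ with $w=\dot q_1^\star$, shows that $V(z_2)=\tfrac12\|z_2\|^2$ is an ISS-Lyapunov function (i.e., $\dot V\le-\tfrac12(1-\lambda)\|z_2\|^2$ whenever $\|z_2\|\ge\tfrac{2\|B\|}{\lambda}\|w\|$), and invokes input-to-state stability to bound $\|z_2(T)\|$ before handing off to Proposition~\ref{prop:cbfpriority}. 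Your version is more elementary and isolates the essential logic cleanly — the transition phase only needs to deliver a finite state; the ISS version buys a quantitative estimate of how far $z_2$ can drift during the blend as a function of $\|w\|$ (graceful degradation of the $K_2$ stack rather than mere finiteness) and would survive transitions where $s$ is not compactly supported. One caveat applies to both arguments and deserves to be made explicit in yours: bounding $\|u_1\|$ and $\|u_2\|$ on $[0,T]$ from the KKT expression $\dot q=-l\,\frac{\partial h}{\partial q}\tr\delta$ requires bounding $\gamma(h(\sigma(q)))$ along the trajectory, which itself depends on $q$ staying bounded; the clean fix is a Gr\"onwall/no-finite-escape argument using the Lipschitz continuity of $\gamma$ and the bounded CBF derivatives to get $\|\dot q\|\le c_1+c_2\|q\|$ on $[0,T]$, rather than asserting boundedness of the QP data directly.
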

\begin{proof}
	Let us start by considering the dynamics of $z_2$ resulting from the application of the input \eqref{eq:inputtoswitch} required to switch prioritized stack:
	\begin{equation}
		\label{eq:z2dynamics}
		\begin{aligned}
			\dot z_2 &= \underbrace{\bar K_2\Sigma_2\tr\frac{\partial \gamma}{\partial h}\frac{\partial h}{\partial q}}_{=:\Psi_2(q)}\dot q\\
			&=\Psi_2(q)s(t)\dot q_1^\star + \underbrace{\Psi_2(q)(1-s(t))\dot q_2^\star}_{=:(1-s(t))f_\text{cl}(q)}\\
			&=(1-s(t))f_\text{cl}(q) + s(t)\Psi_2(q)\dot q_1^\star,
		\end{aligned}
	\end{equation}
	where $\dot q_1^\star$ and $\dot q_2^\star$ are the solutions of~\eqref{eq:mainQPpriorities} with $K$ equal to $K_1$ and $K_2$, respectively.
	
	Consider the Lyapunov function $V(z_2)=\frac{1}{2}\|z_2\|^2$ defined as in \eqref{eq:lyapunovfunct}. Proposition~\ref{prop:cbfpriority} shows that $V$ is a Lyapunov function for the zero-input system $\dot z_2 = (1-s(t)) f_\text{cl}(q)$,
	owing to the properties of $s(t)$ in \eqref{eq:softdefinition}. Using the results of Proposition~\ref{prop:cbfpriority}, the time derivative of $V$ along the trajectories of the system \eqref{eq:z2dynamics} evaluates to:
	\begin{equation}
			\dot V \le -(1-s(t))\frac{1}{2}\|z_2\|^2 + s(t) z_2\tr \Psi_2(q)\dot q_1^\star.
	\end{equation}
	Notice that, by Definition~\ref{def:esbt}, $\partial h / \partial q$ is bounded on a compact set of robot configurations, $q$. Moreover, $s(t)\in[0,1]$ when $t\in[0,T]$ and $\dot{q}_1^*$ is the (bounded) solution of~\eqref{eq:mainQPpriorities}. Therefore, the term $s(t)\Psi_2(q) \dot q_1^\star$ is uniformly bounded. Thus, $V$ is an ISS-Lyapunov function for \eqref{eq:z2dynamics} \cite{sontag2008input}. By Corollary 2.2 in \cite{edwards2000input}, the system \eqref{eq:z2dynamics} is input-to-state stable, i.e., there exist a class $\mc K\mc L$ function $\beta$ and a class $\mc K$ function $\chi$ such that $\|z_2(t)\| \le \beta\left(\|z_2\|,t\right) + \chi\left(\max_{0\le\tau\le t}\|\dot q_1^\star(\tau)\|\right)$. Hence, $\|z_2(T)\|<\infty$, and, by Proposition~\ref{prop:cbfpriority}, $z_2(t)\to0$ as $t\to\infty$.
\end{proof}

This proposition shows that, under mild assumptions (bounded derivative) on the CBFs encoding the ESB tasks to execute, the switch between two prioritized stacks of tasks can be executed without compromising the stability of the robotic system. This will be highlighted in the simulations and experiments reported in Section~\ref{sec:simexp}.

So far, we considered only velocity-controlled robots. Nevertheless, in many practical applications, having the ability of controlling the joint torques is required. The next section is devoted to the specialization of the proposed prioritized task execution framework to torque-controlled robotic systems, accounting, in particular, for the presence of torque bounds.

\section{Extended Set-based Task Execution with Torque Bounds}
\label{sec:dynamics}

In the previous sections, we analyzed the execution of multiple prioritized tasks using a kinematic model of robotic manipulators where there is a nonlinear single-integrator relation---given by \eqref{eq:robotkinmodel}---between the velocity in the task space and the velocity in the joint space. In this section, we extend the approach developed so far to robot dynamic models.

\subsection{Dynamic Model}

In task-prioritized control of robotic manipulators executing highly dynamic tasks---i.e., tasks requiring large accelerations, so that the effects of robot inertia are not negligible anymore---the following robot dynamic model is typically employed:
\begin{equation}
	\label{eq:robotdynmodel}
	D(q) \ddot q + C(q,\dot q) \dot q + F_v \dot q + g(q) = \tau,
\end{equation}
where $q$ are the joint angles, $D(q)$ is the inertia matrix, $C(q,\dot q)$ accounts for centrifugal and Coriolis effects, $F_v \dot q$ models viscous friction (static friction is neglected), $g(q)$ is the effect of gravity, and $\tau$ is the vector of joint input torques (see, e.g., \cite{spong2006robot} or \cite{siciliano2010robotics}).

The control affine model introduced in \eqref{eq:ca-dyn} is amenable to capture also the dynamics of robotic manipulators. For the remainder of this section, we let the state of the robot modeled by \eqref{eq:robotdynmodel} be denoted by $x = [q\tr,\dot q\tr]\tr$ and its input $u=\tau$, so that \eqref{eq:robotdynmodel} can be written in the control affine form \eqref{eq:ca-dyn} with
\begin{equation}
	f(x) = \begin{bmatrix}
		\dot q\\
		-D\inv(q)\big(C(q,\dot q)\dot q + F_v \dot q+g(q)\big)
	\end{bmatrix}
\label{eq:f}
\end{equation}
and
\begin{equation}
	g(x) = \begin{bmatrix}
		0\\
		D\inv(q)
	\end{bmatrix}.
\label{eq:g}
\end{equation}

Without loss of generality, in this section we only consider ESB tasks which do not depend explicitly on time. A task $T_i$, defined in terms of the task variable $\sigma_i$ via the CBF $h_i(\sigma_i)$, can be executed by enforcing the constraint
\begin{equation}
	\begin{aligned}
		\label{eq:cbfineqtask}
		&\dot h_i(\sigma_i,x,u) + \gamma(h_i(\sigma_i))\\
		=& \frac{\partial h_i}{\partial \sigma_i} \frac{\partial \sigma_i}{\partial x} f(x) + \frac{\partial h_i}{\partial \sigma_i} \frac{\partial \sigma_i}{\partial x} g(x) u + \gamma(h_i(\sigma_i))\ge0.
	\end{aligned}
\end{equation}
If $\frac{\partial h_i}{\partial \sigma_i} \frac{\partial \sigma_i}{\partial x} g(x) = 0$,
then the CBF $h_i$ is said to have relative degree higher than 1. At this point, one could make use of exponential CBFs~\cite{nguyen2016exponential} or, more generally, define the following auxiliary CBF (as in \cite{notomista2019persistification}):
\begin{equation}
	\label{eq:hiprime}
	h_i^\prime(\sigma_i,x) = \dot h_i(\sigma_i,x) + \gamma_i (h_i(\sigma_i)),
\end{equation}
and then enforce the following condition analogous to \eqref{eq:cbfineqtask}: $\dot h^\prime_i(\sigma_i,x,u) + \gamma(h^\prime_i(\sigma_i)) \ge 0$,
where the term $\dot h^\prime_i(\sigma_i,x,u)$ explicitly depends on $u$ as the dynamical system \eqref{eq:robotdynmodel} is a second-order system.

In general, we define $h_i^\prime$ to be:
\begin{equation}
	\label{eq:hprimegeneral}
	h^\prime_i(\sigma_i,x) = \begin{cases}
		h_i(\sigma_i) &\text{ if $h_i$ has relative degree 1}\\
		\text{Eq.}~\eqref{eq:hiprime} &\text{ if $h_i$ has relative degree 2},
	\end{cases}
\end{equation}
and solve the following optimization problem to synthesize the controller (i.e., joint torques) required to execute a prioritized stack of tasks:
\begin{equation}
	\label{eq:mainqpdyn}
	\begin{aligned}
		\minimize_{u,\delta} &\|u\|^2 + l\|\delta\|^2 \\
		\st & \frac{\partial h^\prime_i}{\partial \sigma_i} \frac{\partial \sigma_i}{\partial x} f(x) + \frac{\partial h^\prime_i}{\partial \sigma_i} \frac{\partial \sigma_i}{\partial x} g(x) u \\
		&+ \gamma\prime_i(h^\prime_i(\sigma_i)) \ge - \delta_{i}\quad\forall i \in \{1,\ldots,M\}\\
		&K\delta\le0.
	\end{aligned}
\end{equation}

\subsection{Actuation Constraints}

The solution to \eqref{eq:mainqpdyn} may not be physically implementable in the presence of torque bounds, which are expressed as box constraints on the input optimization variable $u$. However, if one attempts to solve \eqref{eq:mainqpdyn} with the additional constraint $\|u\|_\infty \le u_\mathrm{max} < \infty$,
tasks might not be executed as desired, as the optimal control input solution to the optimization program results in large relaxation variables $\delta$. The problem becomes even more serious when multiple safety-critical constraints are present. In fact, by definition, safety-critical constraints should not be relaxed, and, if more than one such constraint is present, the optimization program might be infeasible.

Actuation constraints in multi-task execution have been considered already in the context of Jacobian-based tasks. The so-called \emph{task scaling} procedure was designed to slow down the execution of the task by scaling down joint velocities and accelerations not to exceed the maximum allowed values. However, as a result, the task execution performance may be degraded. Several approaches to this problem have been proposed (see, e.g., \cite{faroni2020inverse} or \cite{lee2022quadratic} for recent solutions). In \cite{Flacco2015}, the authors present an algorithm to select the least possible task-scaling in order to control redundant manipulators in presence of hard joint constraints, in terms of joint positions, velocities, and accelerations.

A constraint-based control strategy for multi-task execution has been presented in \cite{basso2020task}, where the authors consider the robot dynamic model \eqref{eq:robotdynmodel} and include torque bounds in the optimization program required to evaluate the input torques to execute the robotic tasks. Nevertheless, a formal analysis on the quality of task execution in presence of input bounds is lacking. Relaxation variables allow the optimization program to remain feasible at the cost of degrading the performance of task execution, similarly to the task scaling procedure mentioned above.

In this paper, we built upon the CBF-based method proposed in \cite{ames2020integral} to account for input bounds. The solution consists in dynamically extending a system in order to turn the input vector into a state and be able to enforce input constraint using \textit{integral CBFs}, by treating them as state constaints. However, the main result in \cite{ames2020integral} hinges on the feasibility of the formulated QP, which is not guaranteed. Thus, a careful parameter choice has to be made in order to ensure the feasibility of the optimization program at each point in time during online operations.

In this section, we present a systematic approach to find the best possible choice of parameters which are able to ensure that the optimization program defined to synthesize the controller value will always remain feasible, even in presence of input bounds and multiple safety-critical tasks. This will be achieved without degrading the performance of task execution via relaxation variables.

More specifically, the problem of the feasibility of the task execution in presence of input bounds can be solved by an appropriate choice of the class $\mc K$ functions $\gamma_i$ used to define $h_i^\prime$ in \eqref{eq:hiprime}. In the following, we consider linear class $\mc K$ functions, namely $\gamma_i(s) = \Gamma_i s$,
with $\Gamma_i>0$ for all $i$. The problem now becomes that of finding values of the parameters $\Gamma_i$ so that \eqref{eq:robotdynmodel} is always feasible, even in presence of multiple safety-critical tasks and/or input bounds. This objective can be cast as the following semi-infinite program:
\begin{equation}
	\label{eq:maxgamma}
	\begin{aligned}
		\maximize_{u,\Gamma_1,\ldots,\Gamma_M} & \sum_{i=1}^M \Gamma_i\\
		\st & \dot h_i^\prime(\sigma_i,x,u,\Gamma_i) \ge 0 \\
		&\|u\|_\infty\le u_\mathrm{max}\\
		&\Gamma_i\ge0\\
		& \forall x\in\bigcup\limits_{j\in\{1,\ldots,M\}}\hspace{-0.15cm}\big\{x\in\R^{n_x}\colon h^\prime_j(\sigma_j(x))=0\big\}\\
		& \forall i\in\{1,\ldots,M\}.
	\end{aligned}
\end{equation}
The maximization is justified by the fact that we aim at finding the largest possible approximation of the feasible set which is affine in $\Gamma_i$ and hence its size grows with $\Gamma_i$. The reason to apply the constraints $\dot h_i^\prime(\sigma_i,x,u,\Gamma_i)\ge0$ only on the union of the boundaries of the zero superlevel sets of the functions $h^\prime_i(\sigma_i)$ follows from Nagumo's theorem \cite{nagumo1942lage}. This allows us to significantly reduce the complexity of solving \eqref{eq:maxgamma}. The latter, moreover, despite being a semi-infinite program, can be computed offline only once in order to obtain optimal values of the parameters $\Gamma_i$.
To summarize, employing the $\Gamma_i$ solution of \eqref{eq:maxgamma} in the definition of $h^\prime_i$ in \eqref{eq:hprimegeneral} has two beneficial effects:
\begin{enumerate}
	\item Tasks need not be relaxed because of torque bounds
	\item The task prioritization and execution optimization program. \eqref{eq:mainqpdyn} is always feasible even in presence of multiple safety-critical constraints and torque bounds.
\end{enumerate}

\begin{example}[Task execution with torque bounds]
	\label{exmp:torquebounds}
	
	Let us compare the behavior of a dynamic robotic manipulator executing an ESB task to control the end-effector position with input torque constraints, using saturation and integral CBFs, respectively. The ESB task consists in driving the end-effector of a 3-link serial manipulator to a desired configuration in the Cartesian space. As a consequence of the fact that there are no joint speed constraints, the solution of \eqref{eq:maxgamma} in terms of $\Gamma_i$ is unbounded above, i.e., an arbitrarily high value of $\Gamma_i$ can be chosen in the definition of $h_i^\prime$,
	
	\begin{figure*}
		\begin{minipage}{0.5\textwidth}
			\centering
            \subfloat[\label{fig:torqueboundsexample:sat:robot}]{\includegraphics[width=0.5\linewidth]{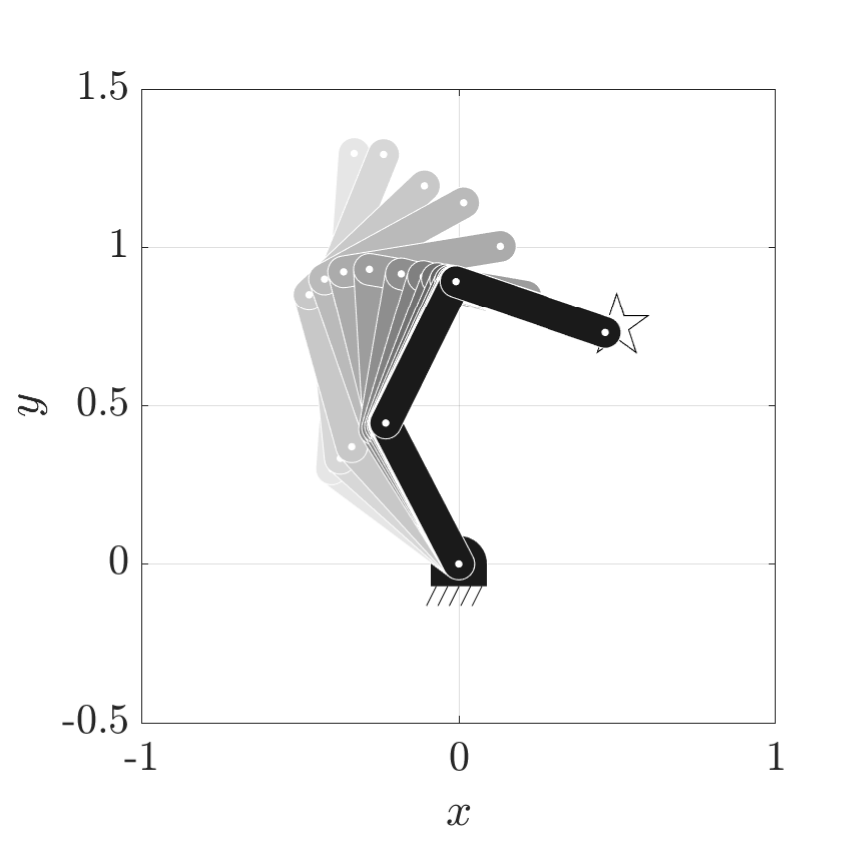}}\hfill
			\subfloat[\label{fig:torqueboundsexample:sat:h}]{\includegraphics[trim = {0 26px 0 0}, clip,width=\linewidth]{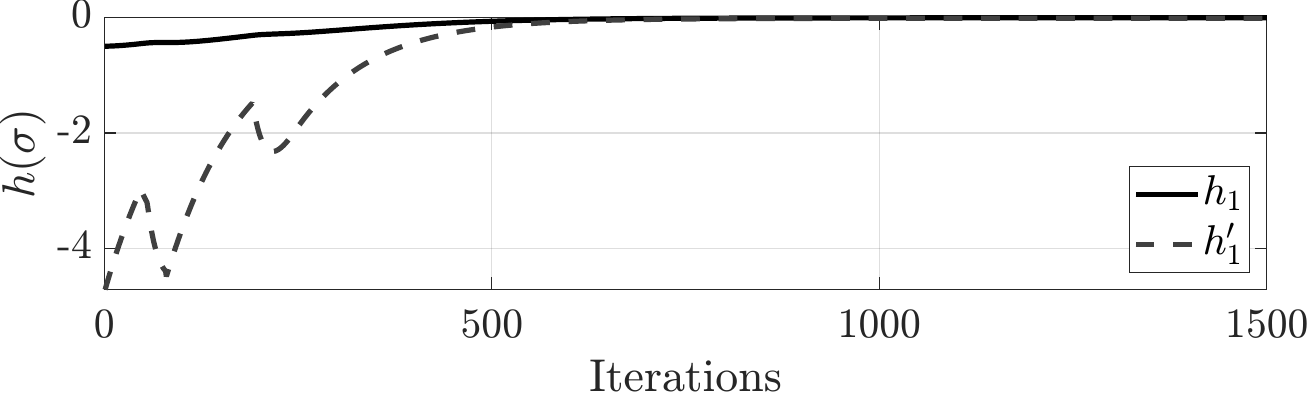}}\hfill
			\subfloat[\label{fig:torqueboundsexample:sat:u}]{\includegraphics[width=\linewidth]{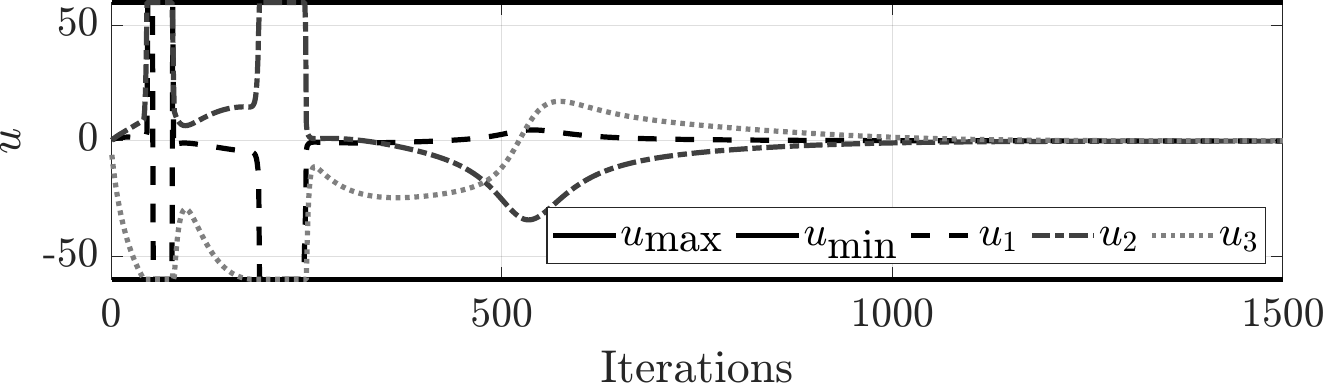}}
		\end{minipage}
		\begin{minipage}{0.5\textwidth}
			\centering
			\subfloat[\label{fig:torqueboundsexample:icbf:robot}]{\includegraphics[width=0.5\linewidth]{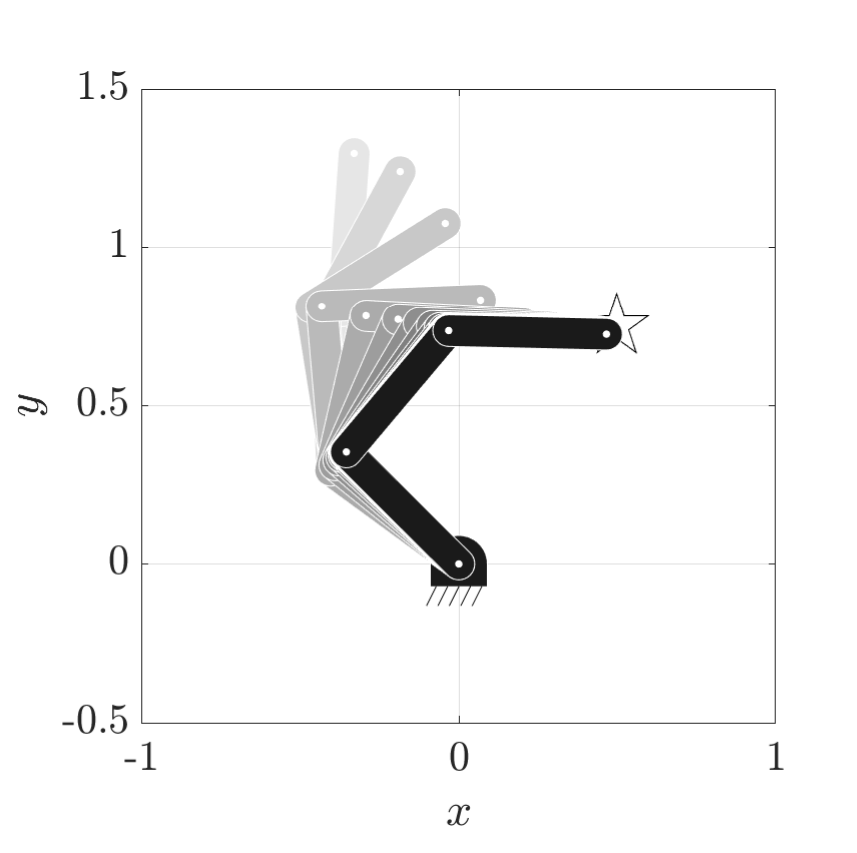}}\hfill
			\subfloat[\label{fig:torqueboundsexample:icbf:h}]{\includegraphics[trim = {0 26px 0 0}, clip, width=\linewidth]{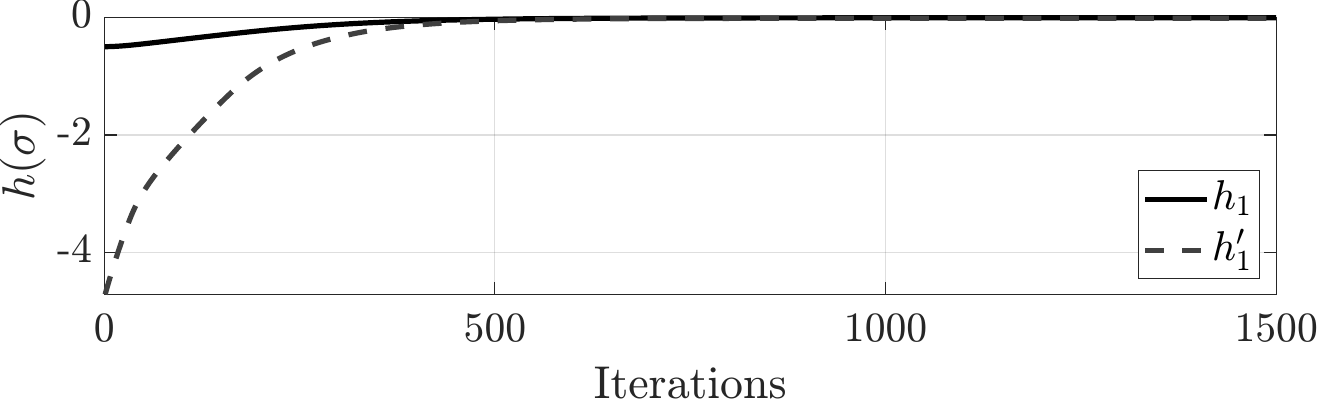}}\hfill
			\subfloat[\label{fig:torqueboundsexample:icbf:u}]{\includegraphics[width=\linewidth]{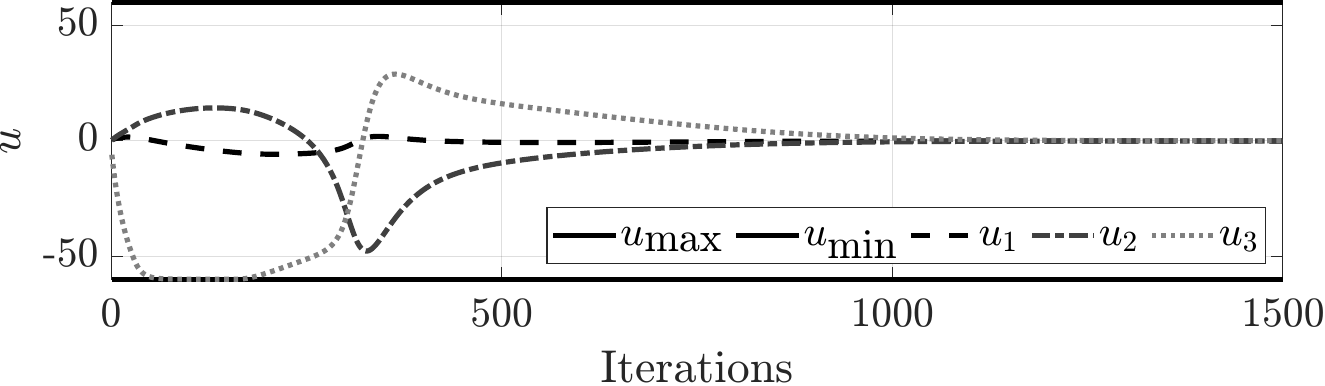}}
		\end{minipage}
		\caption{Enforcing torque bounds using a posteriori saturation (left) \textsc{vs} CBFs on torque input (right). When saturating torque values, the maximum/minimum torques were set to $\pm60$~Nm to make the simulation of the saturated controller stable. On the contrary, using integral CBFs never compromises the stability of the task execution. Furthermore, torques computed using the QP \eqref{eq:mainqpdyn} with additional integral CBFs to enforce torque bounds are much smoother compared to those achieved via saturation. This is a desirable property when dealing with torque-controlled robots as it avoids higher order effects which might cause practical discontinuities that compromise the stability of the robot~\cite{simetti2016novel}.}
		\label{fig:torqueboundsexample}
	\end{figure*}
	
	Figure~\ref{fig:torqueboundsexample} reports the results of two simulations performed saturating input torques after solving the optimization program \eqref{eq:mainqpdyn} (Figures~\ref{fig:torqueboundsexample:sat:robot}, \ref{fig:torqueboundsexample:sat:h}, and \ref{fig:torqueboundsexample:sat:u}), and enforcing input constraints and task execution constraints holistically using CBFs and integral CBFs, as described in this section (Figures~\ref{fig:torqueboundsexample:icbf:robot}, \ref{fig:torqueboundsexample:icbf:h}, and \ref{fig:torqueboundsexample:icbf:u}). First of all, it is important to mention that the value of the maximum torque (60 Nm) was chosen so to make the simulation of the saturated controller stable. Using integral CBFs, the stability of the task execution itself is never compromised by a maximum value of the input torque.
	
	A second advantage of using the approach described in this section over an a posteriori saturation of the input torques is visible when comparing Figs.~\ref{fig:torqueboundsexample:sat:u} and \ref{fig:torqueboundsexample:icbf:u}. The time evolution of the torques is generally smoother when CBFs are used compared to the torque saturation approach. This is a desirable behavior when dealing with torque-controlled robots as an abrupt change in the torque value may introduce higher order effects which, in turn, may produce practical discontinuities that compromise the stability of the robotic system.
\end{example}

\section{Simulations and Experiments}
\label{sec:simexp}

In this section, we present the results of simulations and experiments performed to showcase the behavior of a robot manipulator executing multiple prioritized tasks controlled using the proposed framework. The execution of three independent tasks is shown in Section~\ref{sec:sim1:indip_tasks}, while three dependent prioritized tasks are considered in Section~\ref{sec:sim2:dep_tasks}. Sections~\ref{sec:sim3:switching} and \ref{sec:sim4:complex} illustrate the switching behavior between tasks of the same nature and of different nature, respectively. Section~\ref{sec:sim3:switchingdyn} deals with prioritized tasks executed by a torque-controlled robot, and the results of the implementation of the presented framework on the KUKA LBR iiwa 7 R800 manipulator robot is presented in Section~\ref{subsec:experiments}.

The robot considered in the simulations in Sections~\ref{sec:sim1:indip_tasks}-\ref{sec:sim3:switchingdyn} is a planar three-link three-revolute-joint open kinematic chain. The length of each of the links is 0.5m. The task functions are defined following Example 2 in~\cite{notomista2020set} where, for each task, $\sigma_d$ is a constant in time. In all the following cases, linear class $\mathcal{K}$ function and Euclidean norm distance are adopted. The code used to run all simulations presented in this section is available as supplementary material.

\subsection{Execution of Independent Tasks} \label{sec:sim1:indip_tasks}

\begin{figure}[t]
	\centering
	\subfloat[\label{fig:exp1:robot}]{\includegraphics[width=0.5\linewidth]{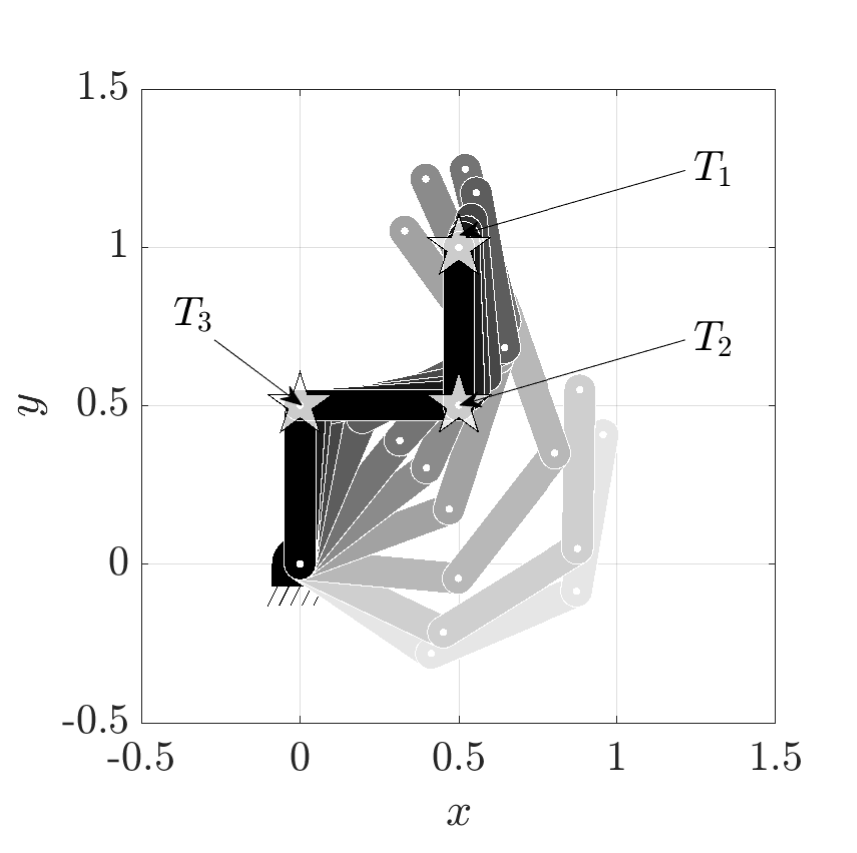}}\hfill
	\subfloat[\label{fig:exp1:h}]{\includegraphics[trim = {0 26px 0 0}, clip, width=\linewidth]{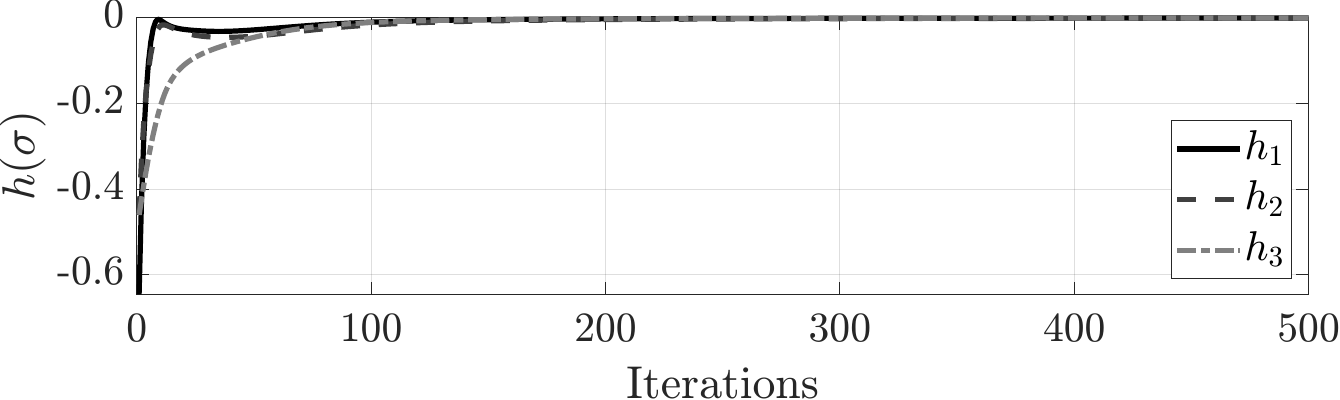}}\hfill
	\subfloat[\label{fig:exp1:V}]{\includegraphics[width=\linewidth]{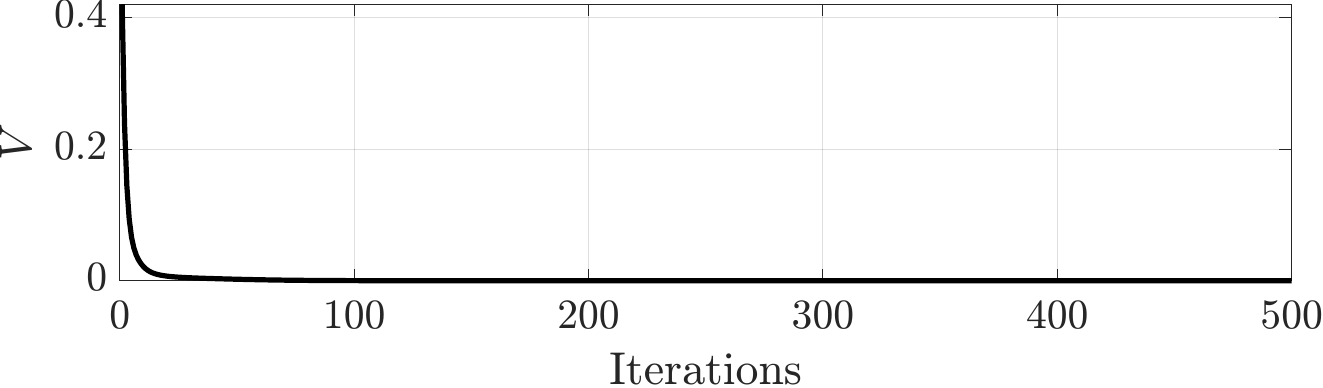}}
	\caption{Execution of 3 independent tasks using the method in Section~\ref{subsec:prioritization}. Each task consists in driving the endpoint of link $i$ to the position labeled as $T_i$ and marked with a star. Figure~\protect\ref{fig:exp1:robot} shows the evolution of the configuration of the robot (from light gray to black). Figures~\protect\ref{fig:exp1:h} and \protect\ref{fig:exp1:V} show the evolution of the task functions $h_i$ and the Lyapunov function $V$ in \eqref{eq:lyapunovfunct}, respectively. As can be seen the task functions converge to zero, as does the Lyapunov function, since all tasks are independent.
	}
	\label{fig:exp1}
\end{figure}
\begin{figure}[t]
	\centering
	\subfloat[\label{fig:exp1:robot_exp}]{\includegraphics[width=0.5\linewidth]{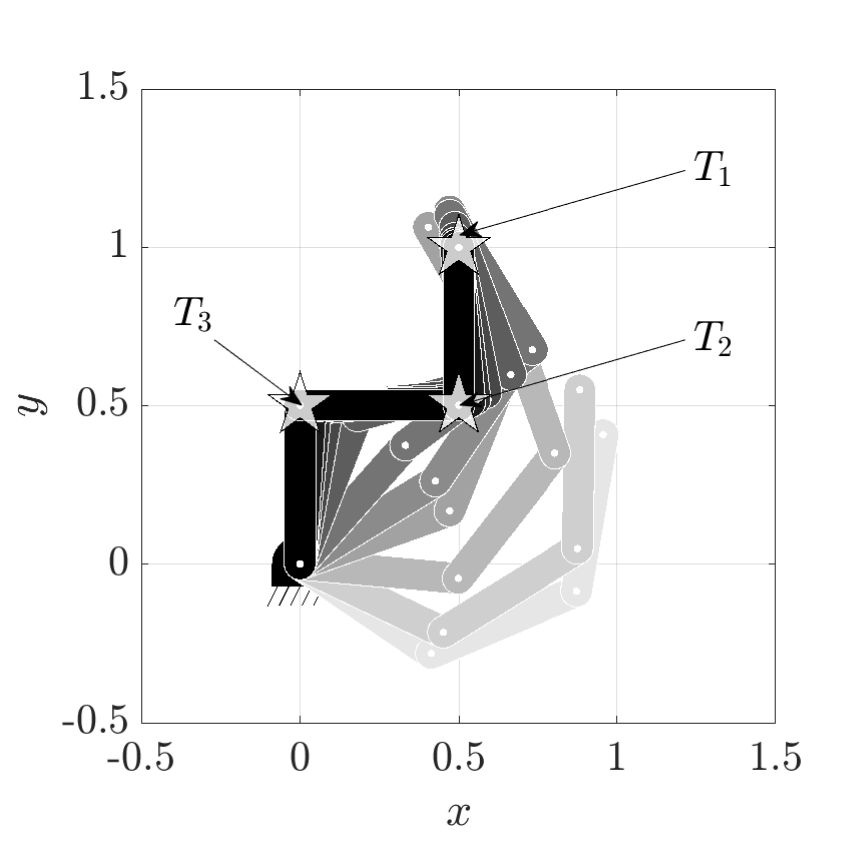}}\hfill
	\subfloat[\label{fig:exp1:h_exp}]{\includegraphics[trim = {0 26px 0 0}, clip, width=\linewidth]{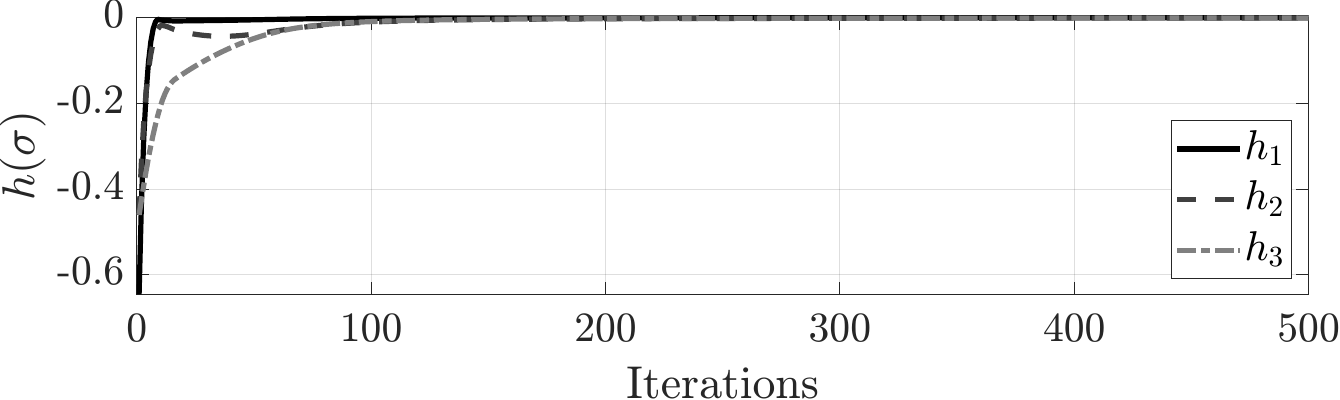}}\hfill
	\subfloat[\label{fig:exp1:v_exp}]{\includegraphics[width=\linewidth]{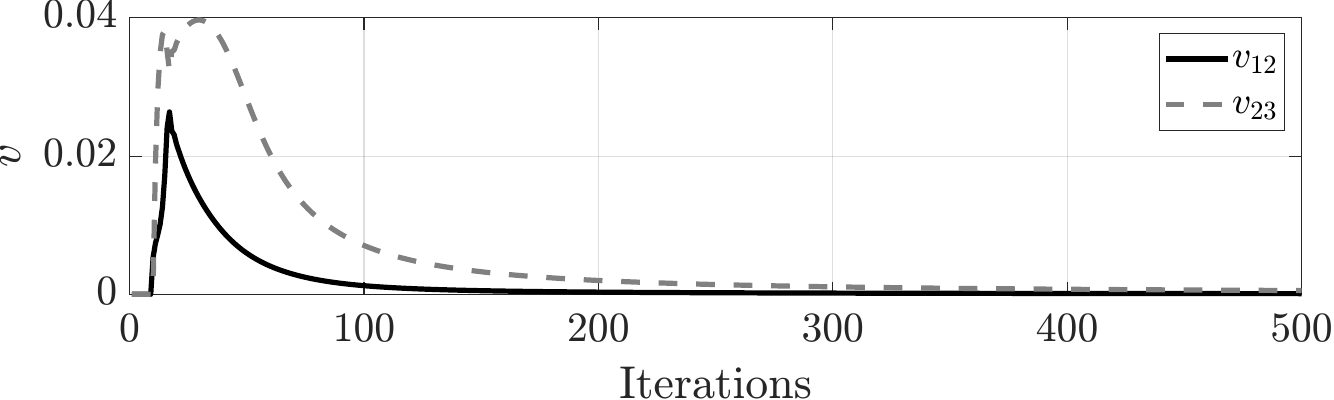}}
	\caption{Execution of 3 independent tasks using the method in Section~\ref{subsec:prioritizationmatrixAUTO}. Each task consists in driving the endpoint of link $i$ to the position labeled as $T_i$ and marked with a star. Figure~\protect\ref{fig:exp1:robot_exp} shows the evolution of the configuration of the robot (from light gray to black). Figures~\protect\ref{fig:exp1:h_exp} and \protect\ref{fig:exp1:v_exp} show the evolution of the task functions and of the components of the slack variable $v$ of the prioritization constraint, respectively.}
	\label{fig:exp1_exp}
\end{figure}

This section considers the execution of independent tasks, consisting in controlling the endpoints of each of the three links to reach a desired position in the plane compatible with the geometry of the robot. The three desired positions, depicted as stars in Fig.~\ref{fig:exp1}, are $\sigma_1 = [0.5,\,1]\tr$, $\sigma_2 = [0.5,\,0.5]\tr$, $\sigma_3 = [0,\,0.5]\tr$, respectively. By Proposition~\ref{prop:gener_task_indep}, these three ESB tasks are independent as their corresponding Jacobian-based tasks are independent. 

As per Corollary~\ref{cor:cbfindeppriority}, all the tasks will be executed regardless of their relative priorities. To compare the behavior of the robot executing tasks with a pre-specified stack (as in Section~\ref{subsec:prioritization}) with the automatic task prioritization (as in Section~\ref{subsec:prioritizationmatrixAUTO}), two simulations are performed. The results of the first one are reported in Fig.~\ref{fig:exp1}. Here the robot is controlled by the optimal input computed by solving QP~\eqref{eq:mainQPpriorities}, with a fixed prioritization stack. By Corollary~\ref{cor:cbfindeppriority}, all tasks are executed as shown in Figures~\ref{fig:exp1:robot} and \ref{fig:exp1:h}, and the Lyapunov function in \eqref{eq:lyapunovfunct} converges to zero (Fig.~\ref{fig:exp1:V}). It is worthwhile noting that, due to task independence, all functions $h_i$ for $i=1,\,2,\,3$ converge to zero, demonstrating the accomplishment of all the tasks as expected.

While the behavior in Fig.~\ref{fig:exp1} is obtained by letting the robot execute the optimal control input solution to the QP~\eqref{eq:mainQPpriorities}, in Fig.~\ref{fig:exp1_exp} the  robot fulfills the same task by executing  the solution to the QP~\eqref{eq:mainQPprioritiesauto}. In this case, following the motivation given in Section~\ref{subsec:prioritizationmatrixAUTO}, i.e., assuming we do not know whether the designed tasks are independent, we relax the prioritization constraint. The behavior of the robot is shown qualitatively in Fig.~\ref{fig:exp1:robot_exp} and quantitatively in Fig.~\ref{fig:exp1:h_exp}. These figures demonstrate how, \textit{despite the prioritization constraint relaxation}, the robot executes all three tasks, as desired. In Fig.~\ref{fig:exp1:v_exp}, we report the plot of the slack variables---two components of the vector $v$ in \eqref{eq:mainQPprioritiesauto}---which relax the prioritization constraint. As can be seen, the optimization program initially relaxes the prioritization constraints (higher $\|v\|^2$ at the beginning of the simulation), which are asymptotically tightened ($\|v\|^2\to0$ as the simulation iterations increase), effectively realizing the stack prescribed by $K$ in \eqref{eq:mainQPprioritiesauto}.

\subsection{Execution of Dependent Tasks}\label{sec:sim2:dep_tasks}

In this section, we consider dependent tasks. As pointed out in Section~\ref{subsec:prioritizationmatrixAUTO}, the approach with a fixed prioritization matrix might not lead to a convergent behavior if the prioritization matrix defining the stack of tasks is not designed respecting the geometry of the tasks themselves. Therefore, for the case of dependent tasks the robot is controlled using the QP~\eqref{eq:mainQPprioritiesauto}, where the prioritization constraint is relaxed.

\begin{figure}[t]
	\centering
	\subfloat[\label{fig:exp2:robot}]{\includegraphics[width=0.5\linewidth]{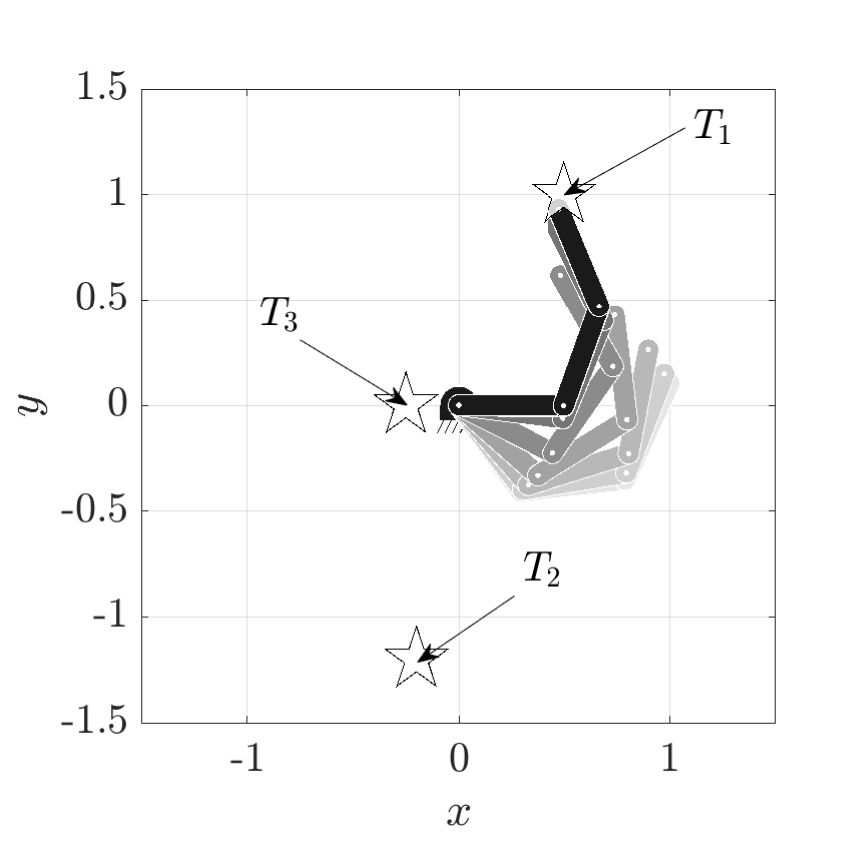}}\hfill
	\subfloat[\label{fig:exp2:h}]{\includegraphics[trim = {0 26px 0 0}, clip, width=\linewidth]{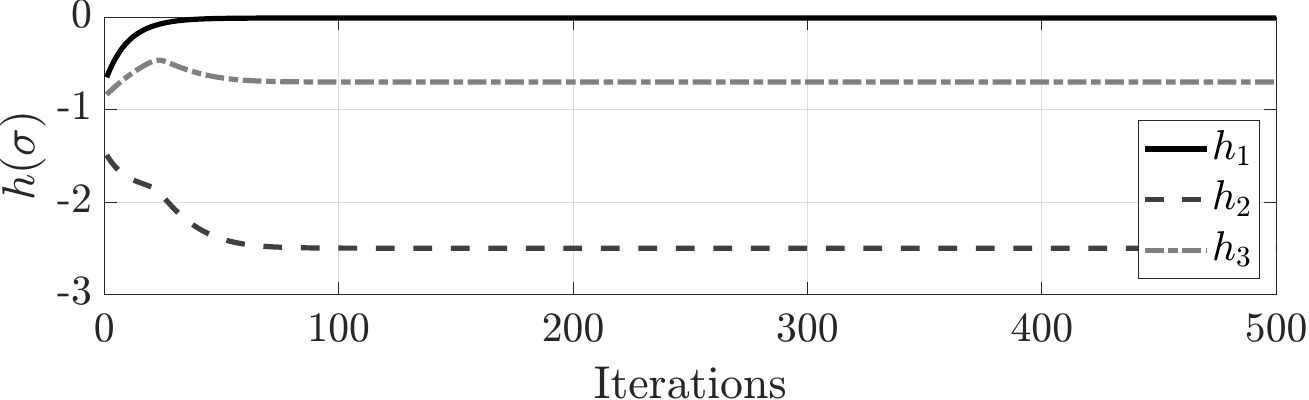}}\\
    \hfill
	\subfloat[\label{fig:exp2:v}]{\includegraphics[width=0.98\linewidth]{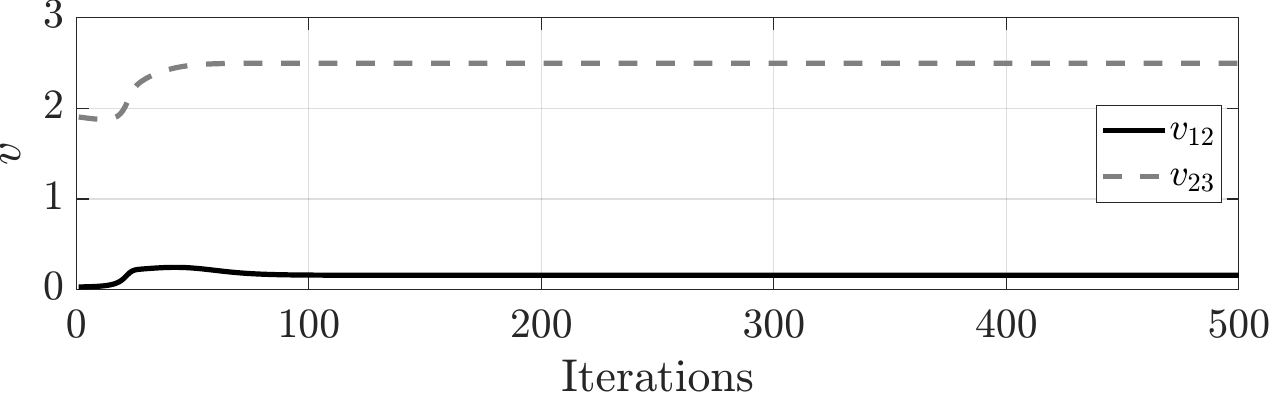}}
	\caption{Execution of 3 dependent tasks consisting in driving the end-effector of the robot to each of the 3 points marked with a star. As in previous simulations, the evolution of the robot configuration is depicted in Fig.~\protect\ref{fig:exp2:robot}. The robot executes task $T_1$ since it has highest priority and all tasks are pairwise dependent. As a result, in Fig.~\protect\ref{fig:exp2:h}, only $h_1$ is driven towards 0, signifying the execution of task $T_1$ only. Figure~\protect\ref{fig:exp2:v} depicts the components of the slack variable $v$ of the prioritization constraint.}
	\label{fig:exp2}
\end{figure}

The three dependent tasks defined in this simulation consist in controlling the end-effector (the endpoint of the third link) of the robot to reach three predefined positions in the plane. The three desired positions, depicted as stars in Fig.~\ref{fig:exp2}, are $\sigma_1 = [0.5,\,1]\tr$, $\sigma_2 = [-0.2,\,-1.2]\tr$, $\sigma_3 = [-0.25,\,0]\tr$, respectively.
The tasks are dependent according to Definition~\ref{def:dependent_CBF_tasks}. In fact, it is not hard to see that there exists a configuration of the robot for which the gradients of the two tasks are aligned.

Proposition~\ref{prop:cbfpriority} shows that the tasks must converge to the set~\eqref{eq:stableSetPriorities}, corresponding to the condition in which all tasks with active constraints are executed respecting their relative priorities. The prescribed prioritization stack is $T_1 \prec T_2 \prec T_3$, with corresponding prioritization matrix equal to
\begin{equation}
    K = \begin{bmatrix}
			1 & -1/\kappa & 0\\
			0 & 1 & -1/\kappa
		\end{bmatrix},
  \label{eq:exp2:prioritizationconstraint}
\end{equation}
where $\kappa=10^3$. With this choice of $\kappa$ and $\sigma_i$, the desired relative priorities are not realizable---because of geometric reasons. \textit{Thanks to the prioritization constraint relaxation}, however, the QP~\eqref{eq:mainQPprioritiesauto} finds the closest prioritization stack that is realizable by the robot.

The results are shown in Fig.~\ref{fig:exp2}. In particular, Fig.~\ref{fig:exp2:robot} depicts the time evolution of the robot configuration, which clearly executes task $T_1$ with highest priority, by moving its end-effector to the point marked with a star and labeled as $T_1$. The corresponding task functions $h_i$ for $i=1,\,2,\,3$ are reported in Fig.~\ref{fig:exp2:h}, where it is possible to see how $h_1(\sigma(t))\to0$ as the simulation iterations increase, while $h_2(\sigma(t))$ and $h_3(\sigma(t))$ converge to finite values different from 0---corresponding to the non-accomplishment of tasks $T_2$ and $T_3$. In fact, due to task dependence, not all the values of the functions $h_i$ can converge to zero, i.e., intuitively the three tasks cannot be accomplished simultaneously.

Figure~\ref{fig:exp2:v} shows the plot of the slack variables of the prioritization stack constraint. As expected, their values does not converge to zero, as the prioritization constraint prescribed by \eqref{eq:exp2:prioritizationconstraint} is not geometrically realizable.

\subsection{Switching Between Dependent Tasks}
\label{sec:sim3:switching}

\begin{figure}[t]
	\centering
	\subfloat[\label{fig:exp3:robot1}]{\includegraphics[width=0.33\linewidth]{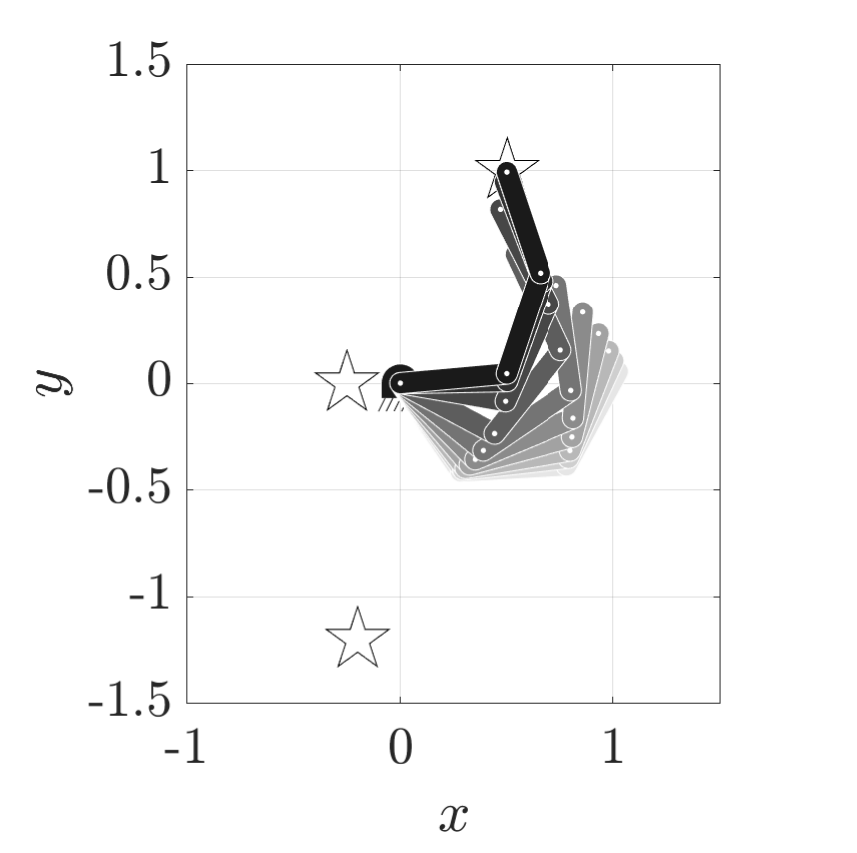}}\hfill
	\subfloat[\label{fig:exp3:robot2}]{\includegraphics[width=0.33\linewidth]{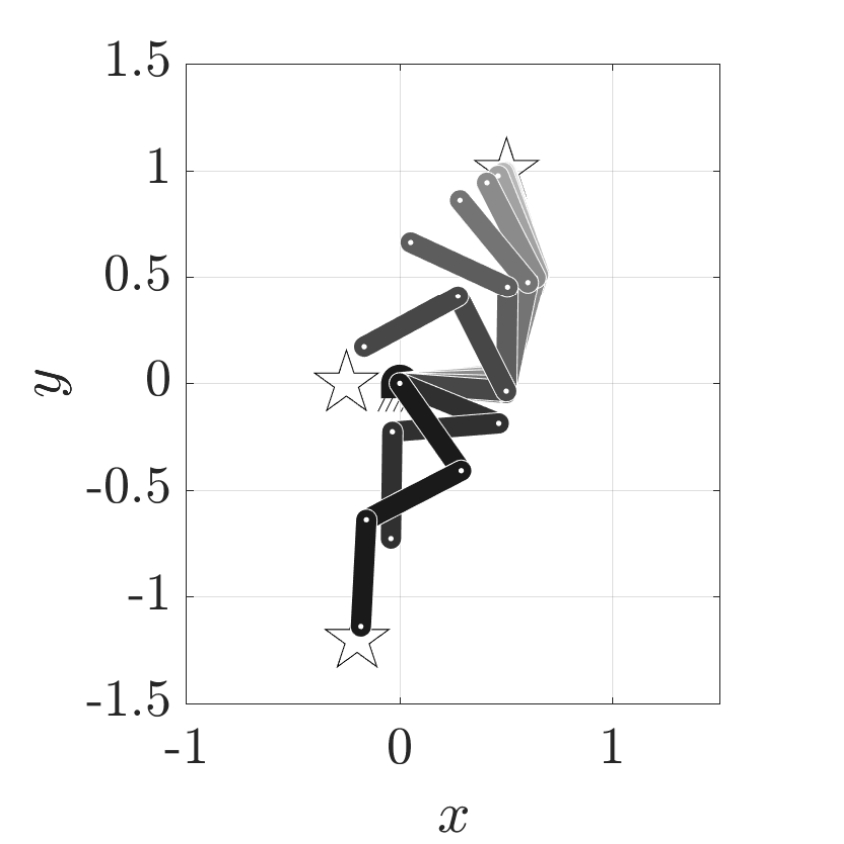}}\hfill
	\subfloat[\label{fig:exp3:robot3}]{\includegraphics[width=0.33\linewidth]{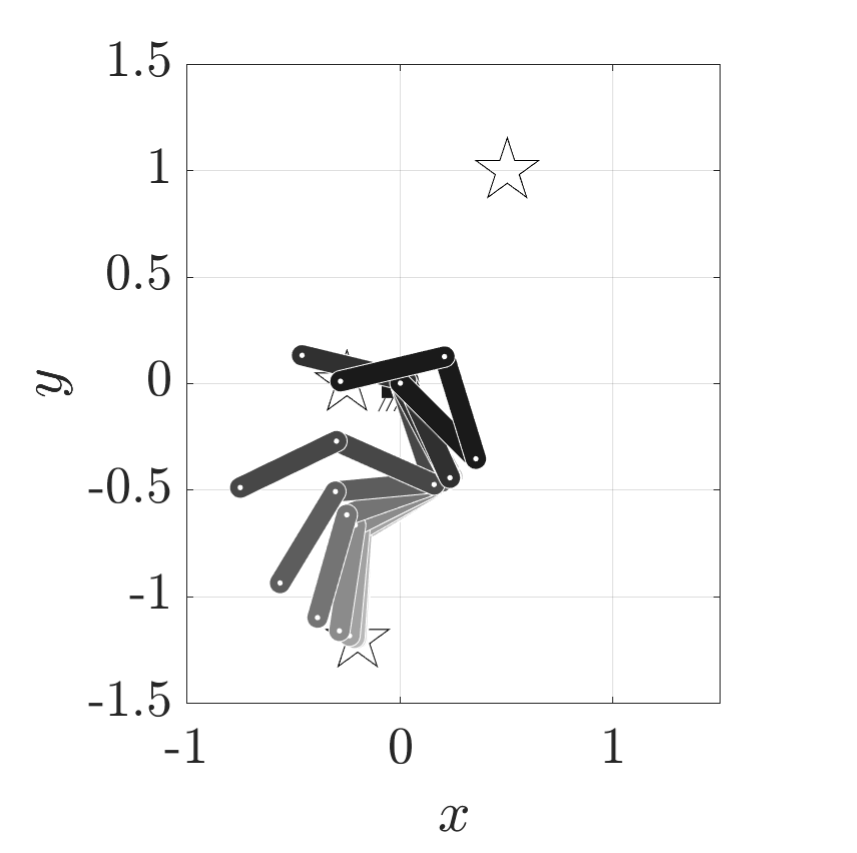}}\\
	\hfill \subfloat[\label{fig:exp3:h}]{\includegraphics[trim = {0 26px 0 0}, clip, width=0.98\linewidth]{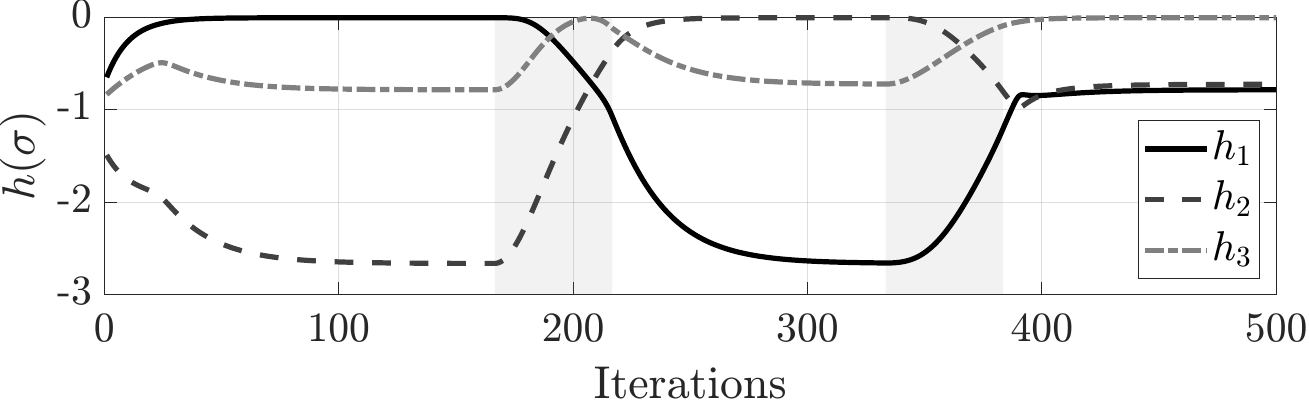}}\\
	\subfloat[\label{fig:exp3:u}]{\includegraphics[width=\linewidth]{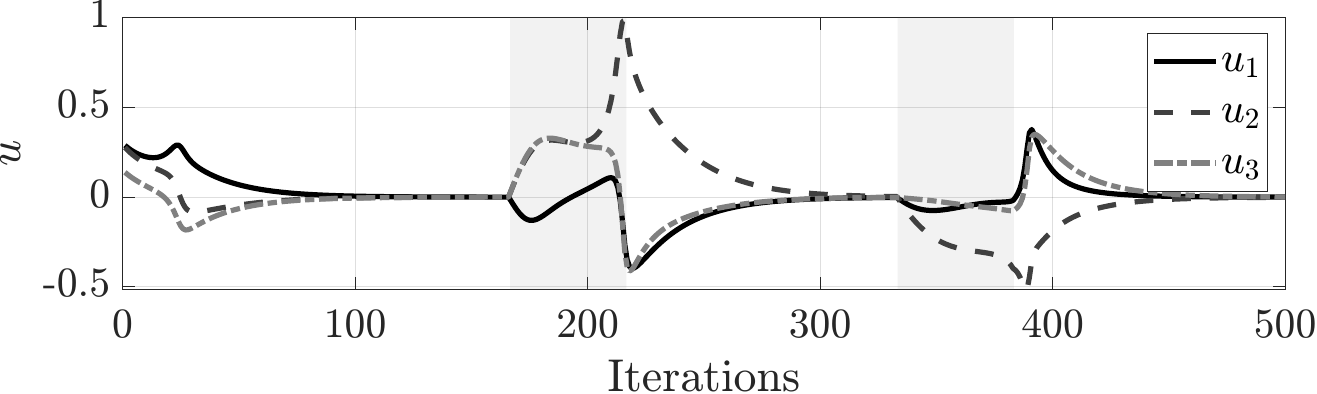}}
	\caption{Priority switching with dependent tasks consisting in driving the end-effector of the robot to each of the 3 points marked with a star. Figures~\protect\ref{fig:exp3:robot1},~\protect\ref{fig:exp3:robot2} and~\protect\ref{fig:exp3:robot3} show the configuration of the robot (evolving from light gray to black in each figure) while it executes task $T_1$, $T_2$, and $T_3$ with highest priority, respectively. Figure~\protect\ref{fig:exp3:h} reports the trajectories of the task functions $h(\sigma)$. Light gray shaded areas denote the intervals during which priorities are swapped. In the first segment (left of the first light gray shaded area), task $T_1$ has highest priority. Correspondingly, $h_1$ is driven to zero. Then, $T_1$, $T_2$ priorities are swapped. Analogous behavior can be observed in the second interval where $h_2$ is driven to zero. 
    After that, priorities are swapped again ($T_3$ acquires highest priority) and the third interval  shows convergence of $h_3$ to zero. Finally, Fig.~\protect\ref{fig:exp3:u} shows how the joint velocity control input evaluated by solving the QP \eqref{eq:mainQPprioritiesauto} is continuous even during the switch of priority stacks.}
	\label{fig:exp3}
\end{figure}

In this and the next sections, we switch our focus to dynamic prioritization stacks where the relative priorities between tasks change over time, and show the behavior of the kinematic and dynamic robot models under the control framework presented in Section~\ref{sec:switch}.

In this section, we consider the same tasks of the previous example, which consist in reaching predefined positions in the task space with the robot end-effector. As before, the three desired positions are $\sigma_1 = [0.5,\,1]\tr$, $\sigma_2 = [-0.2,\,-1.2]\tr$, $\sigma_3 = [-0.25,\,0]\tr$, respectively, resulting in dependent tasks. Their relative priorities change according to the following sequence of stacks:
\begin{equation}
    \begin{cases}
        T_1 \prec T_2 \prec T_3 & 0 \le \text{Iteration} < 166\\
        T_2 \prec T_3 \prec T_1 & 166 \le \text{Iteration} < 333\\
        T_3 \prec T_1 \prec T_2 & 333 \le \text{Iteration} < 500.
    \end{cases}
    \label{eq:exp3:stacks}
\end{equation}

Figure~\ref{fig:exp3} shows data recorded during the simulation. In particular, Figures~\ref{fig:exp3:robot1}--\ref{fig:exp3:robot3} depict the motion of the robot executing the tasks prioritized based on the three stacks specified in \eqref{eq:exp3:stacks}, respectively. As can be seen, the robot moves sequentially to the point corresponding to the task with highest priority. Figure~\ref{fig:exp3:h} shows the time evolution of the task functions $h_i$ for $i=1,\,2,\,3$. Notice how the value of the function corresponding to the task with highest priority always converges to a value close to zero. Finally, Fig.~\ref{fig:exp3:u} shows the robot input $u=\dot q$, highlighting its continuity during the switching phases.

\subsection{Task Insertion and Removal}
\label{sec:sim4:complex}

\begin{figure}[t]
	\centering
	\subfloat[\label{fig:exp4:robot1}]{\includegraphics[width=0.5\linewidth, trim={0 1.4cm 0 2.2cm}, clip]{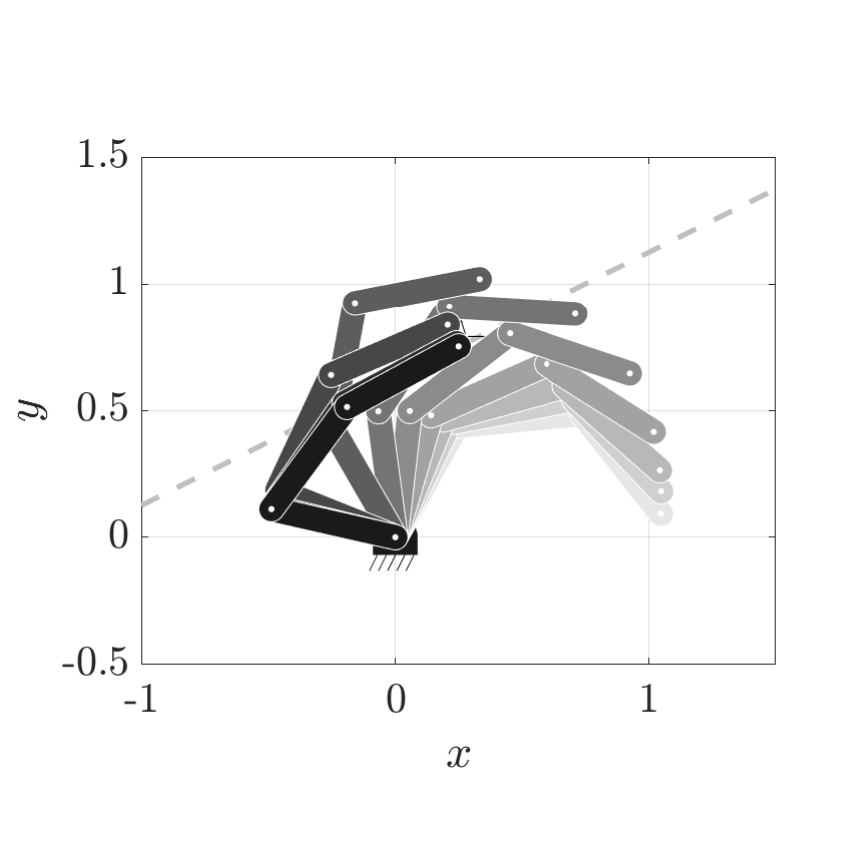}}\hfill
	\subfloat[\label{fig:exp4:robot2}]{\includegraphics[width=0.5\linewidth, trim={0 1.4cm 0 2.2cm}, clip]{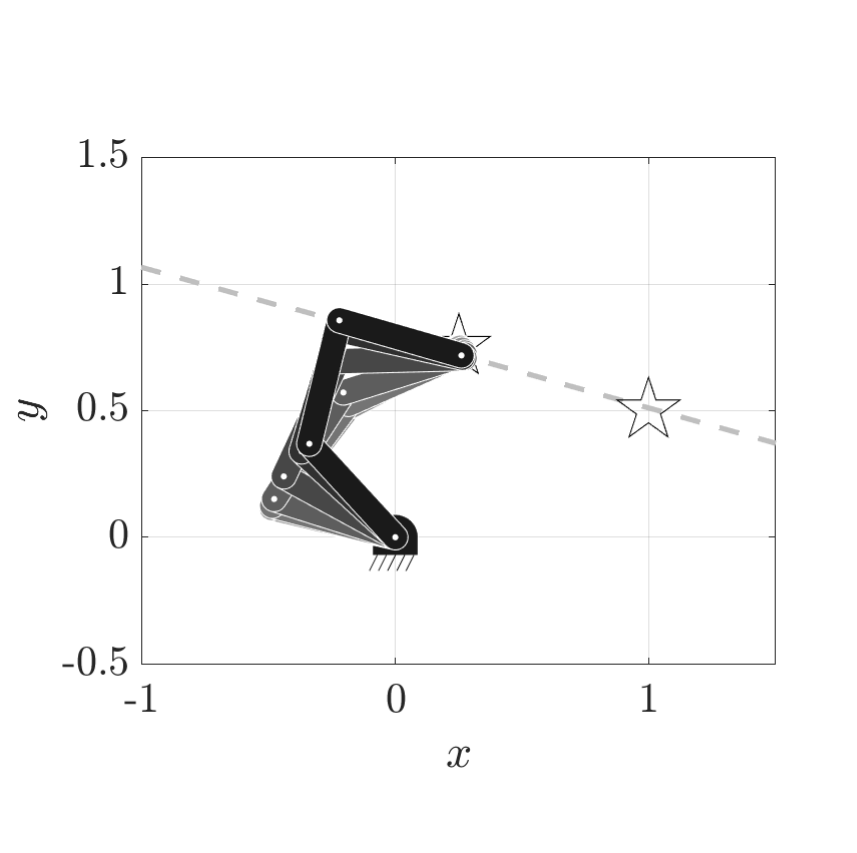}}\\
	\subfloat[\label{fig:exp4:h}]{\includegraphics[trim = {0 26px 0 0}, clip, width=\linewidth]{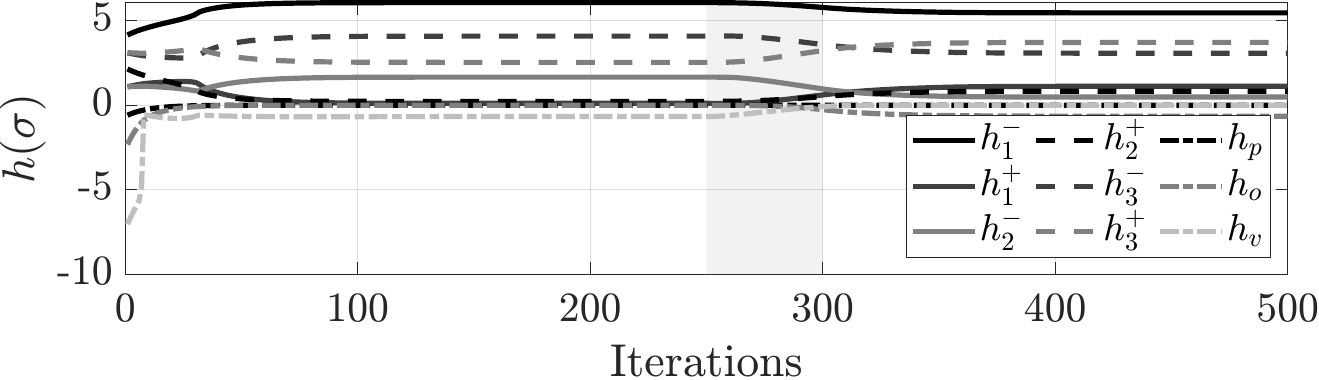}}\\
	\hfill\subfloat[\label{fig:exp4:qdot}]{\includegraphics[width=0.975\linewidth]{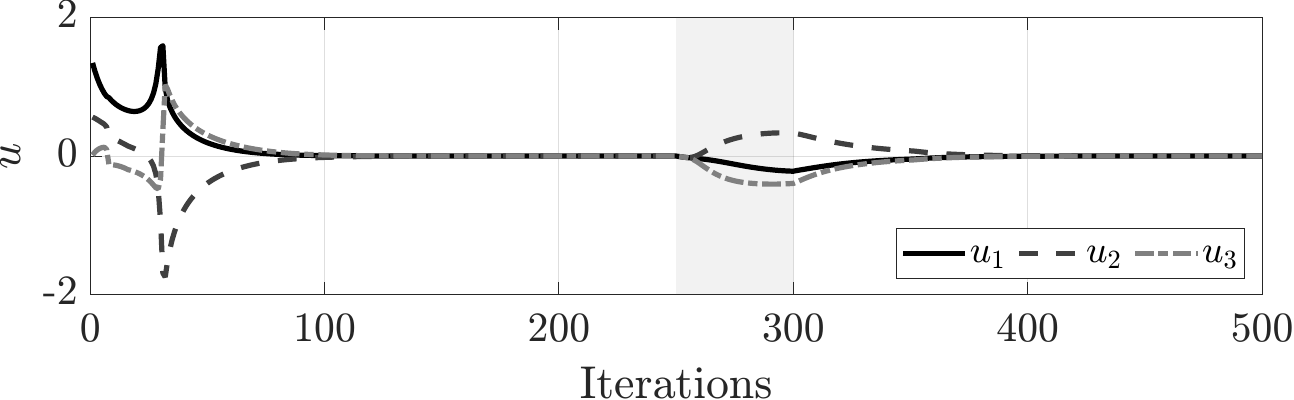}}\\
	\caption{Task insertion/removal and switch between stacks of dependent tasks. The tasks considered in this simulations are joint limit avoidance, end-effector position control, end-effector orientation control, and a look-at-point task in which we control the third link of the robot to the orientation required to aligned it with a desired point of interest. The orientation control and the look-at-point tasks are dependent, and the switch consists in removing the orientation control task and inserting the look-at-point task. Figures~\ref{fig:exp4:robot1} and \ref{fig:exp4:robot2} show the evolution of the robot configurations while executing the two stacks of tasks, respectively. The stars represent the desired end-effector position and the point towards which the third link has to be oriented. The plot of the task functions are reported in Fig.~\ref{fig:exp4:h}, while Fig.~\ref{fig:exp4:qdot} highlights the continuity of the joint velocity control input  evaluated by solving the QP \eqref{eq:mainQPprioritiesauto} during the switching phase between stacks of tasks. Light gray shaded area denote the interval in which priorities are swapped.}
	\label{fig:exp4}
\end{figure}

In this final simulation performed using the kinematic robot model, we demonstrate how the proposed framework can not only handle time-varying prioritization stacks, but also allows for insertions and removals of tasks,  dependent and independent alike, in and from a stack being executed. To better illustrate the task insertion/removal capability, in this section we consider tasks of different nature, consisting in remaining within the physical joint limits and orienting the end-effector towards a desired point of interest.
The desired end-effector position is $\sigma = [0.25,\,0.75]\tr$, the desired orientation of the end-effector is $\sigma = \pi/6$, the point to monitor is at $\sigma = [1,\,0.5]\tr$. The joint angle upper and lower bounds are set to $q^+ = [\pi, \, 2/3\pi,\, 2/3\pi]\tr$, $q^- = [-\pi, \, -2/3\pi,\, -2/3\pi]\tr$, respectively. Staying within joint limits is the highest priority task, reaching the desired position is the secondary task, and orientation and look-at-point tasks have lowest priority. 

The simulation is divided into two parts and the switch between task stacks happens in the interval $[250,300]$ iterations. In the first part, the robot has to reach desired position and orientation with its end-effector while staying within joint limits. In the second part, a look-at-point task replaces the orientation task, thus resulting in task removal and insertion. The transition between the task stacks is implemented using the approach in Proposition~\ref{prop:planning_switching}. 

Figure~\ref{fig:exp4} shows data recorded during the simulation. In Figs.~\ref{fig:exp4:robot1} and \ref{fig:exp4:robot2}, the motion of the robot executing the two stacks of tasks, respectively, is depicted. The stars represent the desired end-effector position and the point of interest towards which the third link has to be oriented. The gray dashed lines passing through the desired end-effector position denote the desired orientation in Fig.~\ref{fig:exp4:robot1} and the direction pointing towards the point of interest in Fig.~\ref{fig:exp4:robot2}. Figure~\ref{fig:exp4:h} shows the time evolution of the task functions, where $h_i^+$, $h_i^-$ denote the upper and lower $i$-th joint limit task, respectively. The functions $h_p$, $h_o$, $h_v$ denote the position, orientation and vision task CBF functions. Finally, the continuity of the robot inputs $u=\dot{q}$ is highlighted in Fig.~\ref{fig:exp4:qdot}, during stack switch and task insertion/removal operations.

\subsection{Switching Between Dependent Tasks (Dynamic Model)}
\label{sec:sim3:switchingdyn}

In the previous sections we employed kinematic models of manipulator for the control synthesis. The simulation presented in this section aims at showcasing the proposed task prioritization framework applied to the case of dynamic robot models, controlled using joint torque inputs.

The robot is the same 3-link serial manipulator considered in the previous sections and it is modeled by \eqref{eq:robotdynmodel}. Similar to the previous examples, the two tasks to be executed in a prioritized fashion consist in regulating the end-effector to two different locations in the plane. Thus, they are dependent according to Definition~\ref{def:dependent_CBF_tasks}.

As discussed in Section~\ref{sec:dynamics}, when dynamic models are considered, we need to define the auxiliary CBF $h_i^\prime$, as in \eqref{eq:hprimegeneral}. In addition, in this section input constraints are enforced by dynamically extending the model \eqref{eq:robotdynmodel}--\eqref{eq:g} via the input dynamics
\begin{equation}
    \dot u = u^\prime,
    \label{eq:robotdynmodelextended}
\end{equation}
where $x$, $u$, $f$, and $g$ are defined in Section~\ref{sec:dynamics}, and $u^\prime$ is the new input to the dynamical system. Proceeding as in \cite{ames2020integral}, the following integral CBF is defined: $h_u(u) = u_\mathrm{max}^2 - \|u\|^2$,
and the constraint $\dot h_u(u,u^\prime) + \gamma_u (h_u(u)) \ge 0$ is enforced to keep the magnitude of the input torques within the desired bound.

\begin{figure}[t]
	\centering
	\subfloat[\label{fig:exp3dyn:robot1}]{\includegraphics[width=0.46\linewidth, trim={0 1.4cm 0 2.2cm}, clip]{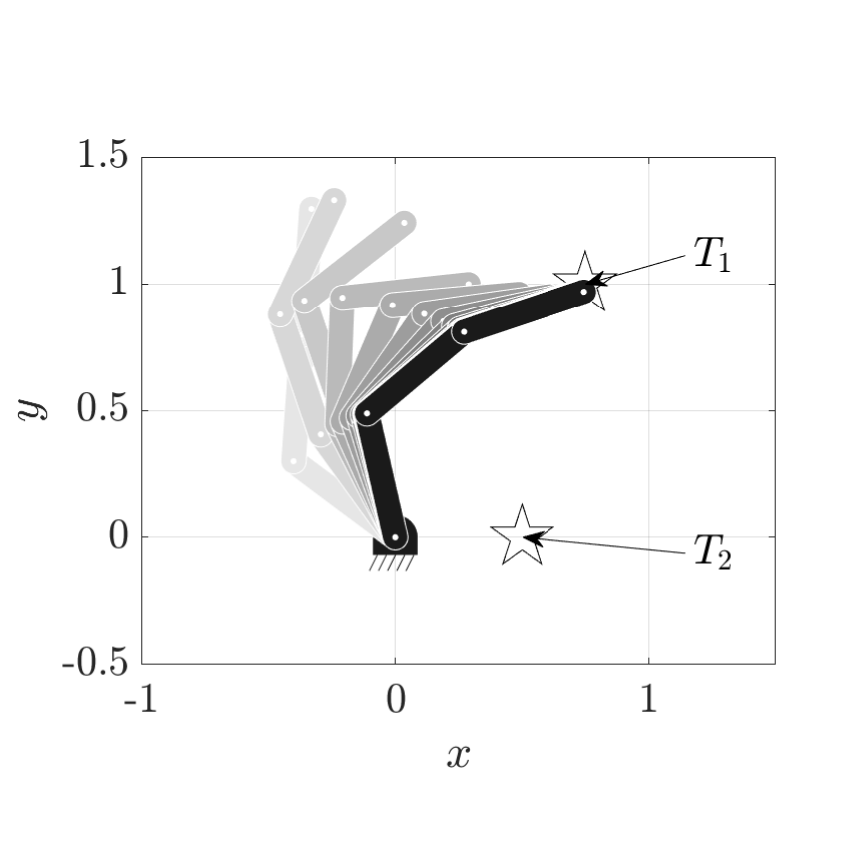}}\hfill
	\subfloat[\label{fig:exp3dyn:robot2}]{\includegraphics[width=0.46\linewidth, trim={0 1.4cm 0 2.2cm}, clip]{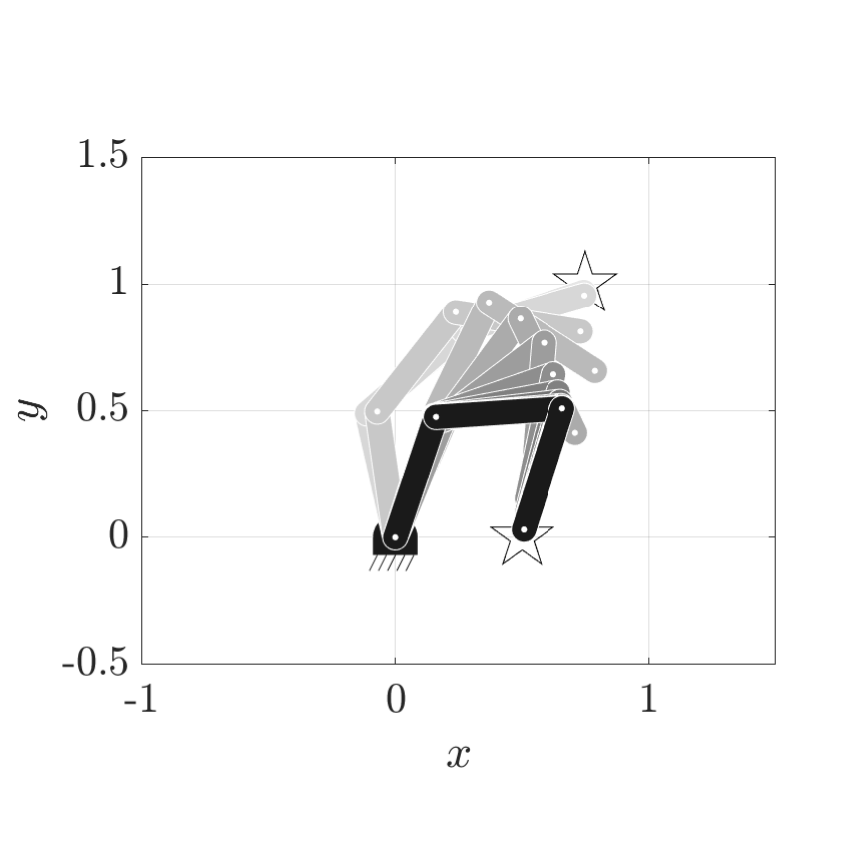}}\\
	\subfloat[\label{fig:exp3dyn:h}]{\includegraphics[trim = {0 26px 0 0}, clip, width=\linewidth]{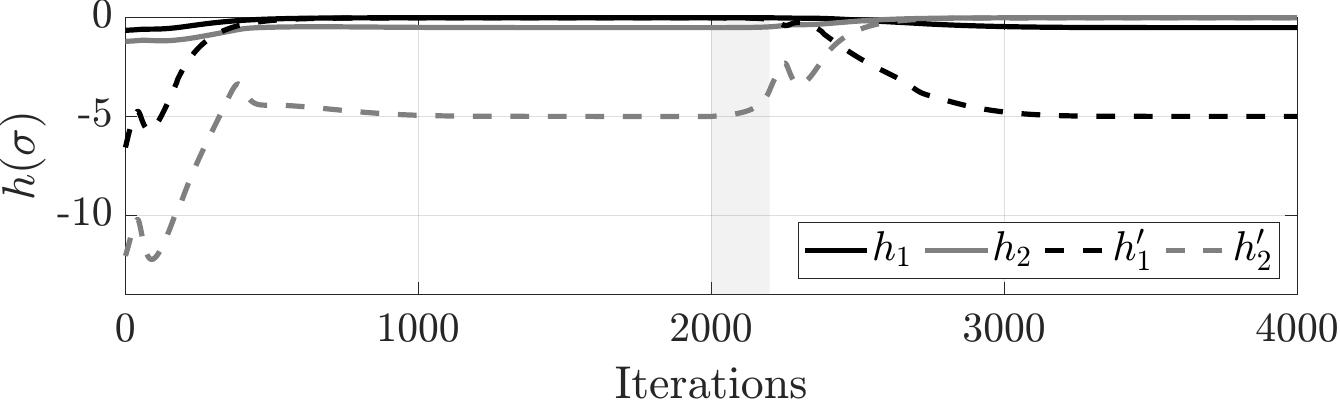}}\\
	\hfill\subfloat[\label{fig:exp3dyn:u}]{\includegraphics[width=0.99\linewidth]{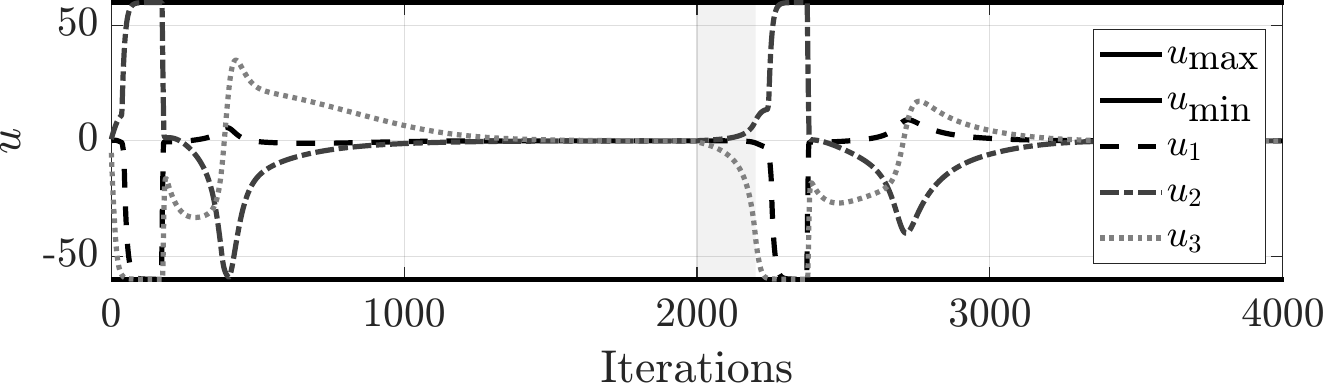}}
	\caption{Priority switching with dependent tasks and torque-controlled robot. Similarly to its kinematic counterpart in Fig.~\ref{fig:exp3}, Figures~\ref{fig:exp3dyn:robot1} and Figures~\ref{fig:exp3dyn:robot2} show the evolution of the robot configurations. In Fig.~\ref{fig:exp3dyn:h}, instead, besides the trajectory of $h_1$ and $h_2$, the functions $h_1^\prime$ and $h_2^\prime$ are displayed. In the first half of the experiment, task $T_1$ has highest priority: consequently, $h_1$ and $h_1^\prime$ are driven to zero. Analogously, in the second half of the experiment, the priorities are swapped and $h_2$ and $h_2^\prime$ are driven to zero by the optimal torques solution of \eqref{eq:mainqpdyn}. Finally, Fig.~\ref{fig:exp3dyn:u}, the continuity of the input torques required to switch between the two stacks of tasks is shown, alongside the minimum and maximum values they can attain.}
	\label{fig:exp3dyn}
\end{figure}

Figure~\ref{fig:exp3dyn} shows the result of the execution of the two prioritized tasks, whose priority is swapped half way through the experiment. In particular, Figs.~\ref{fig:exp3dyn:robot1} and \ref{fig:exp3dyn:robot2} show the motion of the robot while executing the prioritized stacks $T_1 \prec T_2$ and $T_2 \prec T_1$, respectively. The positions to which the end-effector is to be regulated are depicted as stars: the top position corresponds to task $T_1$, while controlling the end-effector to the bottom position is equivalent to executing task $T_2$. The values of the CBFs $h_1$, $h_2$, $h_1^\prime$, and $h_2^\prime$ used to execute the two tasks are plotted in Fig.~\ref{fig:exp3dyn:h}. As can be seen, the CBFs corresponding to the task executed with highest priority are driven to zero. Finally, the joint torques, $u$, obtained by integrating $u^\prime$ in~\eqref{eq:robotdynmodelextended} and used to control the robot, are reported in Fig.~\ref{fig:exp3dyn:u}, where we observe that torques are continuous during the priority switch. Furthermore, the torque values, depicted as thick black lines, never exceed the upper and lower bounds, $u_\mathrm{max}=60$~Nm. Notice that to further smooth out the $u$ signal, integral CBFs may be used (see discussion in Fig.~\ref{fig:torqueboundsexample}).

\subsection{Comparison with Hierarchical Quadratic Programming}
\label{sec:comparisonHQP}
\begin{figure}[t]
	\centering
	\subfloat[\label{fig:exp3dynHQP:robot1}]{\includegraphics[width=0.46\linewidth, trim={0 1.4cm 0 2.2cm}, clip]{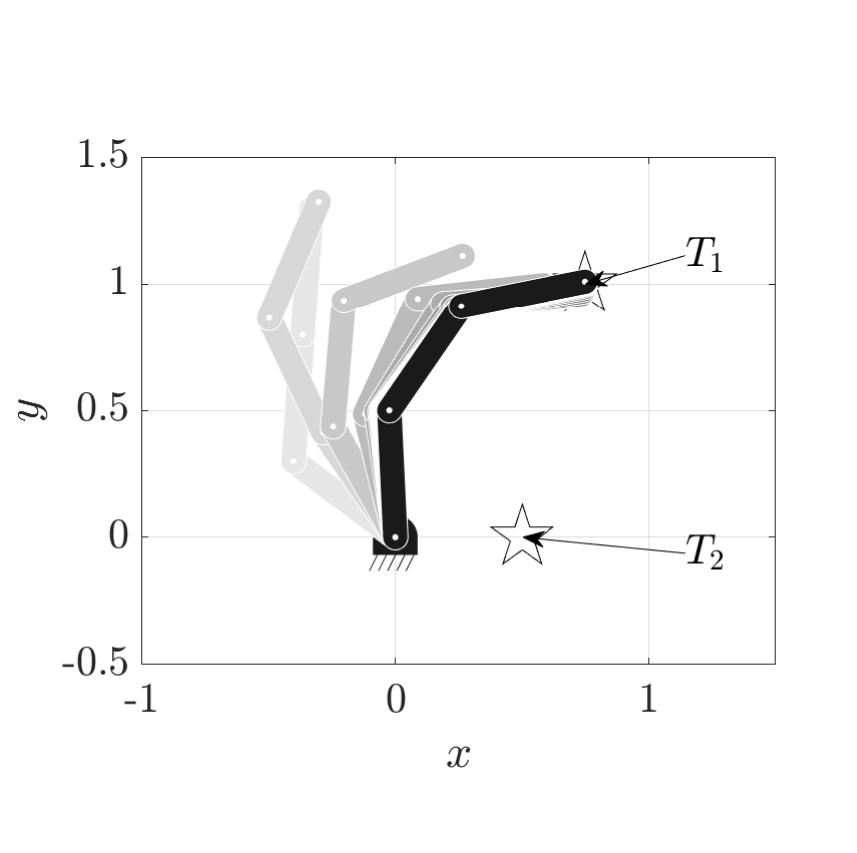}}\hfill
	\subfloat[\label{fig:exp3dynHQP:robot2}]{\includegraphics[width=0.46\linewidth, trim={0 1.4cm 0 2.2cm}, clip]{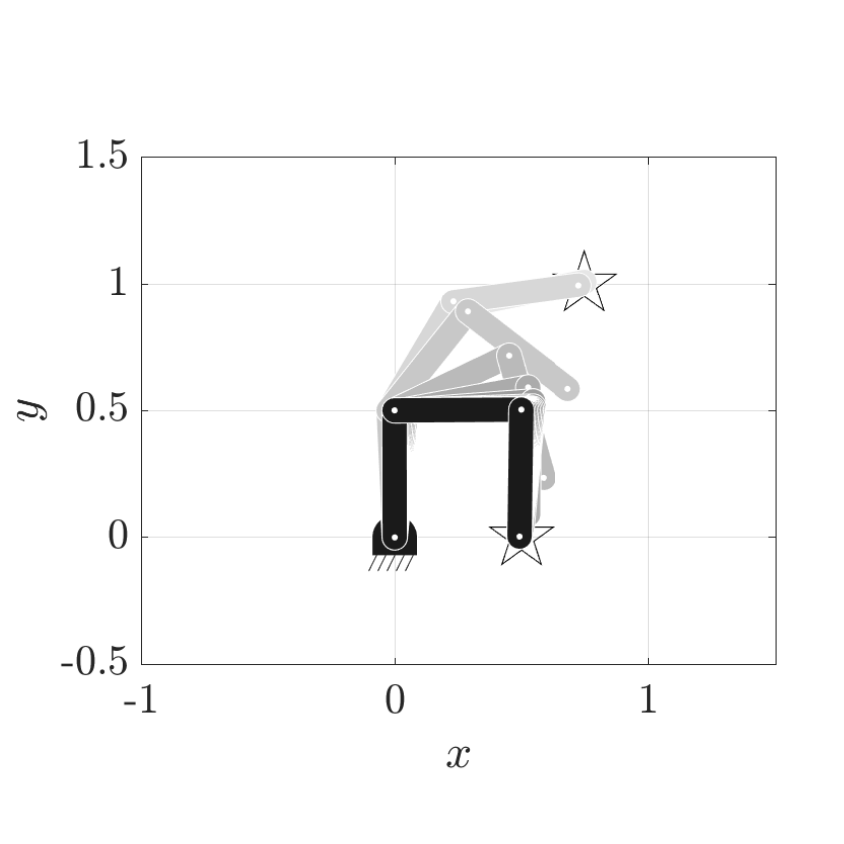}}\\
	\hfill\subfloat[\label{fig:exp3dynHQP:u}]{\includegraphics[width=0.99\linewidth]{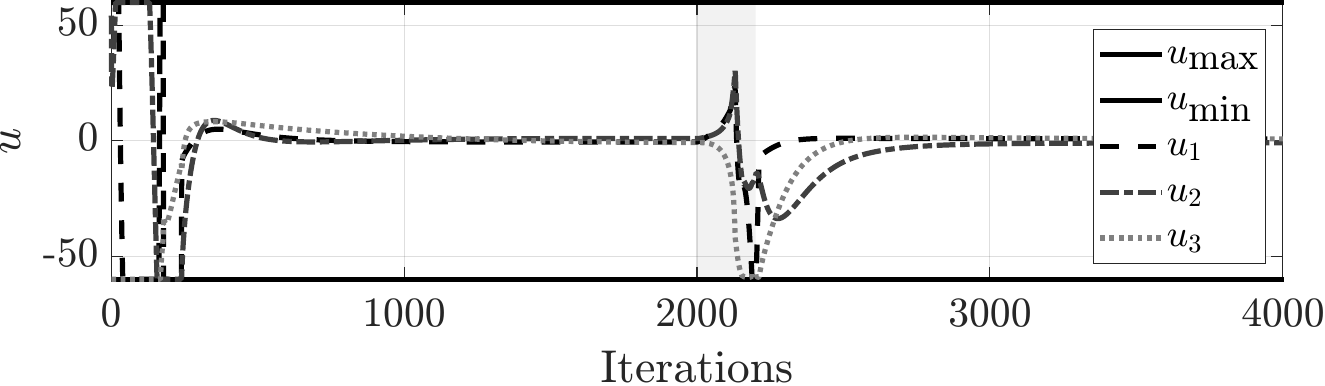}}
	\caption{Priority switching with dependent tasks and torque-controlled robot using the HQP method~\cite{kim2019continuous}. Similarly to its counterpart in Fig.~\ref{fig:exp3dyn}, Figures~\ref{fig:exp3dynHQP:robot1} and Figures~\ref{fig:exp3dynHQP:robot2} show the evolution of the robot configurations. In the first half of the experiment, task $T_1$ has highest priority, in the second half of the experiment, the priorities are swapped. Finally, Fig.~\ref{fig:exp3dynHQP:u}, the input torques required to switch between the two stacks of tasks is shown, alongside the minimum and maximum values they can attain.
	}
	\label{fig:exp3dynHQP}
\end{figure}

In this section, we compare the performance of the Hierarchical Quadratic Programming (HQP) method presented in~\cite{kim2019continuous} along the execution of the two prioritized tasks, whose priority is swapped half way through the experiment. We have chosen this among other existing approaches (including \cite{kanoun2011kinematic,escande2014hierarchical,somani_task_2016,aertbelien_etasletc_2014}), since, just like our proposed approach, it allows for switching stacks, insertion and removal of tasks, and it is amenable to torque control. Figure~\ref{fig:exp3dynHQP} shows data recorded during the simulation to be compared with Fig.~\ref{fig:exp3dyn}. The motion of the robot executing the two stacks of tasks, is depicted in Figs.~\ref{fig:exp3dynHQP:robot1} and \ref{fig:exp3dynHQP:robot2}, respectively. The robot inputs $u$ are highlighted in Fig.~\ref{fig:exp3dynHQP:u}. The HQP implementation has been tuned to achieve a robot behavior as similar as possible to the previous simulation presented in Sec.~\ref{sec:sim3:switchingdyn}. Despite demonstrating similar performance, the HQP method requires the solution of multiple QP problems during transition. In particular, for these simulation settings the solution of $5$ QP problems is required for the HQP method, each with $2N + 2$ optimization variables and $3N + 2$ constraints. Our method instead requires the solution of only $2$ QP problems each with $N + 2$ optimization variables and $2N + 1$ constraints. Given the polynomial computational complexity with respect to the number of optimization variables and constraints of interior-point methods used to solve each convex QP, the approach proposed in this paper requires a lower computational effort. Nevertheless, it is worthwhile pointing out that this computational advantage comes at the cost of relaxing the execution of higher-priority tasks too, rather than having strict priorities as in the HQP.

\subsection{Robot Experiments}
\label{subsec:experiments}

\begin{figure}[t]
    \centering
    \includegraphics[trim={1cm 0cm 4cm 2cm}, clip, width=0.8\linewidth]{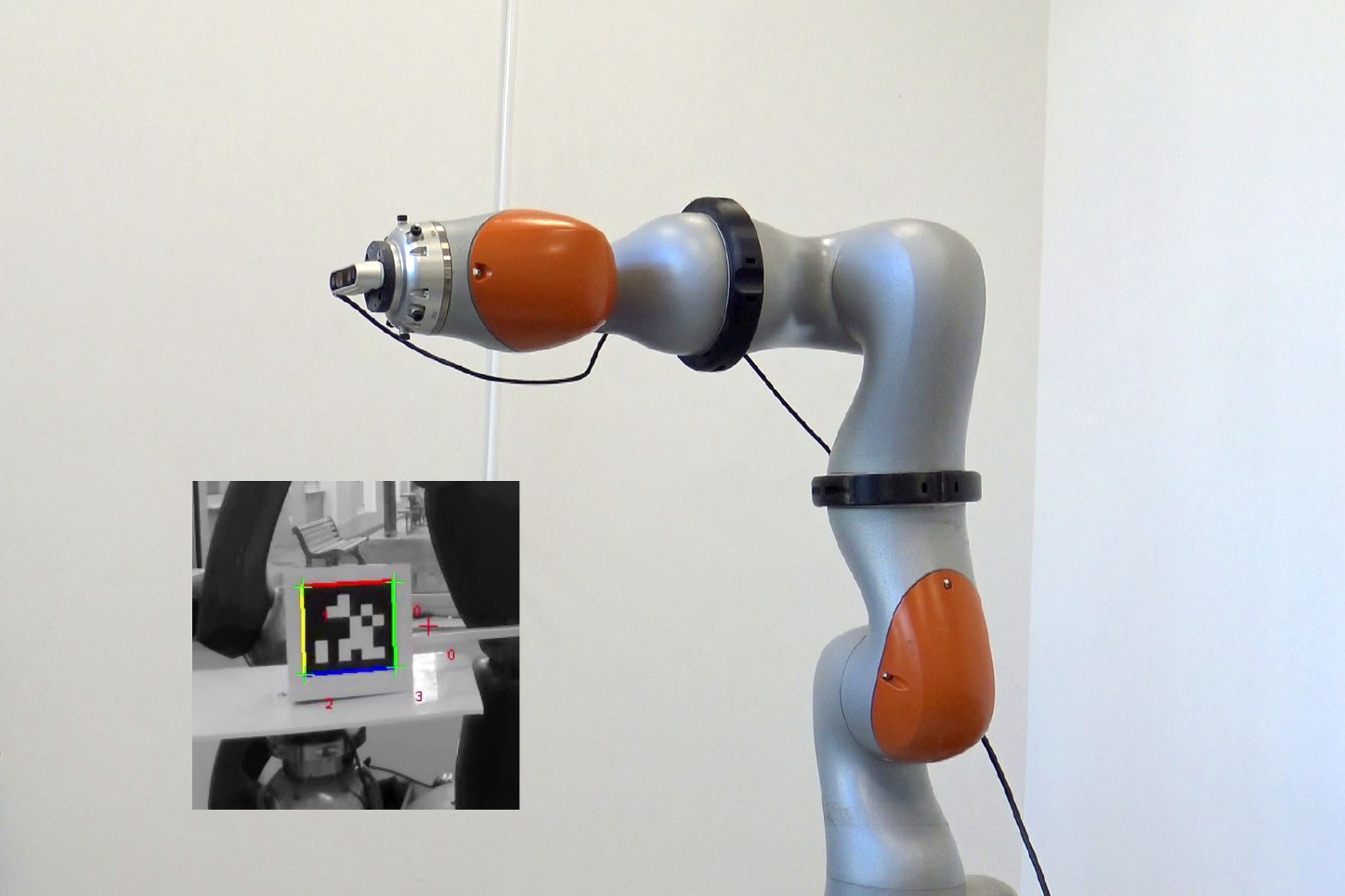}
    \caption{Experimental setup. The KUKA LBR iiwa 7 R800 robot equipped with a RealSense T435 camera at its end-effector. The superimposed picture shows the camera image overlayed with AprilTag tracking marks.}
    \label{fig:setup}
\end{figure}

To further validate the methods introduced in this paper, in this section we present the results of experiments performed on a real 7-DOF anthropomorphic manipulator. We show how the proposed framework effectively executes and prioritizes a time-varying task stack comprised by both dependent and independent ESB tasks as well as safety-critical tasks.

The experimental setup consists of the KUKA LBR iiwa 7 R800 robot with a RealSense T435 camera mounted on its tip (see Fig.~\ref{fig:setup}). The control algorithm runs on a laptop with an Intel Core i7 processor and $16$~Gb RAM. The manipulator is torque-controlled with a \textit{Fast Robot Interface} (FRI) client application  (see~\cite{SchreiberTheFR}) that communicates with the robot through an Ethernet connection link, and runs with a command sending period of $T_{cmd}=3$~ms. The torque $\tau_\text{FRI}\in\mathbb{R}^7$, used to track the desired joint positions and velocities, is computed as follows:
\begin{equation}
    \tau_{\text{FRI}}(t) = k_u\,(u(t) - \dot{q}(t)) + k_q\,(q_d(t) - q(t)),
    \label{eq:experiment:tau}
\end{equation}
where $ u\in \mathbb{R}^7$ is obtained solving the QP \eqref{eq:mainQPprioritiesauto}, $q_d$ by integrating the desired joint velocities $u(t)$, and the control gains were chosen to be $k_u=100$ and $k_q=36$. The QP is solved online in a ROS\footnote{Robot Operating System \url{https://www.ros.org/}} node employing the OSQP library~\cite{osqp}, and evaluating the CBFs using measured joint positions $q$. A middle-layer ROS node acts as an interface between ROS and the FRI client application.

To better illustrate the performance of the proposed framework on a real robot, we consider
tasks of different nature and which are dynamically inserted and removed in the prioritized stack. The following ESB tasks are employed to perform position control: (i) task $T_{p_i}$ consists in reaching a desired position $p_i$ with the end-effector in the Cartesian space, and (ii) task $T_{z_3}$ is accomplished by controlling the third link of the robot to achieve a desired height along the $z$ axis in the operational space. In addition, a vision task, $T_v$, is used to execute a look-at-point task consisting in orienting the end-effector of the robot holding the camera towards a desired point of interest---the AprilTag in Fig.~\ref{fig:setup}. To this end, the vision task was achieved using the CBF $h_v(s) = -0.5 k_v\|s - s_d\|^2$, where $k_v$ is a strictly positive scalar and $s, s_d\in\mathbb{R}^3$ are the actual and desired bearing directions, respectively. The vector $s$ is computed as $s = \frac{p_{v,c}^c}{\|p_{v,c}^c\|}$,
where $p_{v,c}^c$ is the relative position of the tag object with respect to the camera, expressed in the camera frame. $s_d$ is its corresponding desired value and denotes the direction in which the object should be placed with respect to the camera. In the experiments reported below, the tag was fixed with respect to the world frame, however, the video in the supplementary material shows also the case of a moving tag.
We selected $k_v=10$, $s_d = \left[0, 0, 1\right]\tr$.

The end-effector positioning task was accomplished with the CBF $h_p(p_e) = -0.5 k_p\|p_e - p_i\|^2$, while the task $T_{z_3}$ with $h_z(p_{z_3}) = -0.5 k_z\|p_{z_3} - z_3\|^2$ where $k_p$ and $k_z$ are strictly positive scalars, $p_e\in\mathbb{R}^3$ is the end-effector Cartesian position, and $p_{z_3}\in\mathbb{R}$ is the height of the third link in the operational space. In the experiments, the desired end-effector position is switched between $p_1 =[0.3, 0.2, 0.8]\tr$m and $p_2=[0.3, -0.2, 0.7]\tr$m; the desired height is set to $z_3=0.45$~m, all expressed in the world frame. We employed $k_p=1$ and $k_z=100$.

The stack comprises also safety-critical set-based tasks to accomplish joint limits avoidance. 
The joint limits task was achieved using the CBF $h^{\pm}_j(q_j) = -k_{q_j}(q_j^+-q_j)(q_j-q_j^-) / (q_j^+ - q_j^-)^2$, $j = 1, \dots, 7$ where $k_{q_j}$ is a strictly positive scalar, $q_j$ is the $j-$th joint value. In the experiments, we enforced ${q^\pm_1} = \pm165$, ${q^\pm_2} = \pm115$, ${q^\pm_3} = \pm165$, ${q^\pm_4} = \pm115$, ${q^\pm_5} = \pm165$, ${q^\pm_6} = \pm115$, ${q^\pm_7} = \pm165$, and $k_{q_j}=4\;\forall j=1,\ldots,7$.
As these tasks are safety critical, they are not relaxed by slack variables.

The experiment is divided into five parts by four stack transitions that happen at times $5\,$s, $35\,$s, $65\,$s, $86\,$s and consists in executing the following sequence of stacks:
\begin{equation}
    \begin{cases}
        \text{empty stack} & \quad  0\,\text{s} \le \text{Time} < 5\,\text{s}\\
        T_{p_1} \prec T_v & \quad  5\,\text{s} \le \text{Time} < 35\,\text{s}\\
        T_{p_1} \prec T_{z_3} \prec T_v  & \quad 35\,\text{s} \le \text{Time} < 65\,\text{s}\\
        T_{p_1} \prec T_{p_2} \prec T_v & \quad 65\,\text{s} \le \text{Time} < 86\,\text{s} \\
        T_{p_2} \prec T_{p_1} \prec T_v & \quad 86\,\text{s} \le \text{Time}.
    \end{cases}
    \label{eq:experiment:stacks}
\end{equation}  
Transitions are handled employing the method described in Proposition~\ref{prop:planning_switching}, choosing as length of the switching time interval $T=1.5$s. The switching phases---either due to a change of the stack by inserting/removing tasks or to a change of their relative priorities---are highlighted in the plots referred in the following by light gray shaded areas. 

\begin{figure}[t]
    \centering
    \hfill\subfloat[\label{fig:experiment:hjl}]{\includegraphics[trim = {0 26px 0 0}, clip, width=0.985\linewidth]{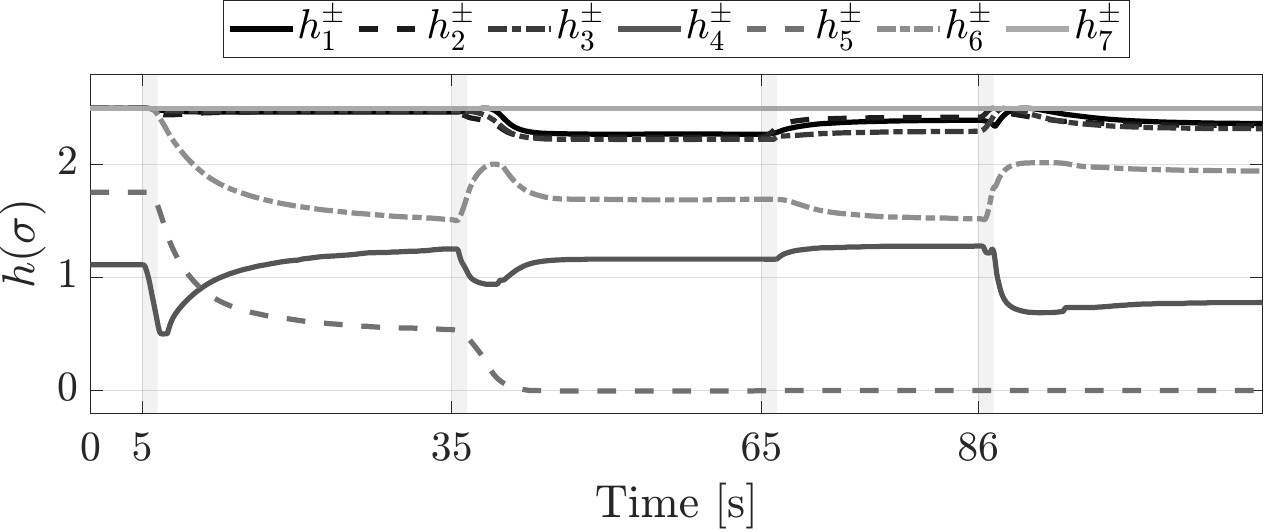}}\\
    \subfloat[\label{fig:experiment:ht}]{\includegraphics[width=\linewidth]{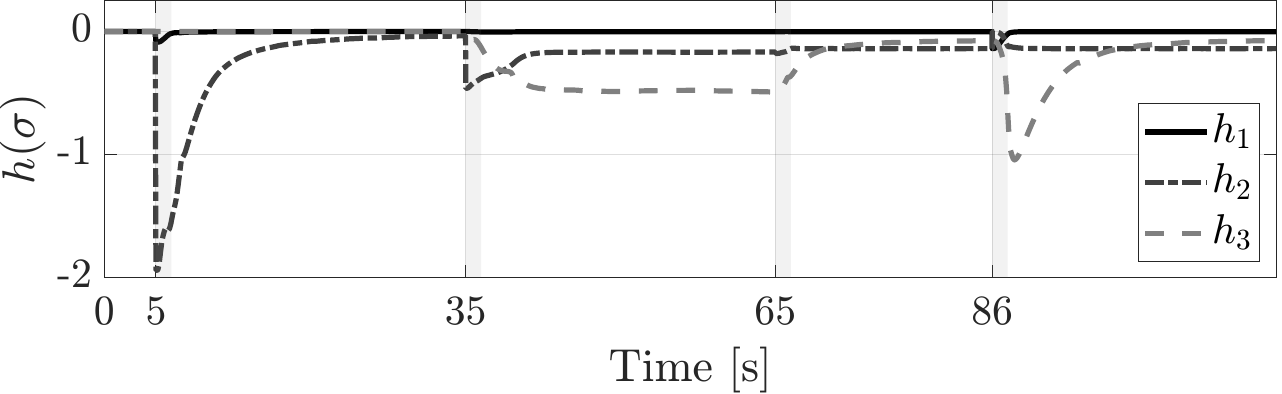}}\\
    \caption{Values of the CBFs encoding the ESB tasks performed in the experiment with the KUKA LBR iiwa 7 R800. The time evolution of the joint limits avoidance functions $h^{\pm}_{i}$ for $i=1,...,7$ for both the upper and lower $i-$th joint limit are plotted in Fig.~\ref{fig:experiment:hjl}. Figure \ref{fig:experiment:ht} shows the evolution of ESB task functions $h_i$ for $i=1,2,3$, where a lower index value corresponds to a higher priority of the task. The tasks are: end-effector control to the $i$-th position $T_{p_i}$, position control of the third link $z$ coordinate $T_{z_3}$, and look-at-point vision task $T_v$. The time-varying stack of ESB tasks is: empty in the interval $[0,5)$, $T_{p_1} \prec T_v$ in $[5,\,35]$s, $T_{p_1} \prec T_{z_3} \prec T_v$ in $[35,\,65]$s, $T_{p_1} \prec T_{p_2} \prec T_v$ in $[65,\,86]$s, and $T_{p_2} \prec T_{p_1}\prec T_v$ after time $86$s. Light gray shaded areas denote transition intervals.}
    \label{fig:experiment:h}
\end{figure}

\begin{figure}[t]
    \centering
    \subfloat[\label{fig:experiment:u}]{\includegraphics[trim = {0 26px 0 0}, clip, width=\linewidth]{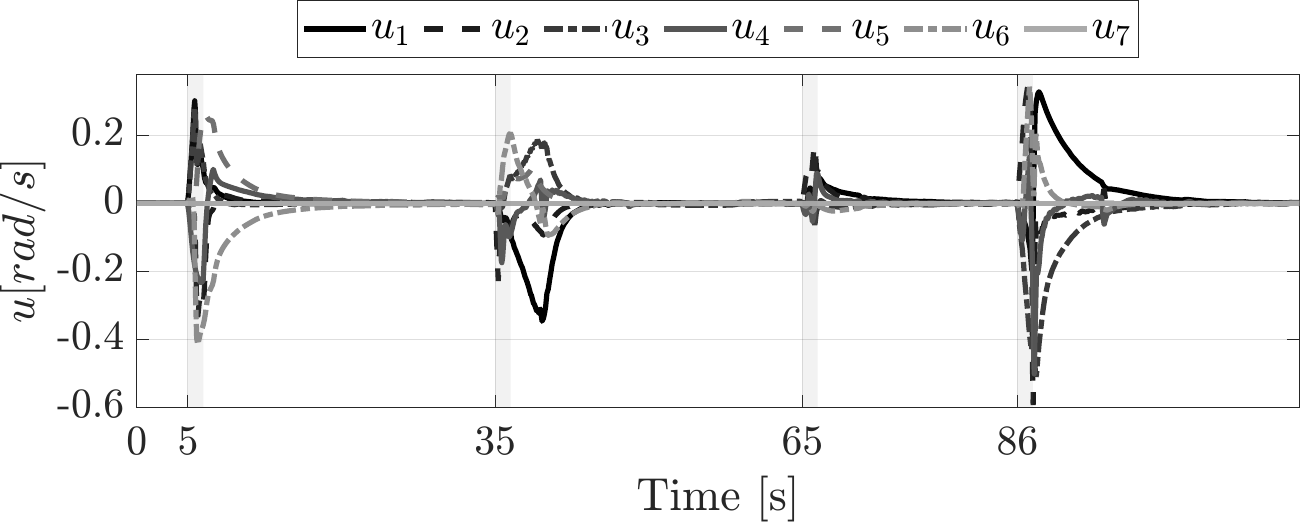}}\\
    \hfill \subfloat[\label{fig:experiment:q}]{\includegraphics[width=0.965\linewidth]{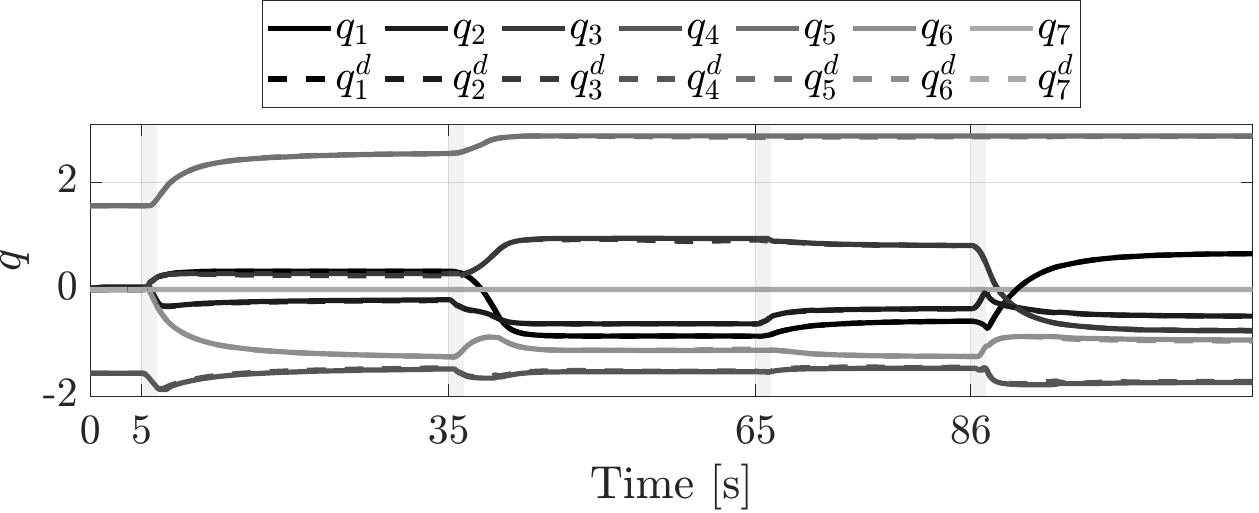}} 
    \caption{Joint velocity control input $u$, desired and measured joint positions recorded over the course of the experiment with the KUKA LBR iiwa 7 R800. Figure \ref{fig:experiment:u} shows the time evolution of the joint velocity control input highlighting switching time instants and transitions phases (shaded in light gray). Figure \ref{fig:experiment:q} plots desired and measured joint positions.}
    \label{fig:experiment:uq}
\end{figure}
\begin{figure}[t]
    \centering
    \hfill\subfloat[\label{fig:experiment:d}]{\includegraphics[trim = {0 26px 0 0}, clip, width=0.97\linewidth]{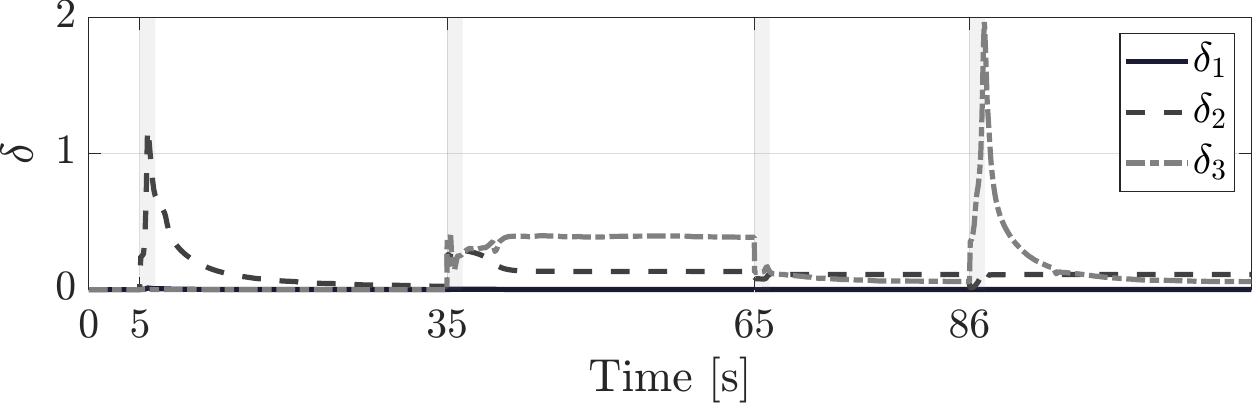}}\\
    \subfloat[\label{fig:experiment:v}]{\includegraphics[width=\linewidth]{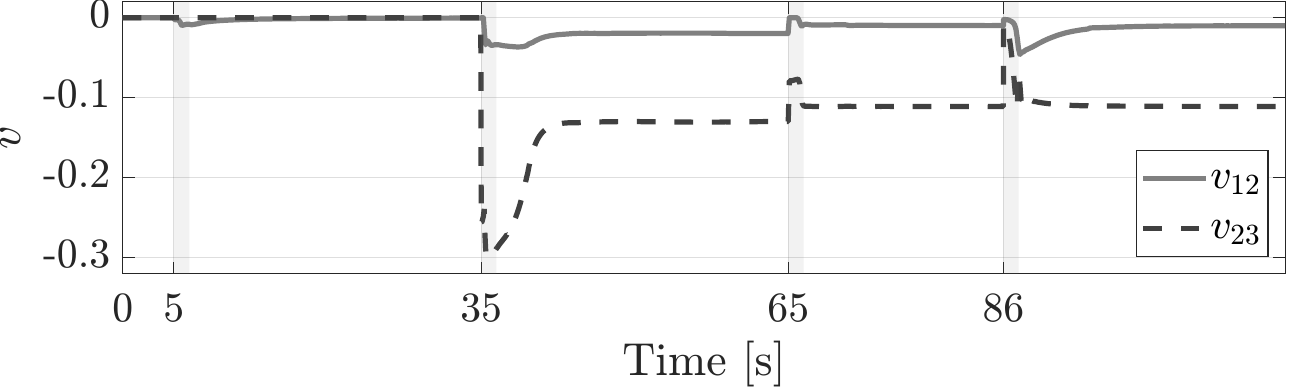}}
    \caption{Time evolution of the QP slack variables during the experiment with the KUKA LBR iiwa 7 R800. Figure \ref{fig:experiment:d} shows the evolution of $\delta_i$ for $i=1,2,3$ that respectively relax the constraint of the task with $i-$th priority. Fig.~\ref{fig:experiment:v} shows he components of the slack variable $v_{ij}$ for $i,j=1,2,3$ of the prioritization constraint. Light gray shaded areas denote transition intervals.}
    \label{fig:experiment:slack}
\end{figure}

Figures~\ref{fig:experiment:h}, \ref{fig:experiment:uq}, and \ref{fig:experiment:slack} show data recorded during the experiment. In Fig.~\ref{fig:experiment:hjl}, the CBFs $h^{\pm}_{i}$ for $i=1,\dots,7$, of the joint limits avoidance task are depicted. As can be seen, they remain non-negative and ensure that each joint limit is not exceeded even if its limit value is reached, as happens to the fifth joint around $t=40$s ($h^{\pm}_5$ becomes zero). Figure~\ref{fig:experiment:ht} shows the task functions $h(\sigma)$. Notice that the functions $h(\sigma)$, encoding the ESB tasks, should be driven to zero. However, as can be seen in Fig.~\ref{fig:experiment:ht}, this does not happen whenever a physical limit is reached or when the prioritized stack is infeasible. In particular, in the interval $[5,35]\,$s, the stack is composed by two independent tasks, $T_{p_1}$ and $T_v$, and the functions $h_1$ and $h_2$ are driven to zero achieving faster convergence on the highest priority task (see Fig.~\ref{fig:experiment:h}). At time $t=35\,$s, the independent task $T_{z_3}$ is added but the task functions do not reach zero due to the hit of the fifth joint limit. The precedence relation $T_{p_1} \prec T_{z_3} \prec T_v$ is respected at steady state ($|h_1(\sigma)|<|h_2(\sigma)|<|h_3(\sigma)|$) even if the prioritization constraints are relaxed.

At time $t=65$s, task $T_{z_3}$ is replaced by $T_{p_2}$. Tasks $T_{p_1}$ and $T_{p_2}$ are dependent and only the one with the highest priority will be executed. The prioritization constraint corresponds to a greater relaxation of the lowest priority task and allows for a better execution of the highest highest priority task, whose corresponding CBF always achieves a value close to zero. Moreover, the relaxation of the constraint $T_{p_2}\prec T_v$ enables the decrease of the vision task function. Finally, at time $t=86$s, $T_{p_1}$ and $T_{p_2}$ are swapped, with analogous considerations. Computed joint velocities $u$ are plotted in Fig.~\ref{fig:experiment:u} highlighting the continuity of the control input during stack switch and insertion/removal operations. Figure~\ref{fig:experiment:q} shows the desired and measured joint positions and serves to illustrate how a good accuracy is achieved with the robot kinematic model using the control inputs in Fig.~\ref{fig:experiment:u} and employing the control law~\eqref{eq:experiment:tau}. Finally, in Figures~\ref{fig:experiment:d} and \ref{fig:experiment:v} the time evolution of the slack variables of the task execution constraints and the prioritization stack constraint are plotted, respectively. Together with the time evolution of the CBF values encoding the executed tasks reported in Fig.~\ref{fig:experiment:h}, these figures show how the constraint-based task execution and prioritization framework presented in this paper effectively achieves the desired robot behavior.

Further experiments were performed switching the described task stack manually in arbitrary sequences, with different end-effector desired positions and moving the tag to look at. Plots are here omitted for the sake of brevity but are shown in the video submitted as supplementary material.

\section{Conclusions}
\label{sec:conclusions}

In this paper we analyzed the use of \emph{extended set-based (ESB) tasks} to encode the execution of robotic tasks via the forward invariance and asymptotic stability of appropriately defined sets of the robot state space. The concepts of orthogonality, independence, and dependence defined for Jacobian-based robotic tasks are shown to naturally extend to ESB tasks, carrying a similar meaning in terms of the possibility of concurrently executing multiple tasks. The control framework developed in the paper allows for an effective way of executing multiple ESB tasks and for a flexible way of prioritizing them in time-varying stacks. Theoretical results demonstrated stability properties of the prioritized task execution framework, giving guarantees on the expected robot behavior. Moreover, extensive simulations showcased several features and use cases suitable for the proposed approach. Furthermore, experiments performed on a real manipulator demonstrated how the developed control strategy is suitable to be employed in the tight loop under real time constraints. Future extensions may concentrate on deriving conditions of (uniform) asymptotic stability for time-varying tasks, such as those involving trajectory tracking. Moreover, we will consider the execution of torque/impedance tasks and enforce dynamic consistency of hierarchies, as discussed in~\cite{DietrichTRO2020,Dehio2019}.

\bibliographystyle{ieeetr}
\bibliography{bib/IEEEabrv,bib/nsp_cbf}

\end{document}